\theoremstyle{plain}
\newtheorem{theorem}{Theorem}[section]
\newtheorem{corollary}{Corollary}
\theoremstyle{definition}
\newtheorem{assumption}{Assumption}
\theoremstyle{remark}
\newcommand\independent{\protect\mathpalette{\protect\independenT}{\perp}}
\def\independenT#1#2{\mathrel{\rlap{$#1#2$}\mkern2mu{#1#2}}}
\DeclareMathOperator*{\argmax}{argmax}
\def\bX{\mathbf{X}}
 \newcommand{\main}{\frac{A(Z+A)Yw_{\alpha_Z}(A,Z, X, Y)I\{\pi(X) = Z\}}{2\gamma(A,Z,X)f(Z|X)f(A| Z, X)}}
 \newcommand{\deri}{\frac{\partial }{\partial t}}
  \newcommand{\hz}{A(Z+A)}
\newcommand{\q}{Q(A,Z,X )}
 \newcommand{\qp}{Q(1,1,X)}
 \newcommand{\qn}{Q(-1,-1,X)}
 \newcommand{\qhat}{\hat{Q}(A,Z,X)}
  \newcommand{\qhatp}{\hat{Q}(1,1,X)}
  \newcommand{\qhatn}{\hat{Q}(-1,-1,X)}
 \newcommand{\g}{\gamma(A,Z,X)}
 \newcommand{\gp}{\gamma(1,1,X)}
 \newcommand{\gn}{\gamma(-1,-1,X)}
 \newcommand{\ghat}{\hat{\gamma}(A,Z,X)}
  \newcommand{\ghatp}{\hat{\gamma}(1,1,X)}
  \newcommand{\ghatn}{\hat{\gamma}(-1,-1,X)}
 \newcommand{\del}{\delta(A,Z,X)}
 \newcommand{\delhat}{\hat{\delta}(A,Z,X)}
 \newcommand{\I}{I\{\pi(X) = Z\}}
 \newcommand{\w}{w_{\alpha_Z}(A,Z, X, Y)}
 \newcommand{\f}{f(Z|X)}
 \newcommand{\faz}{f(A,Z|X)}
 \newcommand{\p}{f(A|Z,X)}
 \newcommand{\fhat}{\hat{f}(A,Z|X)}
 \newcommand{\hzz}{ZA(Z+A)} 
 \newcommand{\hf}{\hat{f}(Z|X)}
 \newcommand{\hp}{\hat{f}(A|Z,X)}
  \newcommand{\hg}{\hat{\gamma}(A,Z,X)}
  \newcommand{\hkappa}{\hat{\kappa}(Z,X)}
 \newcommand{\hkappap}{\hat{\kappa}'(X)}
 \newcommand{\hdelta}{\delta(A,Z,X)}
 \newcommand{\Q}{Q(A,Z,X)}
\begin{document}
\begin{frontmatter}
\title{Sensitivity analysis for constructing optimal  regimes in the presence of treatment non-compliance}
\runtitle{Optimal regimes in the presence of non-compliance}

\begin{aug}
\author[A]{\fnms{Cuong T.}~\snm{Pham}\ead[label=e1]{Cuong\_Pham@URMC.Rochester.edu}},
\author[B]{\fnms{Kevin G.}~\snm{Lynch} },
\author[C]{\fnms{James R.}~\snm{McKay} },
\author[A]{\fnms{Ashkan}~\snm{Ertefaie} }

\address[A]{Department of Biostatistics and Computational Biology,
  University of Rochester Medical Center\printead[presep={, }]{e1}}
\address[B]{Center for Clinical Epidemiology and Biostatistics and Department of Psychiatry, University of Pennsylvania}
\address[C]{Department of Psychiatry, Perelman School of Medicine, University of Pennsylvania}

\runauthor{Pham et al.}
\end{aug}

\begin{abstract}
 The current body of research on developing optimal treatment strategies often places emphasis on intention-to-treat analyses, which fail to take into account the compliance behavior of individuals.  Methods based on instrumental variables have been developed to determine optimal treatment strategies in the presence of endogeneity. However, these existing methods are not applicable when there are two active treatment options and the average causal effects of the treatments cannot be identified using a binary instrument. In order to address this limitation, we present a procedure that can identify an optimal treatment strategy and the corresponding value function as a function of a vector of sensitivity parameters. Additionally, we derive the canonical gradient of the target parameter and propose a multiply robust classification-based estimator for the optimal treatment strategy. Through simulations, we demonstrate the practical need for and usefulness of our proposed method. We apply our method to a randomized trial on Adaptive Treatment for Alcohol and Cocaine Dependence. 
\end{abstract}

\begin{keyword}
\kwd{Canonical gradients}
\kwd{endogeneity} 
\kwd{instrumental variables}
\kwd{value function}
\kwd{weighted classification}
\end{keyword}

\end{frontmatter}

\section{Introduction}

\label{sec:intro}

An important goal in recent clinical research is to use the available data to understand ``what works best for whom?'' It can be achieved by constructing individualized treatment strategies that map a person's characteristics to a treatment option. An optimal individualized treatment strategy is one that optimizes a specified health outcome. In certain research areas, such as mental health and substance use disorder, the high rate of non-compliance to treatments imposes a substantial challenge in constructing optimal strategies that are generalizable to the general population. 

An optimal strategy can be estimated using policy (strategy) learning or conditional outcome model-based (e.g., Q-learning and A-learning) methods \citep{schulte2014q, chakraborty2013statistical,     ertefaie2021robust}. The main drawback of the outcome model-based approaches is that the quality of the constructed optimal strategy relies on the specified regression models \citep{zhao2009reinforcement, zhao2011reinforcement}. However, policy learning approaches bypass the need for conditional outcome models
by defining the problem of estimating the optimal strategy as a weighted classification problem, thereby directly optimizing the expected outcome among a class of rules \citep{zhao2012estimating, zhao2015new,   zhang2013robust}. The preceding methods rely on the no unmeasured confounder assumption, which is not verifiable using the observed data.
 The latter assumption imposes an even stronger limitation in randomized trials with non-compliance. It is because individuals may not comply with prescribed treatment due to many reasons that are not recorded in the data. Hence, when randomized trial data are available, intention-to-treat analyses are often performed to avoid violating the no-unmeasured confounder assumption. However, such analyses have two significant limitations. First, they estimate the effect of randomization on a treatment option and not the actual treatment effect that is of main interest. In fact, the treatment effect estimates are often biased toward the null effect \citep{marasinghe2007noncompliance, lin2008longitudinal}. Second, the concluding results may not be reproducible due to the potential differential compliance behavior in real-world settings \citep{ robins1991correcting, hewitt2006there,lin2008longitudinal}. 
 
Instrumental variable (IV) methods are well-studied for providing unbiased treatment effect estimates
in the presence of unmeasured confounders. An instrument is a random or haphazard encouragement to adopt some change in behavior,
where encouragement can affect outcomes only indirectly through its manipulation of the treatment, and also is independent of unmeasured confounders. The simplest and clearest case is a randomized encouragement \citep{angrist1996identification}.  
Recently, IV-based approaches have been proposed to construct an optimal treatment regimes \cite{cui2021semiparametric,qiu2021optimal}.
. However, these approaches are not applicable to settings with two (or more) active treatment options, which is the case in many comparative effectiveness studies and randomized trials \citep{swanson2015selecting, ertefaie2016sensitivity, ertefaie2016selection}. This is because when there are multiple active treatments, the compliance values will have at least one level more than the (binary) instrument, leading to an identifiability issue \citep{cheng2006bounds}.

 In this paper, we provide a method for estimating an optimal strategy and the corresponding value function among compliers-- individuals who take the assigned treatment--  in the presence of non-compliance and two active treatments. We propose sensitivity analyses that generalize the weighted classification methods by allowing compliance to be endogenous and have more levels than the instrument. To this end, we (1) show that the optimal strategy is identifiable as a function of sensitivity parameters; (2) propose inverse probability weighted and multiply robust sensitivity analysis approaches that identify the optimal regime as a function of sensitivity parameters; and (3) derive a multiply robust estimator for the mean outcome under a treatment strategy. Simulation studies are used to examine the performance of our methods and demonstrate the importance of proper adjustment for non-compliance when estimating optimal strategies. Data from the randomized trial `Adaptive treatment for alcohol and cocaine dependence' are used to illustrate our proposed methods.

  Characterizing an optimal treatment strategy and the value function among compliers is important for the following reasons. First, it helps decision makers to understand the actual causal effect of the strategy, which can motivate them to improve the implementation of the treatment \citep{sommer1991estimating}. Second, it is critical from the patients' perspective as it allows them to weigh the benefits and burdens of the treatment option and motivate them to comply with the treatment \citep{hernan2012beyond, sheiner1995intention}. Moreover, although it is impossible to identify which subjects in the data set are ``compliers,'' one can characterize the compliers in terms of their distribution of observed covariates \citep{brookhart2007preference, angrist2009mostly, baiocchi2014instrumental}. For example, we might see that individuals with a low level of baseline treatment readiness-- a key covariate in substance use disorders-- are underrepresented among compliers. 

\section{Problem Setting} \label{sec:setting}
\subsection{Notation}

Our data consists of $n$ independent, identically distributed trajectories of $\mathcal{O} = (X,Z,A,Y) \sim P_0$.
 The vector $X \in \mathcal{X}$ includes all available baseline covariates
measured before the instrumental variable $ Z \in \mathcal{Z}= \{-1, 1\}$.  Let $A$ be the observed compliance value where $A \in \{-1,0,1\}$.  
Let $A(z)$ be the potential compliance value given $Z = z$, where $A(z) \in \{-1, 0, 1\}$, e.g.,  $A(1)$ is the level of compliance if the instrument was $1$. In this formulation, $A(z) = 0$ indicates
that no treatment was taken under $Z = z$. The potential outcomes are $Y (z, a)$ for $Z = z$ and $A = a$. 
Let $\mathcal{D}$ be a class of decision rules. We define a treatment regime as a function $\pi \in \mathcal{D}$ where $\pi$ : $\mathcal{X} \mapsto \mathcal{Z}$. The potential outcome under a regime $\pi$ is defined as
$Y(\pi) = \sum_z\sum_a Y (z, a) I\{\pi(X) = z\}I\{ A = a\}.$ 
For any $\pi$, we define a value function $\mathcal{V}(\pi) = E\{Y(\pi)\}$. An optimal regime $\pi^{opt}$ satisfies $\mathcal{V}(\pi^{opt}) \ge \mathcal{V}(\pi)$ for all $\pi \in \mathcal{D}$. Our goal is to construct an optimal regime among subjects who would comply with their assigned treatment (i.e., compliers) defined as 
$\pi_c^{opt} = \argmax_{\pi \in \mathcal{D}} E\{Y(\pi) \mid A(1)=1,A(-1)=-1\}.$ We also define an $L_2$-norm $\|f\|_{2,\mu}^2 = \int f^2(x) d\mu(x)$  where $\mu$ is an appropriate measure (i.e., the Lebesgue or the counting measure).

\bigskip
\begin{table}[t]
    \centering
    \caption{The list of principal strata}
    \label{tab:PSs}
\begin{tabular}{clcc}
\hline
     & \multicolumn{1}{p{2cm}}{ Principal strata (PS)} 
     & \multicolumn{1}{p{3cm}}{\centering Compliance value $A(-1)$}
     & \multicolumn{1}{p{3cm}}{\centering Compliance value $A(+1)$} \\
     \hline
     & S1 (always -1 taker) & -1 & -1 \\
     & S2 (always +1 taker) & +1 & +1 \\
     & S3 (never taker) & 0  & 0 \\
     & S4 (complier) & -1 & +1 \\
     & S5 & -1 & 0 \\
     & S6 & 0 & +1 \\
     & S7 (defier) & +1 & -1 \\
     & S8 & +1 & 0 \\
     & S9 & 0 & -1 \\
     \hline
\end{tabular}
\end{table}

\subsection{Identification assumptions}

Our identification results rely on the following assumptions.

\begin{assumption} (Fundamental assumptions in IV analyses) \label{assump:coreiv}
 \begin{enumerate}[labelindent=1.3em, labelsep=0.2cm,leftmargin=*, label=(\ref{assump:coreiv}\Alph*)]
\item  If $Z = z$ then $A = A(z)$ and if $Z = z \text{ and } A = a$ then $Y = Y(z,a)$. In other words, the treatment only affects the subject taking the treatment, and there is only one version of IV and treatment. 
\item  $p(A = 1 \mid Z = 1, X = x ) > p(A = 1 \mid Z = -1, X = x).$

\item The IV has no direct effect on the outcome: $Y(Z = -1,A = a) = Y(Z = 1, A = a)$.
\item The instrument $Z$ is independent of the potential outcomes of $Y$ and $A$ given $X$: $Z \independent{ \{A(-1), A(1), Y(-1,-1), Y(-1,0), Y(-1,1), Y(1,-1)}$, \\ $Y(1,0), Y(1,1)\} \mid X.$
\end{enumerate}
\end{assumption}

\begin{assumption} \label{assump:positivity}
$Z | X$ has a
  positive density with respect to a
  dominating measure on $\mathcal{Z}$. We assume
  $\mathrm{Var}(Z|X)$ exists and is between $1/C$ and $C$ for
  some constant $C > 1$ and all $X \in \mathcal{X}$.
\end{assumption}

Assumptions \ref{assump:coreiv}A-\ref{assump:coreiv}D are  standard IV assumptions \citep{angrist1996identification, baiocchi2014instrumental}.
 Assumption \ref{assump:positivity} is the positivity assumption that ensures the possibility of
statistical inference for the treatment effect.

\begin{assumption} (Monotonicity) \label{assump:mono}
The potential outcomes of $A$ satisfy the following properties: (4A) $p\{A(1) = -1, A(-1) = 1\} = 0$ (i.e. there are no defiers); (4B) $p\{A(-1) = 1, A(1) = 0 \} = 0$; (4C) $p\{A(-1) = 0, A(1) = -1\} = 0$. 
\end{assumption}

In our setting, we can classify subjects into the  nine principal strata listed in Table \ref{tab:PSs} \citep{frangakis2002principal}. 
The monotonicity assumptions eliminate the principal strata S7, S8, and S9. They reduce the number of free parameters, thereby enabling the estimation of an optimal treatment strategy among compliers. Importantly, these assumptions are not restrictive in our setting because it is irrational for  subjects to refuse the given treatment but seek it when they are given a different treatment.

\subsection{Preliminaries}

One key assumption in constructing optimal strategies is that of no unmeasured confounders. \citeauthor{cui2021semiparametric} relaxed this assumption by generalizing the weighted classification approach of \citeauthor{zhao2012estimating}, which breaks the endogeneity of the compliance variable using an instrument. In the presence of a binary treatment and a binary compliance value, \citeauthor{cui2021semiparametric} showed that under their assumptions, the  compliers' value function $\mathcal{V}_c(\pi)$ can be identified by
\[
E\left[ \frac{ZAYI\{A = \pi(X)\}}{f(Z| X)\{p(A=1\mid Z=1)-p(A=1\mid Z=-1)\}} \right], 
\]
where $I(.)$ is an indicator function. The  function $\mathcal{V}_c(\pi)$ is defined as \\ $E[ h\{\pi,X,A(1),A(-1)\} \mid A(1)>A(-1)  ]$ where \\
$h\{\pi,X,A(1),A(-1)\} = I\{\pi(X)=1\} E\{Y_1 \mid A(1)>A(-1), X\}+
    I\{\pi(X)=-1\} E\{Y(-1) \mid A(1)>A(-1),X\}$. 
    
The key building block of the latter result is that under a monotonicity assumption with binary instruments and binary compliances, the group of subjects with $A(1)>A(-1)$ consists of the compliers. Importantly, $p\{A(1)>A(-1)\}$ can be identified using the observed data by $p(A=1\mid Z=1)-p(A=1\mid Z=-1)$. However, with a binary instrument and a three-level compliance variable, the event $A(1)>A(-1)$ does not correspond to the complier group (e.g., the inequality is also satisfied among individuals in S6). Hence, the probability of being a complier (i.e., $\eta^{S4}=p\{A(1)=1,A(-1)=-1\}$) is no longer identifiable using the observed data. To see where the issue arises, we let $\eta^{S}$ denote the proportion of strata $S$ and consider the following system of linear equations: $p(A = 1|Z = 1) = \eta^{S2} + \eta^{S4} + \eta^{S6}; 
    p(A = 1|Z = -1) = \eta^{S2};  
    p(A = -1|Z = 1) = \eta^{S1}; 
    p(A=-1|Z = -1) = \eta^{S1} + \eta^{S4} + \eta^{S5}; 
    p(A = 0| Z = -1) = \eta^{S3} + \eta^{S6};$ and  
 $p(A = 0|Z =1) = \eta^{S3} + \eta^{S5}.$ 
The system of equations does not have a unique solution for $\eta^{S4}$ (i.e., $p\{A(1)=1,A(-1)=-1)\}$.

\section{Methodology}
 We will show that the compliers' value function $E\{Y(\pi) \mid PS =S4\}$ can be identified using a vector of sensitivity parameters. Let $f(y\mid PS = S4,X,A(Z)=Z)$ denote the conditional density of the outcome given $S4$, baseline covariates, and $A(Z)=Z$. Then by the Bayes rule,  

{\footnotesize
 \begin{align} \label{eq:fycom}
f(y\mid PS = S4,X,A(Z)=Z) = \frac{p(PS = S4 \mid Y=y,X,A(Z)=Z)f(y\mid X,A(Z)=Z)}{p(PS = S4 \mid X,A(Z)=Z)}.
\end{align}
}

 Let $w_{\alpha_Z}(A,Z,X,Y) = p(PS = S4| Y,X,A,Z)$. Assuming logit models for the probabilities $p(PS = S4 \mid Y=y,X,A(1)=1)$ and $p(PS = S4 \mid Y=y,X,A(-1)=-1)$, we have $$w_{\alpha_Z}(A,Z,X,Y) =I(A=Z)\frac{\exp\{\mathcal{G}(X,Y,\alpha_{Z})\}}{1 + \exp\{\mathcal{G}(X,Y,\alpha_{Z})\}}, $$ 
where  $\mathcal{G}$ is a known user-specified function of $(X,Y)$ parametrized using the vector $\alpha_Z$ (i.e., the sensitivity parameters). Throughout the paper, we denote the denominator of (\ref{eq:fycom}) as 
$\gamma(A,Z,X)= \int w_{\alpha_Z}(A,Z,X,y) f(y\mid X,A,Z) dy. $

 Generally, a sensitivity analysis does not rely on
a correctly specified sensitivity model as the model is unknown and, typically, unidentifiable using the
observed data. The goal is to construct a model that captures the sensitivity of the results to the departure of certain assumptions. In our case, the goal is to assess the departure from the independence assumption $S4 \perp Y \mid X, A(Z)=Z$. Under the latter assumption, $f(y\mid PS = S4,X,A(Z)=Z) = f(y\mid X,A(Z)=Z)$, which is identifiable using the observed data. A reasonable choice for the sensitivity function $\mathcal{G}$ is a linear function $\alpha_{Z}^0+X\alpha_{Z}^X + Y\alpha_{Z}^Y$ where $\alpha_Z=(\alpha_{Z}^0,\alpha_{Z}^X,\alpha_{Z}^Y)^{\top}$. The main drawback of the latter function is that even for low dimensional  $X$, the resulting sensitivity analyses might not be interpretable due to the presence of several sensitivity parameters.  A common choice that mitigates the dimensionality issue while providing meaningful sensitivity analyses is $\mathcal{G}(X,Y,\alpha_{Z})=\alpha_{Z}^0 + Y \alpha_{Z}^Y$  \citep{franks2019flexible, robins2000sensitivity, scharfstein2021semiparametric}.  
Among individuals with $Z=1$, $\alpha_{+1}^Y>0$ ($\alpha_{+1}^Y<0$) implies that 
the odds of being a complier among patients who have complied with $Z=1$ (i.e., $p(PS = S4| A=1, Z=1, Y=y)$)
 is higher (lower) for a larger outcome $y$. 
The parameter $\alpha_{-1}^Y$ has a similar interpretation but among those with $Z=-1$. In Section \ref{sec:addsim}, we discuss other choices of $\mathcal{G}$ function.

\begin{theorem} \label{main-theo}
Under Assumptions \ref{assump:coreiv} - \ref{assump:mono}, for any $\pi \in \mathcal{D}$, the compliers' value function $\mathcal{V}^c(\pi) = E\{Y(\pi)|PS = S4\}$ is nonparametrically identified for a given set of sensitivity parameters $(\alpha_{-1},\alpha_{+1})$, 
\begin{align} \label{eq:nonrobust}
    \mathcal{V}^c(\pi) = E\left[ \frac{I\{\pi(X) = Z\}A(A+Z)Yw_{\alpha_Z}(A,Z, X, Y)}{2\gamma(A,Z,X)f(A,Z|X)} \right].
\end{align}
Accordingly, the optimal regime among compliers is identified as 
\begin{align} \label{eq:optpolicy}
    \pi_c^{opt} = \underset{\pi\in \mathcal{D}}{\arg \! \max}~ E\left[ \frac{I\{\pi(X) = Z\}A(A+Z)Yw_{\alpha_Z}(A,Z, X, Y)}{2\gamma(A,Z,X)f(A,Z|X)} \right].
\end{align}
\end{theorem}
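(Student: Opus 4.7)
The plan is to derive identity~(\ref{eq:nonrobust}) by combining iterated expectations with the Bayes identity~(\ref{eq:fycom}), and then deduce~(\ref{eq:optpolicy}) by pointwise maximization. My first move will be an algebraic simplification: for $Z\in\{-1,1\}$ a direct check of the six cases shows that $\tfrac{A(A+Z)}{2}=I\{A=Z\}$. Hence the integrand in~(\ref{eq:nonrobust}) is supported on $\{A=Z\}$, the potentially ill-defined terms with $\gamma(A,Z,X)=0$ never contribute, and~(\ref{eq:nonrobust}) collapses to
\begin{equation*}
E\!\left[\frac{I\{\pi(X)=Z\}\, I\{A=Z\}\, Y\, w_{\alpha_Z}(A,Z,X,Y)}{\gamma(A,Z,X)\, f(A,Z\mid X)}\right].
\end{equation*}

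Next I would apply iterated expectations, conditioning on $(X,A,Z)$. The inner step is where the Bayes identity does the work: for $A=Z$, equation~(\ref{eq:fycom}) can be rewritten as
\begin{equation*}
\frac{w_{\alpha_Z}(A,Z,X,y)\, f(y\mid X,A,Z)}{\gamma(A,Z,X)} \;=\; f\!\bigl(y\mid PS=S4,\, X,\, A(Z)=Z\bigr),
\end{equation*}
so that $E\!\left[\tfrac{Y\, w_{\alpha_Z}(A,Z,X,Y)}{\gamma(A,Z,X)}\,\bigm|\, X,A,Z\right]=E[Y\mid PS=S4,\, X,\, A(Z)=Z]$ on $\{A=Z\}$. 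Plugging this back and marginalizing over $(A,Z)$, the density $f(A,Z\mid X)$ cancels against the denominator and leaves
\begin{equation*}
E\!\left[\sum_{z\in\{-1,+1\}} I\{\pi(X)=z\}\, E[Y\mid PS=S4,\, X,\, A=z,\, Z=z]\right].
\end{equation*}

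The last step invokes the IV assumptions. Consistency (Assumption~\ref{assump:coreiv}A) together with the exclusion restriction (Assumption~\ref{assump:coreiv}C) converts $E[Y\mid PS=S4,\,X,\,A=z,\,Z=z]$ into $E[Y(z,z)\mid PS=S4,\,X,\,A(z)=z]$; IV independence (Assumption~\ref{assump:coreiv}D) removes the $Z=z$ conditioning; and because $A(z)=z$ is automatic in $S4$ the $A(z)=z$ conditioning is redundant. Monotonicity (Assumption~\ref{assump:mono}) is what guarantees that $\{A(Z)=Z\}$ is a mixture of only two strata—$S2\cup S4$ when $Z=1$ and $S1\cup S4$ when $Z=-1$—so that $w_{\alpha_Z}$ is an interpretable within-mixture complier probability. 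Combining these reductions and using $Y(\pi)=Y(\pi(X),\pi(X))$ on $S4$, the expression collapses to $E[Y(\pi)\mid PS=S4]$, giving~(\ref{eq:nonrobust}). The claim~(\ref{eq:optpolicy}) is then immediate by taking $\argmax$ over $\pi\in\mathcal{D}$, since the maximizer can be read off pointwise as $\pi_c^{opt}(x)=\argmax_{z\in\{-1,1\}} E[Y(z,z)\mid PS=S4,\,X=x]$.

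The main obstacle I anticipate is the Bayes-rule bookkeeping in the second step: one must verify carefully that $\gamma(A,Z,X)$ really equals $p(PS=S4\mid X,\,A(Z)=Z)$ on $\{A=Z\}$, that the $0/0$ contributions off the diagonal can be discarded under the convention $w_{\alpha_Z}\equiv 0$ there, and that IV independence is used correctly to replace the observational conditioning $(A=z,\,Z=z)$ with the principal-stratum conditioning $A(z)=z$. Monotonicity is indispensable in this bookkeeping—without it, $\{A(Z)=Z\}$ could gather additional strata such as $S7$, and the sensitivity target $p(PS=S4\mid X,\,A(Z)=Z)$ would no longer be a scalar that the logistic working model $\mathcal{G}(X,Y,\alpha_Z)$ can meaningfully approximate.
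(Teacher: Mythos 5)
Your argument is correct and follows essentially the same route as the paper's proof: both identify $E[Y\mid PS=S4,X,A=Z=z]$ as $\int y\,w_{\alpha_z}(z,z,X,y)f(y\mid A=z,Z=z,X)\,dy/\gamma(z,z,X)$ via the Bayes identity~(\ref{eq:fycom}) and then rewrite this as a weighted expectation over the observed data; the paper runs the computation from the definition of $\mathcal{V}^c(\pi)$ toward~(\ref{eq:nonrobust}) using the two indicator factors $A(A+Z)(A+1)/4$ and $A(A+Z)(1-A)/4$, whereas you run it in reverse using the single cleaner identity $A(A+Z)/2=I\{A=Z\}$. One side remark in your last paragraph is inaccurate, though it does not affect the validity of the derivation: the event $\{A(1)=1\}$ is the union $S2\cup S4\cup S6$ (not two strata), $\{A(-1)=-1\}$ is $S1\cup S4\cup S5$, and the defier stratum $S7$ satisfies neither $A(1)=1$ nor $A(-1)=-1$, so monotonicity does not alter the composition of $\{A(Z)=Z\}$; its role in the paper is to reduce the number of free strata proportions rather than to make $w_{\alpha_Z}$ a two-component mixture weight.
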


\subsection{Weighted Learning with sensitivity parameters} \label{sec:ipw}

Our result in Theorem~\ref{main-theo} indicates that under certain assumptions, the optimal regime can be identified as a function of sensitivity parameters. 
 The optimization task of equation (\ref{eq:optpolicy}) is equivalent to $\underset{\pi \in \mathcal{D}}{\arg \! \min}~ E\Big[WI\{Z \ne \pi(X)\} \Big]$, { where $W \equiv W(\gamma,f) = \frac{A(A+Z)Yw_{\alpha_Z}(A,Z,X,Y)}{2\gamma(A,Z,X)f(A,Z|X )}$.} To avoid the weight $W$ to be negative, \cite{liu2016robust} proposed  to minimize $E\Big[|W|I\{\text{sign}(W)Z \ne \pi(X)\} \Big]$ where $\text{sign}(W) = W/|W|$ for $|W| > 0$ and $\text{sign}(W) = 0$ otherwise. 

The minimization of the objective function is difficult due to discontinuity and non-convexity of 0-1 loss. To overcome this, we proceed with convex relaxation via the use of the hinge loss function with the added penalty term to avoid over-fitting \citep{zhao2012estimating}. Thus, we estimate the optimal strategy by minimizing the following objective function
\begin{equation} \label{four}
    \frac{1}{n} \sum_{i = 1}^n |W_i| \phi\{\text{sign}(W_i)Z_ig(X_i)\} + \frac{\lambda}{2}\|g\|^2,
\end{equation}
 where $\phi$ is the hinge loss function,  $\lambda$ is penalty term, $g$ is the decision function,   and $\|g\|$ is the Euclidean norm of $g$. We refer to \citep{zhao2012estimating} for selecting the appropriate norm. Let $\hat W \equiv W(\hat \gamma,\hat f)$  denote an estimator of $W$.   Accordingly, we define our inverse probability weighted  optimal strategy estimator as 
 \begin{align} \label{eq:ipwpiest}
     \hat \pi_{ipw} = \text{sign} ( \hat g),
 \end{align}
 where $\hat g =\underset{g }{\arg \! \min}~ 
  \frac{1}{n} \sum_{i = 1}^n |\hat W_i| \phi\{\text{sign}(\hat W_i)Z_ig(X_i)\} + \frac{\lambda}{2}\|g\|^2.$
For any $\pi \in \mathcal{D}$, the asymptotic linearity of the corresponding inverse probability weighted value estimator $\hat{\mathcal{V}}_{ipw}^c(\pi)$ relies on $\|\hat \gamma - \gamma\|_{2,\mu} + \|\hat f - f \|_{2,\mu}=o_p(n^{-1/2})$ where
\begin{align} \label{eq:ipwvest}
 \hat{\mathcal{V}}^c_{ipw}(\pi) = \frac{1}{n} \sum_{i=1}^n \hat W_i I\{\pi(X_i)=Z_i\}. 
\end{align}

 \subsection{Multiply robust sensitivity analyses}
 
  We propose a multiply-robust estimator of $W$ and $\mathcal{V}_{c}(\pi)$, which leads to  our proposed multiply-robust sensitivity analysis approach.  

 The optimal strategy considered 
 can be equivalently defined as the minimizer of
 \begin{align} \label{eq:blip}
     E\left[Z\Delta(X)I\{\pi(X) \ne Z\}\right]
 \end{align}
 with respect to regime $\pi \in \mathcal{D}$ where $\Delta(X) = E\big[Y(1) - Y(-1)\mid X, PS = S4\big]$ is a blip function. By definition, an optimal regime is the one that maps $X$ to $\text{sign}\{\Delta(X)\}$. That is when the assigned treatment is suboptimal $Z \Delta(X)<0$. Hence, the objective function (\ref{eq:blip}) identifies a regime that if not followed 
 would result in the largest loss (i.e., the most negative value of the objective function). Using  (\ref{eq:fycom}),  the blip function can be written as a function of our sensitivity parameters as
 $\Delta(X) = \frac{Q(1,1, X)}{\gamma(1, 1,X)} - \frac{Q(-1, -1,X)}{\gamma(-1, -1, X)},$ 
 where 
 $Q(A,Z,X) = E\{Yw_{\alpha_Z}(A,Z, X, Y)|A,Z,X\}$.  The  estimator $\hat \Delta(X)$ can be obtained by replacing the functions $Q$ and $\gamma$ with their corresponding estimators $\hat Q$ and $\hat \gamma$. Hence, the consistency of $\hat \Delta(X)$ relies on the consistency of both $\hat Q$ and $\hat \gamma$. To improve robustness to model misspecification,  in Theorem \ref{th:phi} and Corollary \ref{cor:phi}  (\ref{sec:supp-Deltamr} of the supplementary material), we derive the following multiply robust statistic 
 
 \begin{align} \label{deltaXmr}
     \Delta_{mr}(X) &=  \frac{A(A+Z)}{2\gamma(A,Z,X)f(A,Z|X)}\Big[Yw_{\alpha_Z}(A,Z, X, Y) - Q(A,Z,X)  \\
    &- \delta(A,Z,X)\big\{w_{\alpha_Z}(A,Z, X, Y) - {\gamma}(A,Z,X)\big\} \Big] + {\Delta}(X). \nonumber
 \end{align}
where   $\delta(A,Z,X) = \frac{Q(A,Z,X)}{\gamma(A,Z,X)}$. Accordingly, we define a multiply robust weight function $Z\Delta_{mr}(X)$ as

{\small
\begin{align*}
    W_{mr} \equiv W_{mr}(Q,\gamma,\delta,f) &=  \frac{A(A+Z)}{2\gamma(A,Z,X)f(A,Z|X)}\Big[Yw_{\alpha_Z}(A,Z, X, Y) - Q(A,Z,X) \\
    &- \delta(A,Z,X)\big\{w_{\alpha_Z}(A,Z, X, Y) - {\gamma}(A,Z,X)\big\} \Big] + Z{\Delta}(X). 
\end{align*}
}
 Let $W^*_{mr} \equiv W^*_{mr}(Q^*,\gamma^*,\delta^*,f^*)$.  Then, the  function 
$E[W^*_{mr} I\{\mathcal{D}(L) \ne Z\}]$ is multiply robust in the sense that maximizes the value function in the union of the following models:
\begin{align*}
    &\mathcal{M}_1: \text{ Models for } 
    Q(A,Z,X), \gamma(A,Z,X) \text{ are correctly specified} \\ 
    &\mathcal{M}_2: \text{ Models for } f(A,Z|\bX), \gamma(A,Z,X) \text{ are correctly specified.}
\end{align*}

Let $\hat W_{mr} \equiv W_{mr}(\hat Q,\hat\gamma,\hat\delta,\hat f)$ be the estimator of $W_{mr}$. We can plug it in the objective function (\ref{four}) and define our multiply robust  optimal strategy estimator as 
\begin{align} \label{eq:mrpiest}
    \hat \pi_{mr} = \text{sign} ( \hat g_{mr}),
\end{align}
where $\hat g_{mr} =\underset{g}{\arg \! \min}~ 
  \frac{1}{n} \sum_{i = 1}^n |\hat W_{mr,i}| \phi\{\text{sign}(\hat W_{mr,i})Z_ig(X_i)\} + \frac{\lambda}{2}\|g\|^2.$

Treatment strategies are often quantified by the value function. Theorem \ref{th:val} proposes a multiply robust estimator of a compliers' value function. The results show that the corresponding estimator is asymptotically linear  in the union of the nuisance models  $\mathcal{M}^\dagger_1$, $\mathcal{M}^\dagger_2$, and $\mathcal{M}^\dagger_3$. The nuisance models are defined as
\begin{align*}
    &\mathcal{M}^\dagger_1: f(A|Z,X), \gamma(A,Z,X), f(Z|X) \text{ are correctly specified.} \\
    &\mathcal{M}^\dagger_2: f(A|Z,X), \gamma(A,Z,X), \kappa(Z,X) \text{ are correctly specified.} \\
    &\mathcal{M}^\dagger_3: \gamma(A,Z,X), f(Z|X), Q(A,Z,X) \text{ are correctly specified.}
\end{align*}

where $\kappa(Z,X) = E\left\{\frac{A(Z + A)Yw_{\alpha_Z}(A,Z, X, Y)}{2\gamma(A,Z,X)f_A(A|Z,X)}|Z,X\right\}.$ 

\begin{assumption} (Accuracy of the nuisance models for value function) \label{assump:vrate}
Let $f_A \equiv f(A|Z,X)$ and $f_Z \equiv f(Z|X)$. For any $z$,
$(\|f_Z- \hat f_Z \|_{2} + \|f_A - \hat f_A \|_{2} +\|\gamma- \hat \gamma \|_{2})\|\gamma- \hat \gamma \|_{2} + (\|f_A - \hat f_A \|_{2} +\|\gamma- \hat \gamma \|_{2})\|Q- \hat Q \|_{2}
+  \|\gamma - \hat \gamma\|_{2}\|\delta - \hat \delta \|_{2} + \|\kappa - \hat \kappa \|_{2}\|f_Z- \hat f_Z \|_{2} = o_p(n^{-1/2})$.
\end{assumption}

  Assumption \ref{assump:vrate} is satisfied, for example,  when the nuisance parameters in each $\mathcal{M}_{i}^\dagger$, $i \in \{1,2,3\}$  converge to their true values at a rate of $o_p(n^{-1/4})$. This allows us to use nonparametric methods to estimate nuisance parameters while having a root-$n$ estimator for the value function. The highly adaptive lasso is one of the several nonparametric methods that satisfies this rate condition \citep{van2017generally}.    

\begin{theorem} \label{th:val}
 Under Assumptions \ref{assump:coreiv} - \ref{assump:mono} and \ref{assump:vrate}, the estimator \\$\hat{\mathcal{V}}_{mr}^c(\pi)=P_n \xi(O,\hat Q,\hat\gamma,\hat\kappa,\hat f_A,\hat f_Z)$ is an asymptotic linear estimator of ${\mathcal{V}}_{mr}^c(\pi)$ such that  
$
\hat{\mathcal{V}}_{mr}^c(\pi) - \mathcal{V}_{mr}^c(\pi)=P_n \xi_V(O,Q,\gamma,\kappa,f_A,f_Z)+o_p(n^{-1/2}),
$ 
where the efficient influence function $\xi_V(O,Q,\gamma,\kappa,f_A,f_Z)$ is defined in the Section \ref{supp:proofth2} of the supplementary materials.  
The estimator $\hat{\mathcal{V}}_{mr}^c(\pi)$ is  consistent  under \\ $ \mathcal{M}^\dagger_1 \cup \mathcal{M}^\dagger_2 \cup \mathcal{M}^\dagger_3$. Furthermore, $\hat{\mathcal{V}}_{mr}^c(\pi)$ is semiparametric locally efficient in $\mathcal{M}_{union}$ at the intersection sub-model $\mathcal{M}_{int} = \mathcal{M}_1^\dagger \cap \mathcal{M}_2^\dagger \cap \mathcal{M}_3^\dagger.$  
\end{theorem}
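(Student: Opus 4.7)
The plan is to establish Theorem~\ref{th:val} through the standard one-step / efficient-influence-function machinery. I would first derive the EIF $\xi_V$ of $\mathcal{V}^c(\pi)$ under the nonparametric model, starting from the identification formula of Theorem~\ref{main-theo} and computing the pathwise derivative of $\mathcal{V}^c(\pi)$ along a regular submodel $P_\varepsilon$. Factorizing the observed-data likelihood as $f(X)\,f(Z\mid X)\,f(A\mid Z,X)\,f(Y\mid A,Z,X)$, I would project the score onto each component's tangent space. The $Y$-component produces an augmentation involving $(Y w_{\alpha_Z} - Q)/\gamma$ and, because $\gamma$ is itself a $Y$-integral, a correction term in $\delta = Q/\gamma$; the $A$-component produces the $A(A+Z)/\{2\gamma f(A,Z\mid X)\}$ weighting seen in Theorem~\ref{main-theo}; and the $Z$-component produces an augmentation through $\kappa(Z,X)/f(Z\mid X)$, which is exactly the conditional expectation that removes the dependence on $A$ and $Y$. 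Collecting these terms yields $\xi_V$.

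Once $\xi_V$ is in hand, I would establish the expansion
\begin{align*}
\hat{\mathcal{V}}^c_{mr}(\pi) - \mathcal{V}^c(\pi)
&= (P_n - P)\,\xi_V(O;\eta) + (P_n - P)\{\xi_V(O;\hat\eta) - \xi_V(O;\eta)\} \\
&\quad + P\{\xi(\hat\eta) - \xi(\eta)\},
\end{align*}
where $\eta=(Q,\gamma,\kappa,f_A,f_Z)$. The first term is the desired linear approximation. The second term is handled by either a $P$-Donsker assumption on the nuisance function class or, more conveniently, by cross-fitting, in which case the empirical process is indexed by a function independent of the sample used to estimate it.

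The analytical heart of the argument is controlling the drift $P\{\xi(\hat\eta) - \xi(\eta)\}$. I would expand this by iterated conditioning -- first over $Y\mid A,Z,X$, then over $A\mid Z,X$, then over $Z\mid X$ -- and reorganize every residual so that it appears as a product of two nuisance errors. I expect precisely the products matching Assumption~\ref{assump:vrate}: $\|\gamma-\hat\gamma\|_2$ paired with $\|f_Z-\hat f_Z\|_2$, $\|f_A-\hat f_A\|_2$, or itself; $\|Q-\hat Q\|_2$ paired with $\|f_A-\hat f_A\|_2$ or $\|\gamma-\hat\gamma\|_2$; $\|\gamma-\hat\gamma\|_2$ paired with $\|\delta-\hat\delta\|_2$; and $\|\kappa-\hat\kappa\|_2$ paired with $\|f_Z-\hat f_Z\|_2$. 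Cauchy--Schwarz and Assumption~\ref{assump:vrate} deliver the $o_p(n^{-1/2})$ bound. For multiple robustness, I would then check that under each of $\mathcal{M}_1^\dagger$, $\mathcal{M}_2^\dagger$, $\mathcal{M}_3^\dagger$ every one of these products has at least one factor equal to zero in the limit even if the other factor is $O_p(1)$, giving consistency in $\mathcal{M}_1^\dagger \cup \mathcal{M}_2^\dagger \cup \mathcal{M}_3^\dagger$.

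Local semiparametric efficiency in $\mathcal{M}_{union}$ at $\mathcal{M}_{int}$ then follows immediately: at $\mathcal{M}_1^\dagger \cap \mathcal{M}_2^\dagger \cap \mathcal{M}_3^\dagger$ every nuisance is consistently estimated at the rate required by Assumption~\ref{assump:vrate}, so $\hat{\mathcal{V}}^c_{mr}(\pi)$ has influence function equal to the nonparametric canonical gradient $\xi_V$ and therefore attains the efficiency bound. The hardest step will be the bookkeeping for the drift expansion, because the nested nuisance $\gamma(A,Z,X)=\int w_{\alpha_Z}(A,Z,X,y)\,f(y\mid A,Z,X)\,dy$ feeds into both $\delta$ and $\kappa$, and perturbations in $\gamma$ must be shown to telescope into the five second-order products above rather than leaving a surviving first-order $\|\gamma-\hat\gamma\|_2$ term. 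Verifying that the specific augmentation in $\xi$ accomplishes this cancellation is where the algebra is delicate.
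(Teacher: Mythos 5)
Your proposal follows essentially the same route as the paper: the EIF is derived by pathwise differentiation with the likelihood factorized into the $X$, $Z\mid X$, $A\mid Z,X$, and $Y\mid A,Z,X$ components (the paper's terms (I)--(IV)), the remainder $\hat{\mathcal{V}}^c_{mr}(\pi)-P_n\xi_V$ is expanded into exactly the second-order products listed in Assumption~\ref{assump:vrate}, and robustness is checked model by model (the paper additionally verifies $E[\xi(O,Q^*,\gamma^*,\kappa^*,f_A^*,f_Z^*)]=\mathcal{V}^c(\pi)$ directly under each $\mathcal{M}^\dagger_i$, which is equivalent to your factor-vanishing argument). The approach is correct and matches the paper's proof.
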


 Despite the ease of use, the inverse probability weighted estimator $\hat{\mathcal{V}}_{ipw}^c(\pi)$ requires the correct specification of all the nuisance parameters (i.e., weight functions),
is known to be inefficient, and suffers from the curse of dimensionality. In fact, $\hat{\mathcal{V}}_{ipw}^c(\pi)$ fails to be asymptotically linear when weight functions are estimated using data-adaptive techniques (i.e., non-parametric methods) \citep{ertefaie2022nonparametric}. The multiply robust estimator $\hat{\mathcal{V}}_{mr}^c(\pi)$ overcomes  these shortcomings and is asymptotically linear  as long as the rate Assumption \ref{assump:vrate} holds.  Similarly, the multiply robust optimal strategy estimator $\hat \pi_{mr}$ improves the performance of $\hat \pi_{ipw}$ by including a multiply robust weight function $W_{mr}$.

\section{Simulation Studies}
\subsection{Scenarios and competing methods}
 We conduct simulation studies to demonstrate the performance of our  methods. First, we compare our proposed inverse probability weighted  (IPW) and multiply robust estimators (MR) to the outcome weighted learning estimator (OWL) of \cite{zhao2012estimating} and the IV-based estimator (IVT) of \cite{cui2021semiparametric}. We violate the no unmeasured confounders assumption and allow the levels of the endogenous variable (i.e., compliance) to be greater than the IV, thereby the latter two methods are expected to fail.  Second, we will assess the robustness of our proposed estimators presented in Theorems \ref{main-theo}  and \ref{th:val} to the misspecification of nuisance parameters. In this section, we refer to the value function estimators in equation (\ref{eq:ipwvest}) and Theorem \ref{th:val} as the IPW and MR estimators, respectively.
 
Our simulation studies consist of two main sections. In the first section, we assume that the sensitivity parameters $\alpha_Z$ are known to assess the robustness of our estimators to certain misspecifications. In the second section, the true values of those sensitivity parameters are unknown. 
Throughout, we consider the following scenarios to demonstrate the robustness properties of the classification weight function $W_{mr}$: (1) All nuisance parameters are correctly specified ($\mathcal{M}_1 \cup \mathcal{M}_2$); (2) $f(A,Z|X)$ is incorrectly specified ($\mathcal{M}_1$); (3) $Q(A,Z,X)$ is incorrectly specified ($\mathcal{M}_2$). 
Moreover,  to examine the robustness of the multiply robust value function estimator, we consider the following scenarios: (1) All nuisance parameters are correctly specified ($\mathcal{M}^\dagger_1 \cup \mathcal{M}^\dagger_2 \cup \mathcal{M}^\dagger_3$); (2) $f(Z|X)$ is incorrectly specified ($\mathcal{M}^\dagger_2$); (3) $f(A|Z,X)$ is incorrectly specified ($\mathcal{M}^\dagger_3$); (4) $Q(A,Z,X)$ is incorrectly specified ($\mathcal{M}^\dagger_1$).

\subsection{Generative models}

We generate the covariates $X_1$ and $X_2$ from the uniform distributions $(-1,1)^2$ and Z is generated from a Bernoulli distribution with $p(Z = 1\mid X_1,X_2) = \frac{\exp(0.3 - 2X_1 + 2X_2)}{1+\exp(0.3 -2X_1 + 2X_2)}$. The unmeasured confounder, $U$ is drawn from the bridge distribution with density $g_b(u) = \frac{1}{2\pi}\frac{\text{sin}(0.5 \pi)}{\text{cosh}(0.5 u) + \text{cos}(0.5 \pi)}, ( -\infty < u < \infty)$.  The bridge distribution ensures that the marginalized distribution $f(A\mid Z,X,U)$ over $U$ (i.e.,  $f(A\mid Z,X)$) remains a multinomial logistic regression \citep{wang2003matching}. We consider the strata proportions as following $\log\Big\{\frac{p(PS = S1 \mid X,Z,U)}{p(PS = S3\mid X,Z,U)}\Big\} =  0.5X_1 + 0.5Z + U,$ $\log\Big\{\frac{p(PS = S2\mid X,Z,U)}{p(PS = S3\mid X,Z,U)}\Big\} =  -0.5X_1 + 0.5Z + U,$ \\$\log\Big\{\frac{p(PS = S4\mid X,Z,U)}{p(PS = S3\mid X,Z,U)}\Big\} =  -0.5X_1 + 0.5Z - U$, $\log\Big\{\frac{p(PS = S5\mid X,Z,U)}{p(PS = S3\mid X,Z,U)}\Big\} =  0.5X_1 + 0.5Z - U$, $\log\Big\{\frac{p(PS = S6\mid X,Z,U)}{p(PS = S3\mid X,Z,U)}\Big\} =  0.5X_1 + 0.5Z - U.$  We set the sensitivity function $\mathcal{G}(X,Y,\alpha_Z) = Y\alpha_Z^Y $. Additional simulations where $\mathcal{G}(X,Y, \alpha_Z) = \alpha_{Z}^0+X\alpha_{Z}^X + Y\alpha_{Z}^Y$ can be found in the \ref{appendixd} in the Supplementary Material.    

 The compliance level $A$ is then determined based on the subject's treatment and type of principal strata. The outcome of individuals with $A = Z = -1$ and  $A = Z = 1$ are generated from the normal distribution  $f_{-1}(y|X_1, X_2) = N(1 + 2X_1 + 2X_2, 0.5^2)$ and $f_{+1}(y|X_1, X_2) = N(1, 0.5^2)$.  From the distribution $f_{-1}(y)$ and $f_{+1}(y)$ and the pre-specified value of $\alpha$'s, we generate the compliers outcome using rejection sampling (\ref{rejectionsampling}). The outcome of individuals with $(Z= -1, A = 1)$, $(Z = 1, A = -1)$, $(Z = 1, A = 0)$, $(Z = -1, A = 0)$ are generated from $N(3 + X_1 + X_2, 0.5^2)$, $N(-1 + X_1 + X_2, 0.5^2),$ $N(5, 0.1^2)$, $N(-5,0.1^2)$, respectively.  In our data generative model, the variable $U$ is an unmeasured confounder as it is associated with both $A$ and $Y$ through compliance classes.   We generate 500 datasets with a sample size of 500. Our generative model results in roughly 240 compliers in each data set.  We report the average correct classification rate of the estimated optimal policy and the resulting estimated value function  over the 500 simulated datasets. In Tables \ref{tab:clasrate}-\ref{tab:valrob}, we report the standard errors in parentheses.

\subsection{Nuisance parameters approximation and  estimation}

 Our sensitivity analyses rely on correctly specified $\gamma(a,z,x)$. However, it cannot be directly inferred from the generative model. To overcome this issue, 
 we approximate $\gamma(a,z,x) \approx \frac{1}{5000} \sum_{j=1}^{5000} y_{jz} w_{\alpha_z}(a,z,y_{jz})$ where $y_{jz}$ is a random sample from $f_{-1}(y|x_1, x_2)$ when $z = -1$, and $f_{+1}(y|x_1, x_2)$ otherwise. 
  In our simulations, to estimate $f_{z}(y|x_1, x_2)$,  we assume a normal distribution for the outcome $Y$ and estimate the corresponding means and variances using linear regressions.  Alternatively, to gain robustness, one can estimate  the conditional cumulative function $Y$ given $(A=z,Z=z,X=x)$, i.e. $F_{z}(y|x_1, x_2)$ using a single-index model \citep{chiang2012new}. We estimate the propensity score $f(Z|X)$ using logistic regression models and $f(A|Z,X)$ using multinomial logistic regression. 
 To estimate the value function, we will use the cross-fitted Highly Adaptive Lasso (HAL)   \citep{benkeser2016highly} to estimate  $\kappa(Z,X)$. The misspecified nuisance models are fitted similarly where both $X_1$ and $X_2$ are removed from the model.

\begin{table}
\centering
\caption{Simulation Result: Mean (SD) of correct classification rate. The sensitivity parameter are ($ \alpha_{-1}, \alpha_{+1}$) = (0.5,0.5) }
\begin{tabular}{rlllll}
  \hline
 & Case & OWL & IVT & IPW & MR\\ 
  \hline
  1 & All correctly specified & 0.51 (0.03) & 0.73 (0.07) & 0.88 (0.04) & 0.95 (0.02) \\ 
  2 & $f(A, Z|X)$ misspecified & 0.51 (0.03) & 0.73 (0.06) & 0.78 (0.04) & 0.97 (0.02) \\ 
  3 & $Q(A,Z,X)$ misspecified & 0.51 (0.03) & 0.72 (0.07) & 0.88 (0.05) & 0.88 (0.05) \\ 
   \hline
\end{tabular} \label{tab:clasrate}
\end{table}

\subsection{Known \texorpdfstring{$\alpha$'s}{alpha's}}
\label{simknown}
In this simulation, the true set of parameters $\alpha_Z^Y = ( \alpha_{-1}^Y, \alpha_{+1}^Y)$ is (0.5,0.5). Tables \ref{tab:clasrate} and \ref{tab:value} report the correct classification rate of the estimated optimal rules and their corresponding Monte Carlo approximated value function, respectively. As expected, because of the presence of unmeasured confounders, the OWL method fails. The average correct classification rate for OWL is 51\text{\%} resulting in the lowest value function among the methods considered in all three scenarios. The IVT method results in the average correct classification rates of roughly 73\text{\%} and the average value function of roughly 1.4. In contrast, the proposed IPW estimator produces an average correct classification rate of 88\text{\%}, and the average value function estimators are around 1.64 when all the nuisance parameters are correctly specified. Notably, even when the nuisance parameter, $f(A,Z|X)$, is misspecified, the IPW estimator still outperforms the existing methods. Importantly, the proposed multiply robust estimator shows the best performance among the methods considered. Specifically, the corresponding rule matches the true rule 95\% of the time when all the nuisance parameters are correctly specified. The accuracy of the rules estimated using multiply robust method drops slightly to 88\% when 
$Q(A,Z,X)$ is misspecified.

Table \ref{tab:valrob} demonstrates the performance of our proposed value function estimators in equation (\ref{eq:ipwvest}) and Theorem \ref{th:val}. In case 1, when all nuisance parameters are correctly specified, the estimated value functions using IPW and MR estimators are relatively close to the true value. When a nuisance parameter is misspecified (i.e., cases 2-4), the estimated value of the MR estimator closely matches the corresponding true value. However, the IPW estimator  overestimates the true value function when either $f(Z|X)$ or $f(A|Z,X)$ are misspecified. The misspecification of $Q(A,Z,X)$ does not impact the IPW estimator as the estimator does not depend on it. As expected, the standard error of the estimated value function using MR is uniformly smaller than those obtained by IPW.

\begin{table}[t] 
\centering
 \caption{Simulation Result: Mean (SD) of value functions. The true optimal value function is 1.68. The sensitivity parameter are ($\alpha_{-1}^Y, \alpha_{+1}^Y$) = (0.5,0.5) }
\begin{tabular}{rlllll}
  \hline
 & Case & OWL & IVT & IPW & MR\\ 
  \hline
  1 & All correctly specified & 1.05 (0.00) & 1.49 (0.12) & 1.64 (0.06) & 1.67 (0.06) \\ 
  2 & f(A, Z$|$X) misspecified & 1.05 (0.01) & 1.49 (0.10) & 1.55 (0.07) & 1.68 (0.06) \\ 
  3 & $Q(A,Z,X)$ misspecified & 1.05 (0.00) & 1.49 (0.12) & 1.65 (0.07) & 1.64 (0.07) \\ 
   \hline
\end{tabular} \label{tab:value}
\end{table}

\begin{table}
\centering
\caption{Simulation Result: Comparing the performance of the proposed value function estimators. The sensitivity parameter are ($ \alpha_{-1}^Y, \alpha_{+1}^Y$) = (0.5,0.5)}
\begin{tabular}{rllll}
  \hline
 & Case & True & IPW & MR \\ 
  \hline
  1 & All correctly specified & 1.65  & 1.65 (0.14) & 1.62 (0.09) \\ 
  2 & f(Z$|$X) misspecified & 1.52  & 1.92 (0.12) & 1.48 (0.07) \\ 
  3 & f(A$|$Z,X) misspecified & 1.64  & 2.55 (0.31) & 1.60 (0.14) \\ 
  4 & $Q(A,Z,X)$ misspecified & 1.64  & 1.66 (0.13) & 1.62 (0.10) \\ 
   \hline
\end{tabular}\label{tab:valrob}
\end{table}

\subsection{Unknown \texorpdfstring{$\alpha$'s}{alpha's}}
\label{simunknown}

\begin{figure}
    \centering
    \includegraphics[scale=.5]{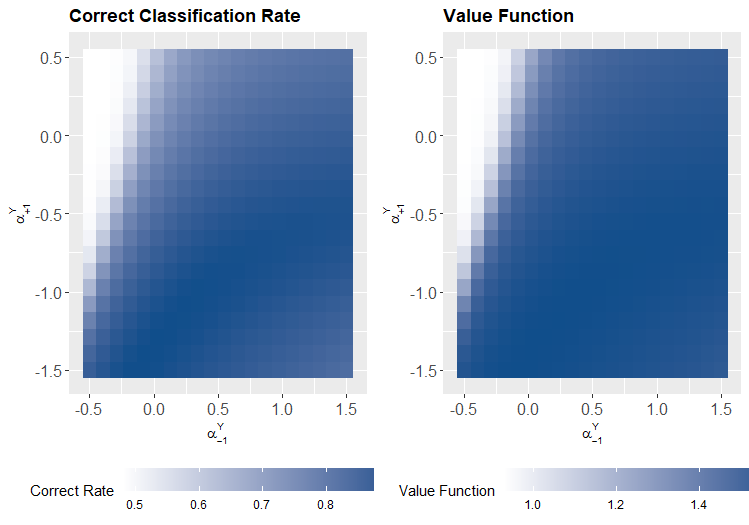}
    \caption{Sensitivity analysis of the proposed method (IPW): Sensitivity parameters are ($ \alpha_{-1}^Y, \alpha_{+1}^Y$). The darker area indicates a higher correct classification rate/value function. The true vector of sensitivity parameters are ($ \alpha_{-1}^Y, \alpha_{+1}^Y$) = (0.5,-0.5). }
    \label{fig:prop-sensi}
\end{figure}

\begin{figure}
    \centering
   \includegraphics[scale=.5]{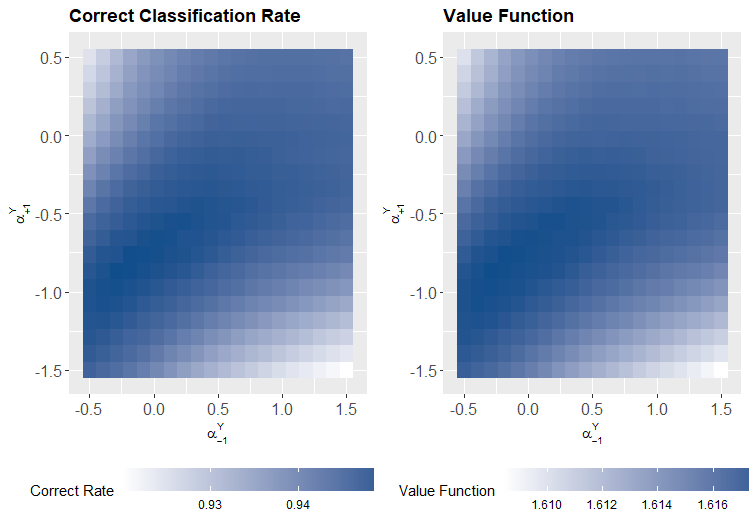}
    \caption{Sensitivity analysis of the multiply robust method when all nuisance parameters are correctly specified: Sensitivity parameters are ($ \alpha_{-1}^Y, \alpha_{+1}^Y$). The darker area indicates a higher correct classification rate/value function. The true vector of sensitivity parameters are ($ \alpha_{-1}^Y, \alpha_{+1}^Y$) = (0.5,-0.5). }
    \label{fig:mr-sensi}
\end{figure}

We now consider the case that the true sensitivity parameters $\alpha_Z^Y = (\alpha_{-1}^Y, \alpha_{+1}^Y)$ are unknown. Hence, we conduct a sensitivity analysis to examine our methods' performance under the misspecification of  $\alpha$'s. In this simulation, we set the true set of parameters $( \alpha_{-1}^Y, \alpha_{+1}^Y)$ to be (0.5,-0.5). With this set of parameters, the average correct classification rate obtained using the IVT method is 68.96\text{\%} with the average corresponding value function of 1.37.
Figure \ref{fig:prop-sensi} displays the performance of the IPW estimators. The heat map in the left column presents average correct classification rates and average value functions  when the $\alpha$'s are varied from their true value. For most of the specified grids, the IPW method outperforms  the IVT. However, when $\alpha_{-1}^Y<0$, and $\alpha_{+1}^Y>-0.5$, the average classification rate drops to nearly 0.5 with the average corresponding value function of nearly 1.0.  In contrast, the MR estimator with correctly specified nuisance parameters is less sensitive to the deviation from the true $\alpha$'s and remains superior to IVT for a substantially larger area of the grid points (Figure \ref{fig:mr-sensi}). When some of the nuisance parameters are misspecified, the correct classification rate drops slightly but still outperforms both the IVT and the IPW approach (Figures \ref{fig:mrwsensi2} - \ref{fig:mrwsensi4} in the Web \ref{appendixd}).

Our simulation results highlight several important points. First,  the MR method is  more robust to the misspecification of $\alpha$'s than the IPW. Second, any knowledge about the sign of $\alpha$'s can improve the interpretability of the results, particularly for the IPW-based approach. Third, the existing IV-based methods (e.g., IVT) may result in severely suboptimal rules. 

\subsection{Additional simulation studies} \label{sec:addsim}

In \ref{missenmod} of the supplementary material, we assess the sensitivity of our approach to misspecification of the sensitivity function $\mathcal{G}(X,Y, \alpha_Z)$. Specifically, we allow  $\mathcal{G}$ to depend on both $Y$ and $X$ in the true generative model, but in our analyses, we only include either $Y$ or the first component of the principal component analysis (PCA) of $(X,Y)$ in $\mathcal{G}$. Our MR methods show robustness to the misspecification of the $\mathcal{G}$ function in this particular setting. Specifically, the MR method performed well for all values of the sensitivity parameters $\alpha_{Z}$. The average correct classification rate is consistently above 90\text{\%}. This finding is consistent with the results obtained in Section \ref{simknown}, where the sensitivity function was correctly specified. There are two possible explanations for  this robustness property. First, because the measured confounders $X$ have already been accounted for,  the inclusion of them in the sensitivity function $\mathcal{G}$ does not change the result in a meaningful way. Second, the outcome $Y$ is a summary function of measured and unmeasured confounders. So either $Y$ or the first component of PCA of $(X,Y)$ might be sufficient to model the probability of being a complier. However, it is important to note that this robustness may not hold in all settings, and thus, further investigation is needed.    

\section{Application}
\label{sec:application}

We implement our approach to the Adaptive Treatment for Alcohol and Cocaine Dependence (ENGAGE) study \citep{mckay2015effect}.
 This study recruited 500 individuals to enter the intensive outpatient program (IOP) consisting of attending three sessions per
week for two weeks. Those who failed to attend at least two sessions in Week 2 were eligible to get randomized to one of two telephone motivational interviewing
(MI) based interventions ($n=189$). One intervention was to encourage patients to engage in IOP (MI-IOP), and the other included a choice of IOP or three other treatment options (MI-PC). The binary outcome of interest is whether an individual has any cocaine use at the end of the eight-week program (i.e., $Y=-1$ indicates cocaine use and $Y=1$ indicates no cocaine use).

Our analyses include the following baseline variables:  race, sex, education, smoking status, treatment readiness, and general health. We include all these covariates in the weighted SVM models for treatment classification and models for estimating nuisance parameters. The instrument variable $Z$ is the assigned intervention, either MI-IOP ($Z=-1$) or MI-PC ($Z=1$). The observed compliance level $A$ is decided based on the number of MI  sessions attended by an individual. If a person attends more than the median of the attended sessions of other patients, we set $A=Z$. Otherwise, we set $A=0$.

 We note that since treatments in the ENGAGE study are randomized, we know the true $f(Z|X)$. This allows us to simplify our multiply robust estimator as follows
 \begin{align*}
     \mathcal{\hat {\tilde{V}}}^c_{mr}(\pi) &= P_n\frac{A(Z+A)I\{\pi(X) = Z\}Yw_{\alpha_Z}(A,Z, X, Y)}{2\hat\gamma(A,Z,X) f(Z|X)\hat f(A|Z,X)} \\
     &- \Big[\frac{A(A+Z)I\{\pi(X) = Z\}  \hat \delta(A,Z,X)}{2\hat \gamma(A,Z,X) f(Z|X)\hat f(A|Z,X)}\{w_{\alpha_Z}(A,Z, X, Y) - \hat \gamma(A,Z,X)\} \Big]  \\ 
      &- \Big[ \frac{A(Z+A)I\{\pi(X) = Z\}\hat Q(A,Z,X)}{2\hat\gamma(A,Z,X)  f(Z|X)\hat f(A|Z,X)} \\
      &\hspace{0.5cm}-  \sum_a  \frac{a(a+Z)I\{\pi(X) = Z\} \hat Q(a,Z,X)}{2\hat \gamma(a,Z,X)f(Z|X)} \Big]. 
 \end{align*}
We will use this modified multiply robust estimator to estimate the value function.

The key difference between $\mathcal{\hat {\tilde{V}}}^c_{mr}(\pi)$ and $\mathcal{\hat {{V}}}^c_{mr}(\pi)$ is that, in the former, the corresponding influence function is no longer orthogonal to the nuisance tangent space of $f(Z|X)$. This choice is made for two reasons: (1) the treatment is randomized, so from the consistency point of view, there is no need to be robust to possible misspecification of $f(Z|X)$; (2) due to a relatively small sample size, the estimation of the complex functional  $\kappa(Z,X)$ may lead to a biased estimate of the value function.    We acknowledge that estimating known nuisance parameters leads to efficiency gain \citep{van2003unified}, but we decided to sacrifice the slight efficiency gain in favor of reducing finite sample bias \citep{liang2020semiparametric}.

The nuisance parameters $f(Z|X)$ and $f(A|Z,X)$ are fitted using the logistic regression and multinomial logistic regression, respectively, and $Q(A,Z,X)$ is obtained using a random forest model. 

 We approximate $\gamma(a,z,x) = \frac{1}{5000} \sum_{i=1}^{5000} y^i_z w_{\alpha_z}(a,z,x,y)$ where $y^i_z$ is a sample from the Bernoulli distribution  with the success probability of $p_z(Y = 1|X)$ is estimated using a random forest model. Note that estimating $f(Z|X)$ may lead to a slightly conservative estimate of the variance of $\mathcal{\hat {\tilde{V}}}^c_{mr}(\pi)$. 

  We set the sensitivity function $\mathcal{G}(X,Y,\alpha_Z) = \alpha^0_Z + Y\alpha_Z^Y $. The parameter $\alpha_{-1}^Y$ is the log odds coefficient of cocaine use in predicting the probability
that a patient would take MI-PC (i.e., $A=1$) if she were assigned to MI-PC (i.e., $Z=1$) given that the patient would take MI-IOP (i.e., $A=-1$) if she were assigned to MI-IOP (i.e., $Z=-1$). The parameter  $\alpha_{+1}^Y$ is defined similarly.
 There is a clinical sense that individuals with more severe substance use (e.g., $Y=-1$) are less likely to  be compliers due to higher craving \citep{evans2012differential,pergolizzi2020opioid,cavicchioli2020craving}.   That is $p(PS = S4 \mid A = z, Z = z, Y=1 ) > p(PS = S4 \mid A = z, Z = z, Y=0 )$ for both $z=1$ and $z=-1$. This implies that $\alpha_{-1}^Y > 0$ and $\alpha_{+1}^Y > 0$. 
 In the ENGAGE study, 39\% of individuals adhered with their assigned intervention IOP (i.e., $p(A = -1|Z = -1)=0.39$), while 44\% adhered with PC (i.e., $p(A = +1|Z = +1)=0.44$) \citep{mckay2015effect}. Since the probability of complier cannot exceed either of these probabilities, its upper bound is 39\% (i.e., $p(PS = S4) \leq 0.39$). In \ref{calalpha0}, we show that one can obtain a range of plausible $\alpha^0_Z$ as a function of $\alpha^Y_Z$ and $p(PS = S4)$. In this Section, we consider $p(PS = S4) = 0.30$ for our sensitivity analyses. The readers can find the result for different choices of $p(PS = S4)$ in \ref{sensiengage}.

Let $d(\pi,\pi')=\mathcal{\hat {\tilde{V}}}^c_{mr}(\pi)-\mathcal{\hat {\tilde{V}}}^c_{mr}(\pi')$ denote the difference between the estimated multiply robust value function estimators under policies $\pi$ and $\pi'$.  Specifically,  we randomly split the data set into a train and a test sets in a 60:40 ratio. We first estimate the nuisance parameters and the optimal strategy using the train set (i.e., 60\% of the data) and then estimate the corresponding optimal value function using the test set (i.e., 40\% of the data).  We repeat this process one hundred times and implement IPW, MR, IVT, and OWL methods on each resampled data set. Let $\hat \pi_{ipw}$, $\hat \pi_{mr}$, $\hat \pi_{ivt}$ and $\hat \pi_{owl}$ denote the estimated optimal policy of the corresponding methods.  
Figure \ref{fig:compareEric} and \ref{fig:compareOWL} compare the performance of the proposed methods to IVT and OWL methods, respectively.  The reported estimated optimal value functions are averaged over the 100 test sets.   In Figure \ref{fig:compareEric}, the left heat maps show the difference between the estimated value of the estimated optimal policy using the proposed (i.e., IPW and MR) and IVT estimators (i.e., $d(\hat\pi_{mr},\hat\pi_{ivt})$ and  $d(\hat\pi_{ipw},\hat\pi_{ivt})$). The darker blue indicates the area of $\alpha^Y$'s where the magnitude of the difference is larger (i.e., the darker the better performance of our proposed methods). The right heat maps show whether the value functions of the constructed optimal regimes using our methods outperform the IVT method where the red color indicates $d(\hat\pi_{mr},\hat\pi_{ivt})<0$ or $d(\hat\pi_{ipw},\hat\pi_{ivt})<0$.
Figure \ref{fig:compareOWL} shows the same information but for the OWL method. Table \ref{tab:valueapp} shows the value functions and the standard errors of the proposed methods, IVT, and OWL for some selected $\alpha^Y = (\alpha^Y_{-1},\alpha^Y_{+1})$. The MR method yields a higher value compared with both OWL and IVT methods in the region where  $\alpha_{+1}^Y>0$ and $\alpha_{-1}^Y>0$ (i.e., the plausible range). 
In specific regions of $\alpha^Y$, the estimated value functions obtained by the IPW method are lower compared to those produced by the OWL and IVT methods. We hypothesize that the inferior performance of the IPW method relative to the MR method may stem from a potential misspecification of either $f(A|Z,X)$, $f(Z|X)$, or both.

\begin{table}[ht]
\centering
\caption{Value Function (SE) of the IPW, MR, IVT, and OWL methods for some selected $\alpha^Y$.}
\begin{tabular}{rrllll}
  \hline
  & $\alpha^Y=(\alpha^Y_{-1}, \alpha^Y_{+1}$) & IPW & MR & IVT & OWL \\ 
  \hline
1 & (0.00, 0.00) & 0.42 (0.12) & 0.47 (0.12) & 0.45 (0.16) & 0.42 (0.18) \\ 
  2 & (0.00, 0.53) & 0.43 (0.13) & 0.49 (0.13) & 0.47 (0.16) & 0.44 (0.18) \\ 
  3 & (0.00, 1.05) & 0.46 (0.14) & 0.49 (0.14) & 0.49 (0.16) & 0.47 (0.18) \\ 
  4 & (0.00, 1.47) & 0.50 (0.14) & 0.51 (0.14) & 0.51 (0.16) & 0.49 (0.18) \\ 
  5 & (0.00, 2.00) & 0.52 (0.12) & 0.55 (0.12) & 0.53 (0.16) & 0.52 (0.18) \\ 
  6 & (0.53, 0.00) & 0.51 (0.12) & 0.58 (0.12) & 0.52 (0.15) & 0.47 (0.17) \\ 
  7 & (0.53, 0.53) & 0.51 (0.13) & 0.58 (0.13) & 0.54 (0.15) & 0.50 (0.17) \\ 
  8 & (0.53, 1.050 & 0.53 (0.12) & 0.58 (0.12) & 0.56 (0.15) & 0.52 (0.17) \\ 
  9 & (0.53, 1.47) & 0.54 (0.13) & 0.58 (0.13) & 0.57 (0.15) & 0.54 (0.17) \\ 
  10 &(0.53, 2.00) & 0.57 (0.12) & 0.60 (0.12) & 0.59 (0.15) & 0.57 (0.17) \\ 
  11 & (1.05, 0.00) & 0.61 (0.12) & 0.69 (0.12) & 0.58 (0.14) & 0.53 (0.16) \\ 
  12 & (1.05, 0.53) & 0.60 (0.12) & 0.69 (0.12) & 0.60 (0.15) & 0.55 (0.16) \\ 
  13 & (1.05, 1.05) & 0.60 (0.12) & 0.69 (0.12) & 0.62 (0.15) & 0.58 (0.16) \\ 
  14 & (1.05, 1.47) & 0.60 (0.12) & 0.68 (0.12) & 0.64 (0.14) & 0.60 (0.16) \\ 
  15 & (1.05, 2.00) & 0.61 (0.12) & 0.66 (0.12) & 0.66 (0.14) & 0.62 (0.15) \\ 
  16 & (1.47, 0.00) & 0.68 (0.13) & 0.75 (0.13) & 0.63 (0.14) & 0.56 (0.15) \\ 
  17 & (1.47, 0.53) & 0.67 (0.12) & 0.75 (0.12) & 0.65 (0.14) & 0.59 (0.15) \\ 
  18 & (1.47, 1.05) & 0.66 (0.11) & 0.77 (0.11) & 0.67 (0.14) & 0.61 (0.15) \\ 
  19 & (1.47, 1.47) & 0.65 (0.11) & 0.77 (0.11) & 0.68 (0.14) & 0.63 (0.15) \\ 
  20 & (1.47, 2.00) & 0.65 (0.12) & 0.75 (0.12) & 0.70 (0.14) & 0.66 (0.15) \\ 
  21 & (2.00, 0.00) & 0.74 (0.11) & 0.83 (0.11) & 0.68 (0.14) & 0.60 (0.15) \\ 
  22 & (2.00, 0.53) & 0.73 (0.12) & 0.82 (0.12) & 0.69 (0.14) & 0.62 (0.15) \\ 
  23 & (2.00, 1.05) & 0.71 (0.12) & 0.82 (0.12) & 0.71 (0.14) & 0.65 (0.14) \\ 
  24 & (2.00, 1.47) & 0.71 (0.12) & 0.85 (0.12) & 0.73 (0.14) & 0.67 (0.14) \\ 
  25 & (2.00, 2.00) & 0.70 (0.11) & 0.84 (0.11) & 0.75 (0.13) & 0.70 (0.14) \\ 
   \hline
\end{tabular}

\label{tab:valueapp}
\end{table}

\begin{figure}
    \centering
    \includegraphics[scale=.5]{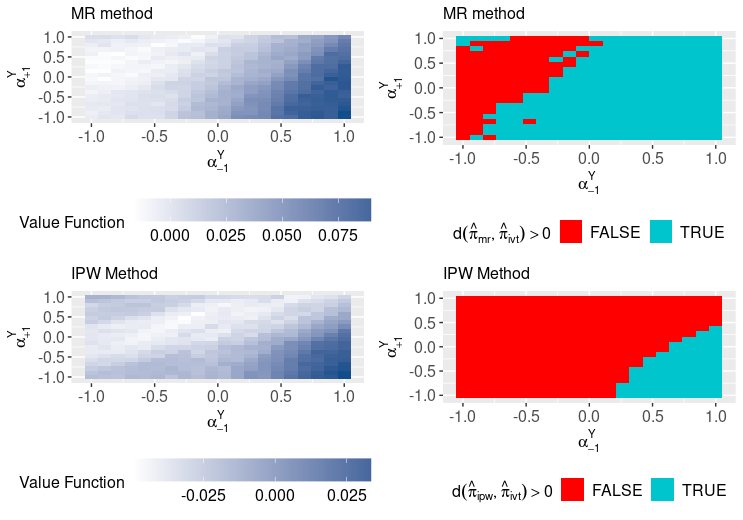}
    \caption{ENGAGE study: Sensitivity analysis of the difference between the proposed methods (MR and IPW) and the IVT method. The sensitivity parameters are $(\alpha_{+1}^Y, \alpha_{-1}^Y)$. The probability of compliance is fixed at 0.30. The left panel shows the magnitude of the difference between the proposed and IVT methods  (i.e., $d(\hat\pi_{mr},\hat\pi_{ivt})$ and $d(\hat\pi_{ipw},\hat\pi_{ivt})$) and the darker color the better performance of our methods. The right panel shows the area on the grid in which our methods outperform the IVT method where the red color indicates $d(\hat\pi_{mr},\hat\pi_{ivt})<0$ or $d(\hat\pi_{ipw},\hat\pi_{ivt})<0$.}
    \label{fig:compareEric}
\end{figure}

\begin{figure}
    \centering
    \includegraphics[scale=.5]{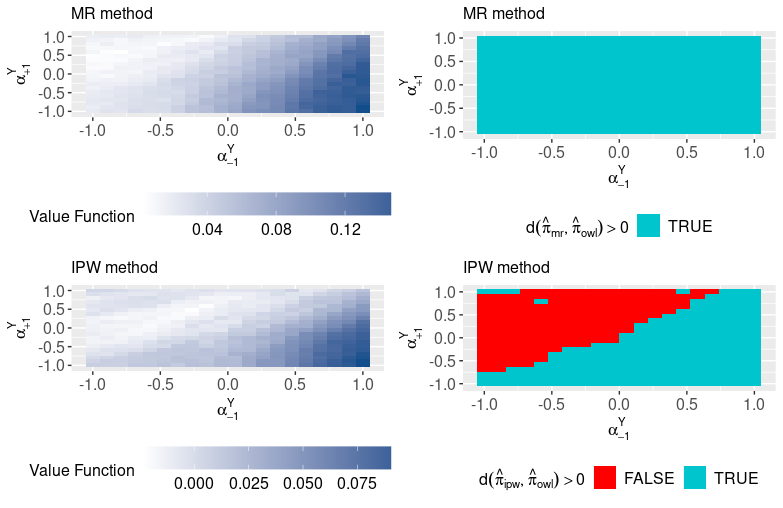}
    \caption{ENGAGE study: Sensitivity analysis of the difference between the proposed methods (MR and IPW) and the OWL method. The sensitivity parameters are $(\alpha_{+1}^Y, \alpha_{-1}^Y)$. The probability of compliance is fixed at 0.30. The left panel shows the magnitude of the difference between the proposed  and OWL methods ($d(\hat\pi_{mr},\hat\pi_{owl})$ and $d(\hat\pi_{ipw},\hat\pi_{owl})$) and the darker color the better performance of our methods). The right panel shows the area on the grid in which our methods outperform the OWL method where the red color indicates $d(\hat\pi_{mr},\hat\pi_{owl})<0$ or $d(\hat\pi_{ipw},\hat\pi_{owl})<0$.}
    \label{fig:compareOWL}
\end{figure}

\section{Discussion}
 
The proposed sensitivity analysis approach requires a model for the probability of being a complier given the outcome and observed covariates. As the number of covariates increases, the sensitivity analysis results become infeasible due to the increased number of sensitivity parameters.   To address this issue, we propose the following potential solutions. One approach is to set the parameters corresponding to covariates or a specific subset of the covariates to zero. Another approach is to use principal component regression (PCR) to reduce the dimension of the covariates. A third option is to model the probability of compliers as a function of the outcome $Y$ and the  residual of $Y$, i.e., $Y - E[Y|X]$. The residual is included to capture the effect of unmeasured confounders.

Our  methods can be  generalized in several directions. First, we can extend our method to dynamic treatment regimes, in which we identify a sequence of decision rules \citep{robins2004optimal, zhao2015new}. Second, although the treatment assigned is binary in this project, it would be interesting to develop a method that can identify the optimal dose instead of the optimal treatment \citep{zhou2021parsimonious, chen2016personalized}. Finally, in practice, it is possible that we have weak instruments \citep{hansen2006many}.  Therefore, understanding how weak instruments impact the estimate of the optimal treatment regimes or how to construct a robust approach to weak instruments  is an exciting problem \citep{ertefaie2018quantitative}.



\begin{funding}

This work was supported in part by R01DA048764, R61NS120240, and \\ R33NS120240 from the National Institutes of Health.

\end{funding}

\bibliographystyle{imsart-nameyear} 
\bibliography{Binary-bib}
\begin{center}
{\large\bf SUPPLEMENTARY MATERIAL}
\end{center}

\section{Proof of Theorems \ref{main-theo} and \ref{th:val}}
\label{appendixa}
\subsection{Proof of Theorem \ref{main-theo}}
 By definition, we have the value function 
 {\footnotesize
 \begin{align*}
     \mathcal{V}^c(\pi) =& E\left[ I\{\pi(X) = 1\}Y(1) + I\{\pi(X = - 1)\}Y(-1)|PS = S4 \right]\\
     =&E\left[ E\big[I\{\pi(X) = 1\}Y(1)|PS = S4, X \big] + E\big[I\{\pi(X) = -1\}Y(-1) |PS =S4, X \big]\right] \\
     =& E\Big[I\{\pi(X) = 1\} E\big[Y|Z = 1, A = 1, PS = S4, X\big] \\
     &+ I\{\pi(X) = -1\} E\big[Y|Z = -1, A = -1, PS =  S4, X\big] \Big].
 \end{align*}
}
We can verify that

\begin{equation*}
    \label{eq:great}
    f(y| Z = 1, A =1, X) = E\Big[f(y|Z, A, X)\frac{A(A+Z)(A+1)}{4f(Z,A|X)} \Big|X \Big],
\end{equation*}

and 

\begin{equation*}
    f(y| Z = -1, A = -1, X) = E\Big[f(y|Z, A, X)\frac{A(A+Z)(1-A)}{4f(Z,A|X)} \Big|X \Big].
\end{equation*}

We have

{\small
\begin{align*}
    E\big[&Y|Z=1, A = 1, PS = S4,X\big] = \int \frac{yw_{\alpha_{+1}}(1,1,X,y)}{\gamma(1,1,X)}f(y|Z = 1, A = 1, X)dy \\
    =& \frac{1}{\gamma(1,1,X)} \int yw_{\alpha_{+1}}(1,1,X,y) E[f(y|Z,A,X)\frac{A(A+Z)(A+1)}{4f(Z,A|X)} \mid X]dy\\
    =& \frac{1}{\gamma(1,1,X)} \int \int yw_{\alpha_{+1}}(1,1,X,y)\frac{a(a+z)(a+1)}{4f(z,a|X)}f(y|z,a,X)f(z,a|X)d(za)dy\\
    =& \frac{1}{\gamma(1,1,X)} \int \frac{a(a+z)(a+1)yw_{\alpha_{+1}}(1,1,X,y)}{4f(z,a|X)}f(y,z,a|X)d(yza) \\
    =& \frac{1}{\gamma(1,1,X)}E\Big[\frac{A(A+Z)(A+1)Yw_{\alpha_{+1}}(1,1,X,Y)}{4f(A,Z|X)}|X \Big] \\
    =& E\Big[\frac{A(A+Z)(A+1)Yw_{\alpha_{+1}}(1,1,X,Y)}{4f(A,Z|X)\gamma(1,1,X)}|X \Big]. 
\end{align*}
}

Similarly, we have 

{\small
\begin{equation*}
    E[Y|Z=-1, A = -1, PS = S4,X] = E\Big[\frac{A(A+Z)(1-A)Yw_{\alpha_{-1}}(-1,-1,X,Y)}{4f(A,Z|X)\gamma(-1,-1,X)} \mid X \Big]
\end{equation*}
}
Hence, 

\begin{align*}
    \mathcal{V}^c(\pi)=& E\Big[ E\Big[ \frac{I\{\pi(X) = 1\}A(A+Z)(1+A)Yw_{\alpha_{+1}}(1,1,X,Y)}{4f(A,Z|X)\gamma(1,1,X)} \\
    &+ \frac{I\{\pi(X) = -1\}A(A+Z)(1-A)Yw_{\alpha_{-1}}(-1,-1,X,Y) }{4f(A,Z|X)\gamma(-1,-1,X)} \mid X \Big]\Big]\\
    =& E\Big[ E\Big[ \frac{I\{\pi(X) = Z\}A(A+Z)Yw_{\alpha_Z}(A,Z, X, Y)}{\gamma(A,Z,X)f(A,Z|X)} \mid X \Big] \Big]\\
    =& E\Big[\frac{I\{\pi(X) = Z\}A(A+Z)Yw_{\alpha_Z}(A,Z, X, Y)}{\gamma(A,Z,X)f(A,Z|X)}\Big]. 
\end{align*}

\subsection{Proof of Theorem \ref{th:val}} \label{supp:proofth2}

 Defining $\frac{0}{0} = 0$, the value function for complier is

$$\mathcal{V}^c(\mathcal{D}) =  E\Big[\main \Big].$$

 To find the efficient influence function for $\mathcal{V}(\mathcal{D})$, we need to find the canonical gradient $G$ for $\mathcal{V}^c(\mathcal{D})$ in the nonparametric model $\mathcal{M}_{np}$. In other words, we want to find $G$ such that $E[G] = 0$ and for any one-dimensional parametric submodel of $\mathcal{M}_{np}$, 
 
  \begin{equation*}
     \frac{\partial}{\partial t}\mathcal{V}^c(\mathcal{D})_t \Bigg|_{t=0} = E\left[GS(\mathcal{O};t) \right]\Bigg|_{t=0},
 \end{equation*}
 where $S(\mathcal{O},t) = \partial log f(\mathcal{O},t)/\partial t$. We have
{\small
\begin{align*}
     \frac{\partial}{\partial t} &\mathcal{V}^c_t(\mathcal{D}) \mid_{t = 0} = \frac{\partial}{\partial t} E_t\Big[\frac{A(Z + A)Yw_{\alpha_Z}(A,Z, X, Y)I\{\pi(X) = Z\}}{2\gamma(A,Z,X)f(Z|X)f(A|Z,X)}\Big] \Big |_{t = 0} \\ 
     =& \int \frac{A(Z + A)Yw_{\alpha_Z}(Z,A, Y)I\{\pi(X) = Z\}}{2\gamma(A,Z,X)f(Z|X)f(A|Z,X)} \deri f_t(o)\mid_{t=0}do\\
     & + \int \deri \Big(\main\Big) \Big|_{t=0} f(o)do \\
     =& E\Big[\main S(\mathcal{O})\Big]\\
     & + E\Big[\deri \Big(\main \Big) \Big]\Big|_{t=0} \\
     =&  E\left[\main S(\mathcal{O})\right] \\
     -& E\left[\frac{A(Z + A)Yw_{\alpha_Z}(A,Z, X, Y)I\{\pi(X) = Z\}}{2\gamma(A,Z,X)^2f(Z|X)^2f(A|Z,X)^2}\left\{\gamma(A,Z,X)f(A \mid Z,X)\deri f_t(Z|X) \right. \right. \\
     &  \left. \left. + f_t(Z|X)f(A|Z,X) \deri \gamma_t(A,Z,X)+\gamma(A,Z,X)f(Z|X) \deri f_t(A\mid Z,X)\right\}\right] \Big|_{t=0} \\
     =&  E\left[\main S(\mathcal{O})\right]\\ 
     &  - E\left[\frac{ZAYw_{\alpha_Z}(A,Z, X, Y)I\{\pi(X) = Z\}}{\gamma(A,Z,X)^2f(Z|X)^2f(A|Z,X)^2}\left\{\gamma(A,Z,X)f(A |Z,X)\deri f_t(Z|X) \right. \right. \\
     &  \left. \left. + f_t(Z|X)f(A|Z,X) \deri \gamma_t(A,Z,X)+\gamma(Z,A, X)f(Z|X) \deri f_t(A\mid Z,X)\right\}\right] \Big |_{t=0} \\
     =& (I) - (II) - (III) - (IV).
\end{align*}
}
We have will look further into the term (II), (III), (IV)

\begin{align*}
    (II) =& E\Big[\frac{ A(Z + A)Yw_{\alpha_Z}(A,Z, X, Y)I(\pi(X) = Z)}{2\gamma(A,Z,X)f(Z|X)^2f(A|Z,X)} \deri f_t(Z|X) \Big] \Big|_{t=0} \\
    =& E\Big[ \main \frac{\deri f_t(Z|X)}{f(Z|X)} \Big] \Big|_{t=0}\\
    =& E\Big[ \main S(Z|X) \Big] \\
    =& E\Big[E[\main|Z,X] S(Z|X) \Big] \\ 
    =& E\Big[ \Big(E[\main|Z,X]\\
    &- E[\sum_z \frac{A(z + A)Yw_{\alpha_Z}(A,z,X,Y) I\{\pi(X) = z\}}{2\gamma(A,z,X)f(A|z,X)}| Z = z, X]\Big) S(Z,X)\Big] \\
    =& E\Big[ \Big(E[\main|Z,X]\\
    &- E\big[\sum_z \frac{A(z + A)Yw_{\alpha_z}(A,z,X,Y) I\{\pi(X) = z\}}{2\gamma(A,z,X)f(A|z,X)}| Z = z, X\big]\Big) S(\mathcal{O}) \Big] \\
    =& E\Big[\Big(\frac{I\{\pi(X) = Z\}}{f(Z|X)}E\big[\frac{A(Z + A)Yw_{\alpha_Z}(Z,A, Y)}{2\gamma(A,Z,X)f(A|Z,X)} \mid Z,X \big]\\
    &- \sum_z I\{\pi(X) = z\} E\big[\frac{A(A+z)Yw_{\alpha_Z}(z,A, Y)}{2\gamma(A,z,X)f(A|z,X)}|Z = z, X\big] \Big) S(\mathcal{O}) \Big].
\end{align*}

{\small
\begin{align*}
    (III) &= E\Big[ \frac{A(Z+A)Yw_{\alpha_Z}(A,Z, X, Y)I\{\pi(X) = Z\}}{2\gamma(A,Z,X)^2f(Z|X)f(A|Z,X)}\deri \gamma(A,Z,X)\Big]\Big|_{t=0} \\
    =& E\Big[\frac{A(Z + A)(A+1)Yw_{\alpha_{+1}}(1,1,X,Y)I(\pi(X) = Z)}{4\gamma(1,1,X)^2f(Z|X)f(A|Z,X)} \deri \gamma(1,1,X)\Big]\Big|_{t=0}\\
    &+ E\Big[\frac{A(Z + A)(1 - A)Yw_{\alpha_{-1} }(-1,-1,X,Y)I\{\pi(X) = Z\}}{4\gamma(-1,-1,X)^2f(Z|X)f(A|Z,X)} \deri \gamma(-1,-1,X)\Big]\Big|_{t=0}. 
\end{align*}
}

{\small
\begin{align*}
    &E\Big[\frac{A(Z + A)(A+1)Yw_{\alpha_{+1}}(1,1,X,Y)I(\pi(X) = Z)}{4\gamma(1,1,X)^2f(Z|X)f(A|Z,X)} \deri \gamma(1,1,X)\Big]\Big|_{t=0} \\
    =&E\Big[E\big[\sum_a \sum_z\frac{a(z + a)(a+1)Yw_{\alpha_{+1}}(1,1,X,Y)I\{\pi(X) = z\}}{4\gamma(1,1,X)^2}|A = a, Z = z, X\big]\\
    &*\deri \gamma(1,1,X)\Big]\Big|_{t=0} \\
    =& E\Big[\frac{I(\pi(X) = 1)\delta(1,1,X)}{\gamma(1,1,X)}\deri \gamma(1,1,X)\Big]\Big|_{t=0} \\
    =& E\Big[\frac{I(\pi(X) = 1)\delta(1,1,X)}{\gamma(1,1,X)}E\Big[\frac{A(A+Z)(A+1)}{4f(Z,A|X)} \\
    &\hspace{4cm} *\big[w_{\alpha_{+1}}(1,1,X,Y) - \gamma(1,1,X)\big]S(\mathcal{O})\Big|X\Big] \Big] \\
    =& E\Big[\frac{I(\pi(X) = 1)\delta(1,1,X)}{\gamma(1,1,X)}\frac{A(A+Z)(A+1)}{4f(Z,A|X)}\big[w_{\alpha_{+1}}(1,1,X,Y) - \gamma(1,1,X)\big]S(\mathcal{O})  \Big] 
\end{align*}
}

Similarly, we have 
   \begin{align*}
   &E\Big[\frac{A(Z + A)(1-A)Yw_{\alpha_{-1} }(-1,-1,X,Y)I(\pi(X) = Z)}{4\gamma(-1,-1,X)^2f(Z|X)f(A|Z,X)} \deri \gamma(-1,-1,X)\Big]\Big|_{t=0} \\
   =& E\Big[\frac{I(\pi(X) = -1)\delta(-1,-1,X)}{\gamma(-1,-1,X)}\frac{A(A+Z)(1-A)}{4f(Z,A|X)}\\
   &\hspace{0.5cm}*[w_{\alpha_{-1} }(-1,-1,X,Y) - \gamma(-1,-1,X)]S(\mathcal{O}) \Big].
\end{align*}
Putting the two terms together, we have 
{\small
\begin{align*}
    (III) = E\Bigg[\frac{A(A+Z)I\{\pi(X) = Z\}\delta(A,Z,X)}{2\gamma(A,Z,X)f(Z|X)f(A|Z,X)}\{w_{\alpha_Z}(Z,A,X,Y) - \gamma(A,Z,Y)\}S(\mathcal{O}) \Bigg].
\end{align*}
}
{\small
\begin{align*}
     (IV) =& E\Big[\frac{A(Z + A)Yw_{\alpha_Z}(A,Z, X, Y)I(\pi(X) = Z)}{\gamma(A,Z,X)f(Z|X)f(A|Z,X)} \deri f_t(A|Z,X) \Big]\Big|_{t=0} \\
    =& E\Big[ \main \frac{\deri f_t(A|Z, X)}{f(A|Z, X)} \Big]\Big|_{t=0} \\
    =& E\Big[ \main S(A|Z, X) \Big] \\
    =& E\Big[E[\main|Z,A,X] S(A|Z,X) \Big] \\ 
    =& E\Big[ \Big(E[\main|Z,A,X]\\
    &- E[\sum_a  \frac{a(Z+a)Yw_{\alpha_Z}(a,Z,X) I(\pi(X) = Z)}{2\gamma(A,Z,X)f(Z|X)}|A =a, Z, X]\Big) S(A,Z,X)\Big] \\
    =& E\Big[ \Big(E[\main|A, Z,X]\\
    &- \sum_a \frac{ I(\pi(X) = Z)}{2\gamma(a,Z,X)f(Z|X)} E\{ a(Z+a)Yw_{\alpha_Z}(a,Z,X,Y) \mid A=a,Z,X\} \Big) S(\mathcal{O})  \Big] \\
    =& E\Big[\frac{A(Z+A)I(\pi(X) = Z)E[Yw_{\alpha_Z}(A,Z, X, Y)|A,Z,X]}{2\gamma(A,Z,X)f(Z|X)f(A|Z,X)}\\
    &- \sum_a  \frac{a(a+Z)I\{\pi(X) = Z\} E[Yw_{\alpha_Z}(a,Z,X,Y)|A=a, Z, X]}{2\gamma(a,Z,X)f(Z|X)}S(\mathcal{O}) \Big]. 
\end{align*}
}

Thus, the $\deri \mathcal{V}^c(\mathcal{O}) |_{t=0}$ is 

\begin{align*}
     &E\Big[\main S(\mathcal{O})\Big] \\ 
     &- E\Big[\Big(\frac{I\{\pi(X) = Z)\}}{f(Z|X)}E\big[\frac{A(Z + A)Yw_{\alpha_Z}(A,Z,X, Y)]}{2\gamma(A,Z,X)f(A|Z,X)} \mid Z,X \big]\\
     &- \sum_z I\{\pi(X) = z\} E\big[\frac{A(A+z)Yw_{\alpha_Z}(A,z,X,Y) }{2\gamma(z,A, X)f(A|z,X)}|Z = z, X\big] \Big) S(\mathcal{O}) \Big] \\
     &- E\Bigg[\frac{A(A+Z)I\{\pi(X) = Z\}\delta(A,Z,X)}{2\gamma(A,Z,X)f(Z|X)f(A|Z,X)}\{w_{\alpha_Z}(A,Z, X, Y) - \gamma(Z,A, Y)\}S(\mathcal{O}) \Bigg] \\
     &- E\Big[\frac{A(Z+A)I\{\pi(X)=Z\}E[Yw_{\alpha_Z}(A,Z, X, Y)|A,Z,X]}{2\gamma(A,Z,X)f(Z|X)f(A|Z,X)}\\
     &- \sum_a  \frac{a(a+Z)I\{\pi(X) = Z\} E[Yw_{\alpha_Z}(a,Z,X,Y)|A=a, Z, X]}{2\gamma(a,Z,X)f(Z|X)}S(\mathcal{O}) \Big]. 
\end{align*}

Accordingly, the canonical gradient $\xi_V(O,Q,\gamma,\kappa,f_A,f_Z)$ is
\begin{align*}
    \xi_V(&O,Q,\gamma,\kappa,f_A,f_Z)= \main  \\&- \Big\{\frac{I(\pi(X) = Z)}{f(Z|X)}E\big[\frac{A(Z + A)Yw_{\alpha_Z}(A,Z, X, Y)}{2\gamma(A,Z,X)f(A|Z,X)} \mid Z,X \big] 
     \\&- \sum_z I(\pi(X) = z) E\big[\frac{A(A+z)Yw_{\alpha_z}(A,z,X,Y) }{2\gamma(A,z,X)f(A|z,X)}|Z = z, X\big] \Big\}\\
     &- \Big\{\frac{A(A+Z)I\{\pi(X) = Z\} \delta(A,Z,X)}{2\gamma(A,Z,X)f(Z|X)f(A|Z,X)}[w_{\alpha_Z}(A,Z, X, Y) - \gamma(A,Z,X)] \Big\}  \\ 
      &- \Big\{ \frac{A(Z+A)I\{\pi(X) = Z\}E[Yw_{\alpha_Z}(A,Z, X, Y)|A,Z,X]}{2\gamma(A,Z,X)f(Z|X)f(A|Z,X)}\\
      &- \sum_a  \frac{a(a+Z)I\{\pi(X) = Z\} E[Yw_{\alpha_Z}(a,Z,X,Y)|A=a, Z, X]}{2\gamma(a,Z,X)f(Z|X)} \Big\} - \mathcal{V}^c(\pi).
\end{align*}

We let 
\begin{align*}
      &\kappa(Z,X) = E[\frac{A(Z + A)Yw_{\alpha_Z}(A,Z, X, Y)}{2\gamma(A,Z,X)f(A|Z,X)}|Z,X].
\end{align*}

Under some regularity condition, the nuisance parameters $\hat f(A|Z,X), \\
\hat f(Z|X), \hat \kappa(Z,X), \hat \gamma(A,Z,X)$, $\hat Q(A,Z,X)$ converge in probability to $f^*(A|Z,X), \\ f^*(Z|X), \kappa^*(Z,X), \gamma^*(A,Z,X)$, $ Q^*(A,Z,X)$. 

We propose the following three models for our multiply robust estimator. Under each or union of those models, $E\left[\xi(O,Q,\gamma^*,\kappa^*,f^*_A,f^*_Z)\right] = \mathcal{V}^c(\pi) $
\begin{align*}
    &\mathcal{M}^\dagger_1: f(A|Z,X), \gamma(A,Z,X), f(Z|X) \text{ are correctly specified.} \\
    &\mathcal{M}^\dagger_2: f(A|Z,X), \gamma(A,Z,X), \kappa(Z,X) \text{ are correctly specified.} \\
    &\mathcal{M}^\dagger_3: \gamma(A,Z,X), f(Z|X), Q(A,Z,X) \text{ are correctly specified.}
\end{align*}

Under the model $\mathcal{M}^\dagger_1$, $f^*(A|Z,X) = f(A|Z,X), f^*(Z|X) = f(Z|X)$ and $\gamma^*(A,Z,X) = \gamma(A,Z,X)$, but $\kappa^*(Z,X) \ne \kappa(Z,X)$ and \\ $Q^*(A,Z,X) \ne Q(A,Z,X)$.

\begin{align*}
     &E\Bigg[\main \Bigg] \\
     &- E\left[ \Big\{\frac{I(\pi(X) = z)}{\gamma(A,Z,X)f(Z|X)}\kappa^{*}(Z,X) -  \sum_z\frac{I(\pi(X) = z)}{\gamma(A,z,X)} \kappa^{*}(z,X) \Big\} \right] \\
     &- E \left[ \frac{\hz I(\pi(X) = Z)\delta^*(A,Z,X)}{2\gamma(A,Z,X)f(A,Z|X)}\left\{w_{\alpha_Z}(A,Z, X, Y) - \gamma(A,Z,X)\right\} \right] \\ 
     &- E\Big[\Big\{ \frac{A(Z+A)I(\pi(X) = Z)Q^*(A,Z,X)}{2\gamma(A,Z,X)f(Z|X)f(A|Z,X)} \\
     &\hspace{1cm}- \sum_a \frac{Z(a + Z)I\{\pi(X) = Z\}Q^*(a, Z,X)}{2\gamma(A,Z,X)f(Z|X)}\Big\}\Big] \\ 
     =& E\Big[\main \Big] = E[Y_{\pi(X)}] = \mathcal{V}^c(\pi)
\end{align*}

Under the model $\mathcal{M}^\dagger_2$, $f^*(A|Z,X) =f(A|Z,X), \kappa^*(Z,X) = \kappa(Z,X)$, \\ $\gamma^*(A,Z,X) = \gamma(A,Z,X)$, but $Q^*(A,Z,X) \ne Q(A,Z,X)$ and $f^*(Z|X) \ne f(Z|X)$

\begin{align*}
     &E\left[ \frac{A(Z + A)Yw_{\alpha_Z}(A,Z, X, Y)I(\pi(X) = Z)}{2\gamma(A,Z,X)f^*(Z|X)f(A|Z,X)} \right] \\
     &- E\left[ \Big\{\frac{I(\pi(X) = Z)}{\gamma(A,Z,X)f^*(Z|X)}\kappa(Z,X) - \sum_z \frac{I(\pi(X) = z)}{\gamma(A,z,X)} \kappa(z,X) \Big\} \right] \\
     &- E\left[ \frac{\hz I(\pi(X) = Z)\delta^*(A,Z,X)}{2\gamma(A,Z,X)f^*(Z|X)f(A|Z,X)}\left\{w_{\alpha_Z}(A,Z, X, Y) - \gamma(A,Z,X)\right\} \right]\\
     &- E\Big[ \Big( \frac{A(Z+A)I(\pi(X) = Z)Q^*(A,Z,X)}{2\gamma(A,Z,X)f^*(Z|X)f(A|Z,X)} \\
     &\hspace{1cm}- \sum_a \frac{Z(a + Z)I\{\pi(X) = Z\}Q^*(a, Z,X)}{2\gamma(A,Z,X)f^*(Z|X)}\Big) \Big] \\ 
     =& E\Bigg[\frac{A(Z + A)Yw_{\alpha_Z}(A,Z, X, Y)I(\pi(X) = Z)}{2\gamma(A,Z,X)f^*(Z|X)f(A|Z,X)}  - \frac{I(\pi(X) = Z)}{\gamma(A,Z,X)f^*(Z|X)}\kappa(Z,X)\Bigg]\\
     &+ E\Bigg[\sum_z \frac{I(\pi(X) = z)}{\gamma(A,z,X)} \kappa(z,X) \Big\}\Bigg] \\
     &-  E\Bigg[\frac{\hz I(\pi(X) = X)\delta^*(A,Z,X)}{2\gamma(A,Z,X)f^*(Z|X)f(A|Z,X)}\{w_{\alpha_Z}(A,Z, X, Y) - \gamma(A,Z,X)\}\Bigg]\\
     &- E\Bigg[\frac{A(Z+A)I(\pi(X) = Z)Q^*(A,Z,X)}{2\gamma(A,Z,X)f^*(Z|X)f(A|Z,X)} \\
     &\hspace{1cm}- \sum_a \frac{a(Z + a)I\{\pi(X) = Z\}Q^*(a, Z,X)}{2\gamma(A,Z,X)f^*(Z|X)}\Bigg]\\
     =&  E\Bigg[\sum_z \frac{I(\pi(X) = z)}{\gamma(A,z,X)} \kappa(z,X) \Big\}\Bigg] = E[Y_{\pi(X)}] = \mathcal{V}^c(\pi).
\end{align*}

Under the model $\mathcal{M}^\dagger_3$, $f^*(Z|X) = f(Z|X), Q^*(A,Z,X) = Q(A,Z,X)$ and $\gamma^*(A,Z,X) = \gamma(A,Z,X)$, but $f^*(A|Z,X) \ne f(A|Z,X)$ and \\$\kappa^*(Z,X) \ne \kappa(Z,X)$

\begin{align*}
     &E\Bigg[\frac{A(Z+A)Yw_{\alpha_Z}(A,Z, X, Y)}{2\gamma(A,Z,X)f(Z|X)f^*(A|Z,X)} -\frac{A(Z+A)I(\pi(X) = Z)Q(A,Z,X)}{2\gamma(A,Z,X)f(Z|X)f^*(A|Z,X)} \Bigg] \\ 
     &- E\Bigg[\frac{I(\pi(X) = Z)}{\gamma(A,Z,X)f(Z|X)}\kappa^*(Z,X) - \sum_z \frac{I(\pi(X) = z)}{\gamma(A,z,X)} \kappa^*(z,X) \Bigg] \\
     &-  E\Bigg[\frac{\hz I(\pi(X) = Z)\delta(A,Z,X)}{2\gamma(A,Z,X)f(Z|X)f^*(A|Z,X)}\{w_{\alpha_Z}(A,Z, X, Y) - \gamma(A,Z,X)\}\Bigg] \\ 
     &+ E\Bigg[\sum_a \frac{Z(a + Z)I\{\pi(X) = Z\}Q(a, Z,X)}{2\gamma(a,Z,X)f(Z|X)}\Bigg] \\ 
     =& E\Bigg[ \sum_a \frac{Z(a + Z)I\{\pi(X) = Z\}Q(a, Z,X)}{2\gamma(a,Z,X)f(Z|X)}\Bigg] = E[Y_{\pi(X)}] = \mathcal{V}^c(\pi).
\end{align*}

\section{Multiply robust \texorpdfstring{$\Delta$}{Delta}}
\label{sec:supp-Deltamr}
\subsection{The estimator}

The parameter $\Delta(X) = E\big[Y(1) - Y(-1)\mid X, PS = S4\big]$ is not pathwise differentiable, which makes the construction of a multiply robust estimator challenging. However, the expected value of $\Delta(X)$ with respect to the distribution of $X$ (i.e., $\psi = E\{\Delta(X) \mid PS = S4\}$) is pathwise differentiable. Hence the canonical gradient of $\psi$ exists in nonparametric models, which is  the corresponding efficient influence function \citep{bickel1993efficient}. This is helpful because  an intuitive candidate for the unknown function $\Delta(X)$ would be a component
of the efficient influence function for  $\psi$. 

 
\begin{theorem} \label{th:phi}
   Under a non-parametric model, the efficient influence function  for $\psi=E\{\Delta(X) \mid PS = S4\}$ is given by
\begin{align*}
 \phi_{\psi}(\mathcal{O},Q,\gamma,\delta,f) =&\frac{ZA(A+Z)}{2\gamma(A,Z,X)f(A,Z|X)}\Big[Yw_{\alpha_Z}(A,Z, X, Y) - Q(A,Z,X) \\
 &- \delta(A,Z,X)\big\{w_{\alpha_Z}(A,Z, Y) - \gamma(A,Z,X)\big\} \Big]+ \Delta(X)-\psi,
\end{align*}
  where 
  $\delta(A,Z,X) = \frac{Q(A,Z,X)}{\gamma(A,Z,X)}$.
\end{theorem}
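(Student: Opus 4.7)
The plan is to recognize $\psi$ as a smooth functional of the observed-data law and then derive its EIF via the standard pathwise-derivative method. Applying Theorem~\ref{main-theo} to the two constant rules $\pi\equiv 1$ and $\pi\equiv -1$, and using the identity $I\{Z=1\}-I\{Z=-1\}=Z$ on $\mathcal{Z}=\{-1,1\}$, I would first write
\[
\psi \;=\; \mathcal{V}^c(\pi\equiv 1)-\mathcal{V}^c(\pi\equiv -1) \;=\; E\!\left[\frac{ZA(A+Z)Yw_{\alpha_Z}(A,Z,X,Y)}{2\gamma(A,Z,X)f(A,Z|X)}\right],
\]
which presents $\psi$ as depending on $P$ through the joint density of $\mathcal{O}$ and through the nuisances $\gamma(A,Z,X)$, $f(Z|X)$, and $f(A|Z,X)$.

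I would then carry out the same pathwise-derivative calculation used in the supplementary proof of Theorem~\ref{th:val}, specialized to this identification. Introducing a regular one-dimensional submodel $P_t$ with score $S(\mathcal{O})=\partial_t\log f_t(\mathcal{O})|_{t=0}$, the product rule decomposes $\partial_t\psi_t|_{t=0}$ into four pieces, corresponding to perturbations of (i) the outer joint density, (ii) $\gamma$, (iii) $f(Z|X)$, and (iv) $f(A|Z,X)$. Each piece is rearranged into the canonical form $E[\phi_j(\mathcal{O})S(\mathcal{O})]$ by conditioning on the appropriate sub-$\sigma$-field and by invoking $E[S(Z|X)|X]=0$, $E[S(A|Z,X)|Z,X]=0$, and the identity $\partial_t\gamma_t(a,z,X)|_{t=0}=E\{(w_{\alpha_z}-\gamma(a,z,X))S(\mathcal{O})\mid A=a,Z=z,X\}$. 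These are precisely the manipulations of Theorem~\ref{th:val}, only with the factor $I\{\pi(X)=Z\}$ replaced throughout by the constant $Z$.

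Next I would combine the four pieces and check that the sum equals $E[\phi_\psi(\mathcal{O})S(\mathcal{O})]$ with the stated $\phi_\psi$. Two structural checks pin down this form uniquely. Conditional on $(A,Z,X)$ the bracket $Yw_{\alpha_Z}-Q-\delta(w_{\alpha_Z}-\gamma)$ has zero mean because $E[Yw_{\alpha_Z}|A,Z,X]=Q$ and $E[w_{\alpha_Z}|A,Z,X]=\gamma$, so the ``inner'' part of $\phi_\psi$ is orthogonal to any score of the conditional law of $Y$. Moreover, the blip identity $\Delta(X)=Q(1,1,X)/\gamma(1,1,X)-Q(-1,-1,X)/\gamma(-1,-1,X)$ together with $\delta=Q/\gamma$ collapses the $\kappa$-type and $\sum_a$-type correction terms seen in Theorem~\ref{th:val} into the compact additive summand $\Delta(X)-\psi$, which is exactly the contribution picked up from the $X$-score $S(X)$.

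The main obstacle is bookkeeping rather than any new idea: ensuring that the $Z=+1$ and $Z=-1$ terms produced by the four perturbation pieces combine with the correct signs. A constructive way to execute this is to carry out the $Z=+1$ half first (which reproduces the EIF contribution for $E[Y(1)\mid PS=S4]$), then the $Z=-1$ half, and take the difference; the constant-$\pi$ structure of each half is what makes the merger yield the clean blip-form answer instead of the $\pi$-indexed expressions that appear for Theorem~\ref{th:val}.
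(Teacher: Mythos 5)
Your proposal is correct, but it reaches the efficient influence function by a genuinely different route than the paper. The paper's proof never touches the inverse-probability-weighted representation: it writes $\psi$ in outcome-regression form as $E\big[Q(1,1,X)/\gamma(1,1,X)-Q(-1,-1,X)/\gamma(-1,-1,X)\big]$, applies the quotient rule to $Q_t(z,z,X)/\gamma_t(z,z,X)$ along the submodel, and computes $\partial_t Q_t$ and $\partial_t\gamma_t$ through the inverse-weighting identity $f(y\mid A{=}z,Z{=}z,X)=E\big[f(y\mid A,Z,X)\,A(A{+}Z)(1{+}zA)/\{4f(A,Z\mid X)\}\mid X\big]$; the summand $\Delta(X)-\psi$ falls out directly from the $S(X)$ contribution, and no $\kappa$- or $f(Z\mid X)$-perturbation terms ever appear. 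Your route instead differentiates the IPW representation $E\big[ZA(A{+}Z)Yw_{\alpha_Z}/\{2\gamma f(A,Z\mid X)\}\big]$ obtained from Theorem~\ref{main-theo} at the two constant rules, reusing the four-way decomposition from the proof of Theorem~\ref{th:val} with $I\{\pi(X)=Z\}$ replaced by $Z$. That is legitimate (the EIF is unique in the nonparametric model), and the collapse you anticipate does occur; the precise mechanism, which you should state explicitly, is the identity $\kappa(Z,X)=\sum_a a(a{+}Z)Q(a,Z,X)/\{2\gamma(a,Z,X)\}=\delta(Z,Z,X)$ at the truth, so the $Z\kappa(Z,X)/f(Z\mid X)$ term produced by the $f(Z\mid X)$-score cancels against the $Z\delta(Z,Z,X)/f(Z\mid X)$ term produced by the $f(A\mid Z,X)$-score, leaving only the centering term $\sum_z z\,\kappa(z,X)=\delta(1,1,X)-\delta(-1,-1,X)=\Delta(X)$. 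What your route buys is wholesale reuse of the Theorem~\ref{th:val} computation and a built-in consistency check between the two theorems; what the paper's route buys is a shorter, self-contained derivation in which the $\kappa$ terms never arise. Your sanity checks (the bracket having conditional mean zero given $(A,Z,X)$, and $E[\phi_\psi\mid X]=\Delta(X)-\psi$) are both correct.
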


 \begin{proof}
 To find the efficient influence function for $\Delta$, we need to find the canonical gradient $G$ for $\Delta$ in the nonparametric model $\mathcal{M}_{np}$. In other words, we want to find $G$ such that $E[G] = 0$ and for any one-dimensional parametric submodel of $\mathcal{M}_{np}$, 
 
 \begin{equation*}
     \frac{\partial}{\partial t}\Delta_t \Bigg|_{t=0} = E\left[GS(\mathcal{O};t) \right]\Bigg|_{t=0},
 \end{equation*}
 
 where $S(\mathcal{O},t) = \partial log f(\mathcal{O},t)/\partial t$. We have

 {\small
 \begin{align*}
      \frac{\partial}{\partial t}\Delta_t \Bigg|_{t=0} 
      =& \frac{\partial}{\partial t} E_t[E_t[Y| Z = 1, PS = S4, X] - E_t[Y| Z = -1, PS =4, X] \Bigg|_{t = 0}\\
      =& \frac{\partial}{\partial t}  \int \left[ \int \frac{yw_{\alpha_{+1}}(1,1,X,y)}{\gamma_t(1,1,X)}f_t(y| Z = 1, A = 1, X)dy \right.\\
      & \left.- \int \frac{yw_{\alpha_{-1}}(-1,-1,X,y)}{\gamma_t(-1,-1,X)}f_t(y| Z = -1, A = -1, X)dy \right]f_t(X) dX \Bigg|_{t = 0}\\
      =& \frac{\partial}{\partial t} E_t \left[\frac{\int yw_{\alpha_{+1}}(1,1,X,y)f_t(y|A = 1, Z = 1, X)dy }{\gamma_t(1,1,X)} \right.\\
      &\left.- \frac{\int yw_{\alpha_{-1}}(-1,-1,X,y)f_t(y|A = -1, Z = -1, X)dy }{\gamma_t(-1,-1,X)} \right] \Bigg|_{t = 0}
 \end{align*}
} 
 We let 

\begin{align*}
    &Q_t(1,1,X) = \int yw_{\alpha_{+1}}(1,1,X,y)f_t(y|A = 1, Z = 1, X)dy\\
    &Q_t(-1,-1,X) = \int yw_{\alpha_{-1}}(-1,-1,X,y)f_t(y|A = -1, Z = -1, X)dy
\end{align*}

 We continue
 
\begin{align*}
    =& \frac{\partial}{\partial t} E_t \Big[\frac{Q_t(1,1,X)}{\gamma_t(1,1,X)} -  \frac{Q_t(-1,-1,X)}{\gamma_t(-1,-1,X)}\Big] \Bigg|_{t=0}\\
    =& \int \Big(\frac{Q(1,1,X)}{\gamma(1,1,X)} -  \frac{Q(-1,-1,X)}{\gamma(-1,-1,X)}\Big)\frac{\partial}{\partial t}f_t(x)|_{t=0}dx \\
    &+ \int f(x) \frac{\partial}{\partial t}\Big(\frac{Q_t(1,1,X)}{\gamma_t(1,1,X)} -  \frac{Q_t(-1,-1,X)}{\gamma_t(-1,-1,X)}\Big)\big|_{t=0}dx\\
    =& \int \Big(\frac{Q(1,1,X)}{\gamma(1,1,X)} -  \frac{Q(-1,-1,X)}{\gamma(-1,-1,X)}\Big)\frac{\partial}{\partial t}\log f_t(x)|_{t=0}f(x)dx\\
    &+ E\Big[\frac{\partial}{\partial t}\Big(\frac{Q_t(1,1,X)}{\gamma_t(1,1,X)} -  \frac{Q_t(-1,-1,X)}{\gamma_t(-1,-1,X)}\Big)|_{t=0}\Big]  \\
    =& E\Big[\Big(\frac{Q(1,1,X)}{\gamma(1,1,X)} -  \frac{Q(-1,-1,X)}{\gamma(-1,-1,X)}\Big)S(X)  \Big] \\
    &+ E\Big[\frac{\partial}{\partial t}\Big(\frac{Q_t(1,1,X)}{\gamma_t(1,1,X)} -  \frac{Q_t(-1,-1,X)}{\gamma_t(-1,-1,X)}\Big)\Big]\Big|_{t=0} \\
    =& E\Big[\Big(\frac{Q(1,1,X)}{\gamma(1,1,X)} -  \frac{Q(-1,-1,X)}{\gamma(-1,-1,X)}\Big)S(\mathcal{O})  \Big] \\
    &+ E\Big[\frac{\partial}{\partial t}\Big(\frac{Q_t(1,1,X)}{\gamma_t(1,1,X)} \Big)\Big]\Big|_{t=0} - E\Big[\frac{\partial}{\partial t} \Big(\frac{Q_t(-1,-1,X)}{\gamma_t(-1,-1,X)}\Big)\Big]\Big|_{t=0} \\
    =& (I) + (II) + (III)
\end{align*}

First, we consider the term

\begin{align*}
    (II) = E\Big[\frac{\gamma(1,1,X)\frac{\partial}{\partial t}Q_t(1,1,X) - Q(1,1,X)\frac{\partial}{\partial t}\gamma_t(1,1,X)}{\big[\gamma(1,1,X)\big]^2}\Big]\Big|_{t=0}.
\end{align*}
    
We can verify that

$$f(Y| Z = 1, A =1, X) = E\Big[f(Y|Z, A, X)\frac{A(A+Z)(A+1)}{4f(Z,A|X)} \Big|X \Big]$$

and 

$$f(Y| Z = -1, A = -1, X) = E\Big[f(Y|Z, A, X)\frac{A(A+Z)(1-A)}{4f(Z,A|X)} \Big|X \Big]$$

Now we have 
{\small
\begin{align*}
    \frac{\partial}{\partial t} &Q_t(1,1,X)\Big|_{t=0} = \int yw_{\alpha_{+1}}(1,1,X,y) \frac{\partial}{\partial t} f_t(y|A =1, Z = 1, X)\big|_{t=0}dy \\
    =& \int yw_{\alpha_{+1}}(1,1,X,y)E\Big[\frac{\partial}{\partial t}f_t(y|A, Z,X)\frac{A(A+Z)(A+1)}{4f(Z,A|X)}\Big|X\Big]\Big|_{t=0}dy\\
    =& \int yw_{\alpha_{+1}}(1,1,X,y)E\Big[S(y|A,Z,X)f(y|Z,A,X) \frac{A(A+Z)(A+1)}{4f(Z,A|X)} \Big|X \Big]dy \\
    =& \int \int yw_{\alpha_{+1}}(1,1,X,y)S(y|a,z,X)\frac{a(a+z)(a+1)}{4f(z|x)}f(y|z,a,X)f(z,a|X)dyd(az) \\
    =&  \int \int yw_{\alpha_{+1}}(1,1,X,y)(S(\mathcal{O}) -E[S(\mathcal{O})|a,z,X])\frac{a(a+z)(a+1)}{4f(z,a|X)}\\
    &\hspace{0.5cm}*f(y|z,a,X)f(z,a|X)dyd(az) \\
    =& E\Big[\frac{Yw_{\alpha_{+1}}(1,1,X,Y)A(A+Z)(A+1)}{4f(Z|X)}S(\mathcal{O}) \Big| X\Big]\\ 
    &- \int Q^+(a,z,X) \int S(\mathcal{O})\frac{a(a+z)(a+1)}{4f(a,z|X)}f(y|z,a,X)f(z,a|X)dyd(az) \\
    =& E\Big[\frac{Yw_{\alpha_{+1}}(1,1,X,Y)A(A+Z)(A+1)}{4f(A,Z|X)}S(\mathcal{O}) \Big| X\Big]\\
    &-  E\Big[\frac{Q^+(A,Z,X)A(A+Z)(A+1)}{4f(A,Z|X)}S(\mathcal{O}) \Big| X\Big] \\
    =& E\Big[\frac{A(A+Z)(A+1)}{4f(Z,A|X)}\big[Yw_{\alpha_{+1}}(1,1,X,Y) - Q^+(A ,Z ,X)\big]S(\mathcal{O})\Big|X\Big]
\end{align*}
}

where, for z = 1 or -1, 

\begin{align*}
    &Q^z(A,Z,X) = \int yw_{\alpha_z}(z,z,y)f(y|A,Z,X)dy\\
    &= E[Yw_{\alpha_z}(a,z,x,y)|A = Z = z, X] = Q(z,z,X)
\end{align*}

This is because $w_{\alpha_Z}(A,Z, X, Y) = 0$ for $A \ne Z$. 

Following the same step, we can calculate

{\small
\begin{align*}
    \frac{\partial}{\partial t} \gamma_t(1,1,X) \big |_{t=0} =& \int w_{\alpha_{+1}}(1,1,X,Y) \frac{\partial}{\partial t} f_t(y|A =1, Z = 1, X)\big |_{t=0}dy \\
    =& E\Big[\frac{A(A+Z)(A+1)}{4f(Z,A|X)}\big[w_{\alpha_{+1}}(1,1,X,Y) - \gamma(1,1,X)\big]S(\mathcal{O})\Big|X\Big]
\end{align*}
}

Therefore, we have 

\begin{align*}
    (II) =& E\Big[\frac{\gamma(1,1,X)\frac{\partial}{\partial t}Q_t(1,1,X) - Q(1,1,X)\frac{\partial}{\partial t}\gamma_t(1,1,X)}{\big[\gamma(1,1,X)\big]^2}\Big]\Big |_{t=0} \\
    =& E\Big[\frac{\frac{\partial}{\partial t}Q_t(1,1,X)}{\gamma(1,1,X)}\Big]\Big |_{t=0} - E\Big[ \frac{Q(1,1,X)\frac{\partial}{\partial t}\gamma_t(1,1,X)}{\big[\gamma(1,1,X)\big]^2} \Big]\Big |_{t=0} \\
    =& E\Big[\frac{A(A+Z)(A+1)}{4\gamma(1,1,X)f(A,Z|X)}[Yw_{\alpha_{+1}}(1,1,X,Y) - Q(1,1,X)]S(\mathcal{O})\Big]\\
    &- E\Big[\frac{Q(1,1,X)A(A + Z)(A + 1)}{4\gamma(1,1,X)^2f(A,Z|X)}[w_{\alpha_{+1}}(1,1,X,Y) - \gamma(1,1,X)S(\mathcal{O}) \Big]\\
    =& E\Big[\frac{A(A+Z)(A+1)}{\gamma(1,1,X)4f(A,Z|X)}[Yw_{\alpha_{+1}}(1,1,X,Y) - Q(1,1,X)]S(\mathcal{O})\Big]\\
    &- E\Big[\frac{Q(1,1,X)A(A + Z)(A + 1)}{4\gamma(1,1,X)^2f(A,Z|X)}\big[w_{\alpha_{+1}}(1,1,X,Y) - \gamma(1,1,X)\big]S(\mathcal{O}) \Big]
\end{align*}

Consider 

\begin{align*}
    (III) = E\Big[\frac{\gamma(-1,-1,X)\frac{\partial}{\partial t}Q_t(-1,-1,X) - Q(-1,-1,X)\frac{\partial}{\partial t}\gamma_t(-1,-1,X)}{[\gamma(-1,-1,X)]^2}\Big]\Big |_{t=0}
\end{align*}

Following the same argument in the previous part, we have

\begin{align*}
     \frac{\partial}{\partial t} Q_t(-1,-1,X)\Big |_{t=0} = E\Big[\frac{A(A+Z)(1-A)}{4f(A,Z|X)}\big[&Yw_{\alpha_{-1} }(-1,-1,X,Y)\\
     &- Q(-1,-1,X)\big]S(\mathcal{O})\Big|X\Big]
\end{align*}

\begin{align*}
    \frac{\partial}{\partial t} \gamma_t(-1,-1,X)\Big |_{t=0} =  E\Big[\frac{A(A+Z)(1-A)}{4f(A,Z|X)}\big[&w_{\alpha_{-1} }(-1,-1,X,Y)\\ &- \gamma(-1,-1,X)\big]S(\mathcal{O})\Big|X\Big]
\end{align*}

Hence, we have

{\small
\begin{align*}
    &(III) = E\Big[\frac{\gamma(-1,-1,X)\frac{\partial}{\partial t}Q_t(-1,-1,X) - Q(-1,-1,X)\frac{\partial}{\partial t}\gamma_t(-1,-1,X)}{[\gamma(-1,-1,X)]^2}\Big]\Big |_{t=0} \\
    =& E\Big[\frac{\frac{\partial}{\partial t}Q_t(-1,-1,X)}{\gamma(-1,-1,X)}\Big]\Big |_{t=0} - E\Big[ \frac{Q(-1,-1,X)\frac{\partial}{\partial t}\gamma_t(-1,-1,X)}{\gamma(-1,-1,X)^2} \Big]\Big |_{t=0} \\
    =& E\Big[\frac{A(A+Z)(1-A)}{4\gamma(-1,-1,X)f(Z,A|X)}[Yw_{\alpha_{-1} }(-1,-1,X,Y) - Q(-1,-1,X)]S(\mathcal{O})\Big]\\
    &- E\Big[\frac{Q(-1,-1,X)A(A + Z)(1 - A)}{4\gamma(-1,-1,X)^2f(A,Z|X)}[w_{\alpha_{-1} }(-1,-1,X,Y) - \gamma(-1,-1,X)]S(\mathcal{O}) \Big]\\
     =& E\Big[\frac{A(A+Z)(1-A)}{4\gamma(-1,-1,X)f(A,Z|X)}[Yw_{\alpha_{-1} }(-1,-1,X,Y) - Q(-1,-1,X)]S(\mathcal{O})\Big]\\
    &- E\Big[\frac{Q(-1,-1,X)A(A + Z)(1 - A)}{4\gamma(-1,-1,X)^2f(A,Z|X)}[w_{\alpha_{-1} }(-1,-1,X,Y) - \gamma(-1,-1,X)]S(\mathcal{O})\Big]
\end{align*}
}

The efficient curve for $\Delta(X)$ is 

\begin{align}
    E&IF_\Delta = \Big(\frac{Q(1,1,X)}{\gamma(1,1,X)} -  \frac{Q(-1,-1,X)}{\gamma(-1,-1,X)}\Big) \nonumber \\
    &+ \frac{A(A+Z)(A+1)}{4f(Z,A|X)}\Big(\frac{1}{\gamma(1,1,X)}\big[Yw_{\alpha_{+1}}(1,1,X,Y) - Q(1,1,X)\big] \nonumber \\
    & - \nonumber \frac{Q(1,1,X)}{\gamma(1,1,X)^2}\big[w_{\alpha_{+1}}(1,1,X,Y) - \gamma(1,1,X)\big]\Big) \nonumber \\ 
    &- \frac{A(A+Z)(1-A)}{4f(A,Z|X)}\Big(\frac{1}{\gamma(-1,-1,X)}\big[Yw_{\alpha_{-1} }(-1,-1,X,Y) - Q(-1,-1,X)\big] \nonumber \\
    &-  \frac{Q(-1,-1,X)}{\gamma(-1,-1,X)^2}\big[w_{\alpha_{-1} }(-1,-1,X,Y) - \gamma(-1,-1,X)\big] \Big).
\end{align}

We let 

\begin{align*}
    \delta(1,1,X) =& \frac{Q(1,1,X)}{\gamma(1,1,X)} \\ 
    \delta(-1,-1,X) =& \frac{Q(-1,-1,X)}{\gamma(-1,-1,X)}. 
\end{align*}

Then (9) becomes

\begin{align*}
   EIF_{\Delta} =& \frac{A(A+Z)(A+1)}{4f(A,Z|X)\gamma(1,1,X)}\Big(Yw_{\alpha_{+1}}(1,1,X,Y) - Q(1,1,X)\\
   &- \delta(1,1,X)\{w_{\alpha_{+1}}(1,1,X,Y) - \gamma(1,1,X)\} \Big) \\ 
   &- \frac{A(A+Z)(1-A)}{4f(A,Z|X)\gamma(-1,-1,X)}\Big(Yw_{\alpha_{-1} }(-1,-1,X,Y) - Q(-1,-1,X) \\
   &- \delta(-1,-1,X)\{w_{\alpha_{-1} }(-1,-1,X,Y) - \gamma(-1,-1,X)\} \Big) \\
   &+ \delta(1,1,X) - \delta(-1,-1,X) \\
   =& \frac{ZA(A+Z)}{2f(A,Z|X)\gamma(A,Z,X)}\Big(Yw_{\alpha_Z}(A,Z, X, Y)\\
   &- Q(A,Z,X) - \delta(A,Z,X)\{w_{\alpha_Z}(A,Z, X, Y) 
   -\gamma(A,Z,X)\}\Big)+ \Delta(X).
\end{align*}

\end{proof}

\subsection{Multiply robustness property of the estimator}
\label{robustdelta}
  We have multiply robust estimators for $\Delta$

\begin{align*}
     \Delta_{mr} =& \Delta(X) +\frac{ZA(A+Z)}{2f(A,Z|X)\gamma(A,Z,X)}\Big[Yw_{\alpha_Z}(A,Z, X, Y) - Q(A,Z,X)\\
     &- \delta(A,Z,X)\{w_{\alpha_Z}(A,Z, X, Y) -  \gamma(A,Z,X)\}\Big]\\
      =& \frac{Q(1,1,X)}{\gamma(1,1,X)} - \frac{Q(-1,-1,X)}{\gamma(-1,-1,X)}\\
      &+\frac{ZA(A+Z)}{2f(A,Z|X)\gamma(A,Z,X)}\Big[Yw_{\alpha_Z}(A,Z, X, Y) - Q(A,Z,X)\\
      &- \frac{Q(A,Z,X)}{\gamma(A,Z,X)}\{w_{\alpha_Z}(A,Z, X, Y) -  \gamma(A,Z,X)\}\Big]\\
\end{align*}

Our proposed models are
\begin{align*}
    &\mathcal{M}_1: Q(A,Z,X), \gamma(A,Z,X) \text{ are correctly specified} \\ 
    &\mathcal{M}_2: f(A,Z|X), \gamma(A,Z,X) \text{ are correctly specified}
\end{align*}

Under suitable regularity condition, the nuisance parameters $\hat Q(A,Z,X), \\
\hat f(A,Z|X), \hat \gamma(A,Z,X)$ converge in probability to $Q^*(A,Z,X), f^*(A,Z|X),\\ \gamma^*(A,Z,X)$. It is sufficient to show that $E[\Delta^*_{mr}] = \Delta$ in the union of $\mathcal{M}_1$ and $\mathcal{M}_2$.

Suppose that only $\mathcal{M}_1$ holds, $Q^*(A,Z,X) = Q(A,Z,X),\\ \gamma^*(A,Z,X) = \gamma(A,Z,X)$ but $f^*(A,Z \mid X) \ne f(A,Z \mid X)$.
{\small
\begin{align*}
     E&[\Delta_{mr}] = E[\delta(1,1,X)]-  E[\delta(-1,-1,X)] \\ 
     &+  E\frac{A(A+Z)(A+1)}{4f^*(A,Z|X)\gamma(1,1,X)}\Big( E[Yw_{\alpha_{+1}}(1,1,X,Y)|A,Z,X] - Q(1,1,X)\\
     &- \delta(1,1,X)\{ E[w_{\alpha_{+1}}(1,1,X,Y)|A,Z,X] - \gamma(1,1,X)\} \Big) \\ 
   &- E\Big[\frac{A(A+Z)(1-A)}{4f^*(A,Z|X)\gamma(-1,-1,X)}\Big(E[Yw_{\alpha_{-1} }(-1,-1,X,Y)|A,Z,X] - Q(-1,-1,X)\\
   &- \delta(-1,-1,X)\{E[w_{\alpha_{-1} }(-1,-1,X,Y)|A,Z,X] - \gamma(-1,-1,X)\} \Big)\Big] \\
   =& E[\delta(1,1,X) - \delta(-1,-1,X)] = E\left\{\Delta(X)\right\} = \Delta.
\end{align*}
}
Suppose that only $\mathcal{M}_2$ holds, $f^*(A,Z,X) = f(A,Z\mid X)$ but $Q^*(A,Z,X) \ne Q(A,Z,X)$, and $\gamma^*(A,Z,X) \ne \gamma(A,Z,X)$ 

{\small
\begin{align*}
     E&[\Delta_{mr}] = E[\delta^{*}(1,1,X)]-  E[\delta^{*}(-1,-1,X)]\\ 
     &+ E\frac{A(A+Z)(A+1)}{4f^(A,Z|X)\gamma(1,1,X)}\Big( E[Yw_{\alpha_{+1}}(1,1,X,Y)|A,Z,X] - Q^{*}(1,1,X)\\
     &- \delta^{*}(1,1,X)\{ E[w_{\alpha_{+1}}(1,1,X,Y)|A,Z,X] - \gamma(1,1,X)\} \Big) \\ 
   &- E\Big[\frac{A(A+Z)(1-A)}{4f(A,Z|X)\gamma(-1,-1,X)}\Big(E[Yw_{\alpha_{-1} }(-1,-1,X,Y)|A,Z,X] - Q^{-*}(X)\\
   &- \delta^{-*}(X)\{E[w_{\alpha_{-1} }(-1,-1,X,Y)|A,Z,X] - \gamma(-1,-1,X)\} \Big)\Big] \\
   =& E[\delta^{*}(1,1,X)] - E\Big[\frac{A(A+Z)(A+1)Q^{*}(1,1,X)}{4\gamma(-1,-1,X)f(A,Z|X)} \Big]\\
   &+ E\Big[\frac{A(A+Z)(A+1)Yw_{\alpha_{+1}}(1,1,X,Y)}{4\gamma(-1,-1,X)f(A,Z|X)} \Big] \\
   &- E[\delta^{*}(-1,-1,X)] + E\Big[\frac{A(A+Z)(1-A)Q^{*}(-1,-1,X)}{4\gamma(-1,-1,X)f(A,Z|X)} \Big]\\
   &- E\Big[\frac{A(A+Z)(1-A)Yw_{\alpha_{-1} }(-1,-1,X,Y)}{4\gamma(-1,-1,X)f(A,Z|X)} \Big]\\
   =& E\Big[\frac{A(A+Z)(A+1)Yw_{\alpha_{+1}}(1,1,X,Y)}{4\gamma(-1,-1,X)f(A,Z|X)} - \frac{A(A+Z)(1-A)Yw_{\alpha_{-1} }(-1,-1,X,Y)}{4\gamma(-1,-1,X)f(A,Z|X)} \Big]\\
   =& E[\delta(1,1,X) - \delta(-1,-1,X)] = E\left\{\Delta(X)\right\} = \Delta.
\end{align*}
}

\subsection{The remainder terms} \label{collary1}

 \begin{corollary} \label{cor:phi}
Let $\hat{\psi}= P_n \phi(O,\hat Q,\hat\gamma,\hat\delta,\hat f)$ and $\psi_0 = P_0 \phi(O,Q,\gamma,\delta,f)$. Under Assumptions \ref{assump:coreiv} - \ref{assump:mono},  we have
\[
\hat{\psi} - \psi_0 = P_n \phi_{\psi}(O,Q,\gamma,\delta,f) - \psi_0 + R(\mathcal{M}_1,\mathcal{M}_2),
\]
where   the remainder term  $R(\mathcal{M}_1,\mathcal{M}_2)$ is given below. Moreover, under Assumption \ref{assump:vrate},  $R(\mathcal{M}_1,\mathcal{M}_2)=o_p(n^{-1/2})$ which implies the asymptotic linearity of the estimator $\hat\psi$. 
 \end{corollary}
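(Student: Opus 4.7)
The strategy is a standard second-order (von Mises) expansion for a one-step/plug-in estimator. Collect the nuisance components as $\eta=(Q,\gamma,\delta,f)$ with truth $\eta_0$, and decompose
\begin{align*}
\hat\psi-\psi_0
&=(P_n-P_0)\,\phi(O;\eta_0)
+(P_n-P_0)\bigl\{\phi(O;\hat\eta)-\phi(O;\eta_0)\bigr\}\\
&\quad+P_0\bigl\{\phi(O;\hat\eta)-\phi(O;\eta_0)\bigr\}.
\end{align*}
The first piece is exactly the linear influence-function contribution $P_n \phi_\psi(O;\eta_0)-\psi_0$ that appears in the statement; the remainder $R(\mathcal M_1,\mathcal M_2)$ to be controlled is the sum of the last two pieces.

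For the middle empirical-process term, since the paper contemplates data-adaptive (e.g., HAL) nuisance estimators, I would invoke cross-fitting: fitting $\hat\eta$ on an auxiliary split and evaluating the empirical mean on the held-out split renders the term $o_p(n^{-1/2})$ whenever $\|\hat\eta-\eta_0\|_2=o_p(1)$, sidestepping any Donsker-class requirement. Alternatively, a Donsker assumption on the nuisance classes plus consistency yields the same conclusion.

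The substantive work is the bias term $P_0\{\phi(O;\hat\eta)-\phi(O;\eta_0)\}$. I would evaluate it by iterated expectations, conditioning first on $(A,Z,X)$ and then on $X$, using the identities already deployed in the proof of Theorem~\ref{main-theo}, namely
\[
E\!\left[\frac{A(A+Z)(A+1)}{4f(A,Z\mid X)}\,g(X,Y)\,\Big|\,X\right]=E\bigl[g(X,Y)\mid A=Z=1,X\bigr],
\]
and its $A=Z=-1$ analogue. These collapse $P_0\phi(\hat\eta)$ to an $X$-level integral. Adding and subtracting $Q(\pm1,\pm1,X)/\gamma(\pm1,\pm1,X)$ and $\delta$ at the truth and regrouping, the difference factors into cross-products of nuisance errors of the schematic form
\[
\|\hat\gamma-\gamma\|_2\bigl(\|\hat Q-Q\|_2+\|\hat\delta-\delta\|_2\bigr)+\|\hat f-f\|_2\bigl(\|\hat Q-Q\|_2+\|\hat\gamma-\gamma\|_2\bigr),
\]
modulo boundedness of $\gamma$ and $f$ away from zero (which follows from Assumption~\ref{assump:positivity}). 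What guarantees this factorization, i.e., the absence of any first-order contribution, is precisely the double-robustness identity established in Section~\ref{robustdelta}: the map $\eta\mapsto P_0\phi(O;\eta)$ equals $\psi_0$ whenever either $\mathcal M_1$ or $\mathcal M_2$ holds, so its first-order derivative in the direction of a single nuisance error must vanish and only mixed products survive. A Cauchy--Schwarz bound combined with the relevant terms in Assumption~\ref{assump:vrate} then gives $R(\mathcal M_1,\mathcal M_2)=o_p(n^{-1/2})$, establishing asymptotic linearity.

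The main obstacle is the algebraic bookkeeping in the bias expansion: $\phi$ contains four nuisance components and two symmetric halves (for $Z=\pm 1$), and it is easy to miss a cancellation and inadvertently retain a first-order term. The cleanest path is to treat the two halves separately, use the multiply robust identity of Section~\ref{robustdelta} as the organizing principle for which cross-products must appear, and then verify term-by-term that every residual carries at least two factors of nuisance error.
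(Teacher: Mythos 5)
Your proposal follows essentially the same route as the paper's proof: the same von Mises decomposition into the linear influence-function term, an empirical-process term (which the paper suppresses implicitly rather than via your explicit cross-fitting/Donsker remark), and a second-order bias term expanded by add-and-subtract into cross-products of nuisance errors, with the multiply robust identity of Section~\ref{robustdelta} guaranteeing that no first-order term survives. The only cosmetic difference is that your schematic remainder omits the $\|\hat\gamma-\gamma\|_{2}^{2}$ cross-term that the paper's term-by-term expansion produces, but that term is still second order and is covered by Assumption~\ref{assump:vrate}, so nothing substantive changes.
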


\begin{proof}
    
By definition, $\psi = P_0 \phi_{\psi}(\mathcal{O},Q,\gamma,\delta,f) = (P_0 - P_n)\phi_{\psi}(\mathcal{O},Q,\gamma,\delta,f) + P_n \phi_{\psi}(\mathcal{O},Q,\gamma,\delta,f)$ and $\hat \psi = P_n \hat \phi_{\psi}(\mathcal{O}, \hat Q,\hat\gamma,\hat\delta,\hat f) $

\begin{align*}
    \hat{\psi} - \psi =& \hat \psi - (P_0 - P_n)\phi_\psi(\mathcal{O},Q,\gamma,\delta,f) - P_n\phi_\psi(\mathcal{O},\hat Q,\hat \gamma,\hat \delta,\hat f)\\
    =& (P_n - P_0)\phi_\psi(\mathcal{O},Q,\gamma,\delta,f) + \hat \psi - P_n \phi_\psi(\mathcal{O}, Q,\gamma,\delta,f) 
\end{align*}

Hence, the remainder term is defined 

{\small
\begin{align*}
    R &= \hat \psi - P_n \phi_\psi(\mathcal{O},\hat Q,\hat \gamma,\hat \delta,\hat f)\\
    &= P_n \frac{\hzz\Big\{\big[Y - \qhat \big] - \delhat\big[w_{\alpha_z}(A,Z,X,Y) - \ghat\big] \Big\}}{2\ghat \fhat} \\
    &\hspace{0.25cm}+ P_n \hat \Delta(X) \\
    &\hspace{0.25cm}- P_n \frac{\hzz \Big\{Y - \q - \del\big[w_{\alpha_z}(A,Z,X,Y) - \g \big]  \Big\} }{2\g\f}\\
    &\hspace{0.25cm}- P_n \Delta( X)\\
    &\hspace{0.25cm}\pm P_n \frac{\hzz\Big\{Y - \q - \del\big[w_{\alpha_z}(A,Z,X,Y) - \g] \Big\}}{2\ghat \fhat}\\ 
    &= P_n \frac{\hzz}{2\ghat\fhat}\Big\{\big[\q - \qhat\big]\\
    &\hspace{0.25cm}+ \del\big[w_{\alpha_z}(A,Z,X,Y) - \g\big]\\
    &\hspace{0.25cm}- \delhat\big[w_{\alpha_z}(A,Z,X,Y) - \ghat \big] \Big\}
    + P_n \Big[\Delta( X) - \hat \Delta(  X) \Big] \\
    &\hspace{0.25cm}+ P_n \frac{\hzz}{2}\Big\{Y - \q - \del\big[w_{\alpha_z}(A,Z,X,Y) - \g \big] \Big\}\\
    &\hspace{0.25cm}*\Bigg[\frac{1}{\ghat \fhat}   - \frac{1}{\g \faz} \Bigg] \\
    &=  P_n \frac{\hzz}{2\ghat\fhat}\Big\{\big[\q - \qhat\big]\Big\} \\
    &\hspace{0.25cm}+ P_n \frac{\hzz}{2\ghat\fhat}\Big\{\del\big[w_{\alpha_z}(A,Z,X,Y) - \g\big]\\
    &\hspace{0.25cm}- \delhat\big[w_{\alpha_z}(A,Z,X,Y) - \ghat \big]\Big\} \\
    &\hspace{0.25cm} + P_n \Big[\hat \Delta(  X) - \Delta(  X) \Big] + o_p(n^{-1/2})  \\
    &= (I) + (II) + (III) + o_p(n^{-1/2})\\
\end{align*}
}
Consider the term (I)

\begin{align*}
   (I) =& P_n \frac{\hzz}{2\ghat \fhat}\Big[\q - \qhat\Big]\\
   &\pm P_n \frac{\hzz}{2\g \fhat}\Big[\q - \qhat \Big] \\
   =& P_n \frac{\hzz}{2\fhat} \frac{\Big[\q - \qhat\Big]\Big[\g - \ghat\Big]}{\g \ghat}\\
   &+ P_n \frac{\hzz}{2\g \fhat}\Big[\q - \qhat\Big] \\
   &\pm P_n \frac{\hzz}{2\g \faz}\Big[\q - \qhat\Big]\\ 
   =& P_n \frac{\hzz}{2\fhat} \frac{\Big[\q - \qhat\Big]\Big[\g - \ghat\Big]}{\g \ghat}\\
   &+ P_n \frac{\hzz}{2\g} \frac{\Big[\q - \qhat\Big]}{\fhat \faz}\Big[\faz - \fhat\Big] \\
   &+ P_n \frac{\hzz}{2\g \faz}\Big[\q - \qhat \Big] \\
\end{align*}

Consider term (II)
{\small
\begin{align*}
    &(II) = \frac{\hzz}{2\ghat \fhat}\Big\{\del\big[w_{\alpha_z}(A,Z,X,Y) - \g\big]\\
    &- \delhat\big[w_{\alpha_z}(A,Z,X,Y) - \ghat\big] \Big\}\\
    &\pm P_n \frac{\hzz}{2\ghat \fhat}\del\big[w_{\alpha_z}(A,Z,X,Y) - \ghat \big]\\
    =& P_n \frac{\hzz}{2\ghat \fhat}\Big\{ \del\Big[\ghat - \g \Big]\\
    &- \big[\delhat - \del\big]\big[w_{\alpha_z}(A,Z,X,Y) - \ghat \big] \Big\}\\
    =& P_n \frac{\hzz}{2\ghat \fhat}\Big\{ \del\Big[\ghat - \g \Big] + o_p(n^{-1/2})\\
    =& P_n \frac{\hzz}{\ghat \fhat}\Big\{ \del\Big[\ghat - \g \Big]+ o_p(n^{-1/2}) \\
    &\pm P_n \frac{\hzz}{2\ghat \faz}\del\big[\ghat - \g\big]\\ 
    =& P_n \frac{\hzz}{2\ghat}\del\big[\ghat - \g \big]\Big[\frac{\faz - \fhat}{\faz \fhat} \Big]\\
    &+ P_n \frac{\hzz}{2\ghat \faz}\del\big[\ghat - \g\big]+ o_p(n^{-1/2})\\
    &\pm P_n \frac{\hzz}{2\faz \g}\del\Big[\ghat - \g\Big]\\ 
    =& P_n \frac{\hzz}{2\ghat}\del\big[\ghat - \g\big]\Big[\frac{\faz - \fhat}{\faz \fhat} \Big] \\
    &+ P_n \frac{\hzz}{2\faz}\del\big[\ghat - \g\big]\Big[\frac{\g - \ghat}{\g\ghat} \Big]\\
    &+ P_n \frac{\hzz}{2\faz \g}\del\Big[\ghat - \g\Big]+ o_p(n^{-1/2}) \\
    =& P_0 \frac{\qp}{\gp}\Big[\frac{\ghatp - \gp}{\gp} \Big]\\
    &- P_0 \frac{\qn}{\gn}\Big[\frac{\ghatn - \gn}{\gn} \Big]\\
    &+ P_n \frac{\hzz}{2\ghat}\del\big[\ghat - \g\big]\Big[\frac{\faz - \fhat}{\faz \fhat} \Big] \\
    &+ P_n \frac{\hzz}{2\faz}\del\big[\ghat - \g\big]\Big[\frac{\g - \ghat}{\g\ghat} \Big] \\
    &+ o_p(n^{-1/2})\\
\end{align*}
}
Putting everything together
{\small
\begin{align*}
    &R = (I) + (II) + (III) \\
    =& P_n \frac{\hzz}{2\fhat} \frac{\Big[\q - \qhat\Big]\Big[\g - \ghat\Big]}{\g \ghat}\\
    &+ P_n \frac{\hzz}{2\g} \frac{\Big[\q - \qhat\Big]}{\fhat \faz}\Big[\faz - \fhat\Big]\\
   &+ P_n \frac{\hzz}{2\g \faz}\Big[\q - \qhat \Big]\\
   &+ P_0 \frac{\qp}{\gp}\Big[\frac{\ghatp - \gp}{\gp} \Big]\\
   &- P_0 \frac{\qn}{\gn}\Big[\frac{\ghatn - \gn}{\gn} \Big]\\
    &+ P_n \frac{\hzz}{2\ghat}\del\big[\ghat - \g\big]\Big[\frac{\faz - \fhat}{\faz \fhat} \Big] \\
    &+ P_n \frac{\hzz}{2\faz}\del\big[\ghat - \g\big]\Big[\frac{\g - \ghat}{\g\ghat} \Big]\\
    &+ P_n \Big[\hat \Delta(  X) - \Delta(  X) \Big]+ o_p(n^{-1/2})\\
    =& P_n \frac{\hzz}{2\fhat} \frac{\Big[\q - \qhat\Big]\Big[\g - \ghat\Big]}{\g \ghat}\\
    &+ P_n \frac{\hzz}{2\g} \frac{\Big[\q - \qhat\Big]}{\fhat \faz}\Big[\faz - \fhat\Big] \\
    &+ P_n \frac{\hzz}{2\ghat}\del\big[\ghat - \g\big]\Big[\frac{\faz - \fhat}{\faz\fhat} \Big] \\
    &+ P_n \frac{\hzz}{2\faz}\del\big[\ghat - \g\big]\Big[\frac{\g - \ghat}{\g\ghat} \Big]\\ 
    &+ \Big\{P_n \Big[\hat \Delta(  X) - \Delta(  X) \Big] +  P_n \frac{\hzz}{2\g \faz}\Big[\q - \qhat \Big]\Big\} \\
     &+ P_0 \frac{\qp}{\gp}\Big[\frac{\ghatp - \gp}{\gp} \Big]\\
     &- P_0 \frac{\qn}{\gn}\Big[\frac{\ghatn - \gn}{\gn} \Big]+ o_p(n^{-1/2}).
\end{align*}
}
Consider the term

{\small
\begin{align*}
     &P_n \frac{\hzz}{2\g \faz}\Big[\q - \qhat \Big] +P_n \Big[\hat \Delta(  X) - \Delta(  X) \Big] \\
    =& -P_0\Bigg[ \frac{\qhatp}{\gp} - \frac{\qhatn}{\gn} \Bigg] + P_0\Bigg[ \frac{\qhatp}{\ghatp} - \frac{\qhatn}{\ghatn} \Bigg] + o_p(n^{-1/2}) \\
    =& -P_0 \qhatp\Big[ \frac{\ghatp - \gp}{\gp \ghatp} \Big]\\
    &+ P_0 \qhatn\Big[ \frac{\ghatn - \gn}{\gn \ghatn} \Big]
    + o_p(n^{-1/2}).
\end{align*}
}
The remainder term becomes

{\small
\begin{align*}
   &R = P_n \frac{\hzz}{2\fhat} \frac{\Big[\q - \qhat\Big]\Big[\g - \ghat\Big]}{\g \ghat}\\
   &+ P_n \frac{\hzz}{2\g} \frac{\Big[\q - \qhat\Big]}{\fhat \faz}\Big[\faz - \fhat\Big] \\
    &+ P_n \frac{\hzz}{2\ghat}\del\big[\ghat - \g\big]\Big[\frac{\faz - \fhat}{\faz \fhat} \Big] \\
    &+ P_n \frac{\hzz}{2\faz}\del\big[\ghat - \g\big]\Big[\frac{\g - \ghat}{\g\ghat} \Big]\\ 
    &+ \Big\{-P_0 \qhatp\Big[ \frac{\ghatp - \gp}{\gp \ghatp} \Big]\\
    &+ P_0 \qhatn\Big[ \frac{\ghatn - \gn}{\gn \ghatn} \Big] \Big\} \\
     &+ P_0 \frac{\qp}{\gp}\Big[\frac{\ghatp - \gp}{\gp} \Big]\\
    & - P_0 \frac{\qn}{\gn}\Big[\frac{\ghatn - \gn}{\gn} \Big] + o_p(n^{-1/2})\\
     =& P_0 \frac{\hzz}{2\fhat} \frac{\Big[\q - \qhat\Big]\Big[\g - \ghat\Big]}{\g \ghat}\\
     &+ P_0 \frac{\hzz}{2\g} \frac{\Big[\q - \qhat\Big]}{\fhat \faz}\Big[\faz - \fhat\Big] \\
    &+ P_0 \frac{\hzz}{2\ghat}\del\big[\ghat - \g\big]\Big[\frac{\faz - \fhat}{\faz \fhat} \Big] \\
    &+ P_0 \frac{\hzz}{2\faz}\del\big[\ghat - \g\big]\Big[\frac{\g - \ghat}{\g\ghat} \Big]\\ 
    &+ P_0 \frac{\ghatp - \gp}{\gp}\Big[\delta(1,1, X) - \hat \delta(1,1, X) \Big]\\
    &+  P_0 \frac{\ghatn - \gn}{\gn}\Big[\delta(-1,-1,  X) - \hat \delta(-1,-1,  X) \Big]  + o_p(n^{-1/2}).
\end{align*}
}
Thus, the remainder R will go to $o_p(n^{-1/2})$ under the assumption

 $(\|f- \hat f \|_{P_0}+\|\gamma- \hat \gamma \|_{P_0})(\|\delta- \hat \delta \|_{P_0}+\|\gamma- \hat \gamma \|_{P_0}+\|Q- \hat Q \|_{P_0}) = o_p(n^{-1/2})$.
\end{proof}

\subsection{Proof of Corollary 2}
We let 
\begin{align*}
    \kappa(Z,X) =& E[\frac{A(Z + A)Yw_{\alpha_Z}(A,Z, X, Y)}{2\gamma(A,Z,X)f(A|Z,X)}|Z,X]\\
    \kappa'(X) =& \sum_z I\{\pi(X) = z\}E\Big[\frac{A(z+A)Yw_{\alpha_z}(A,z,X,Y)}{2\gamma(A,z,X)f(A|z,X)} \mid Z=z, X\Big]\\
    =& \sum_z I\{\pi(X) = z\}\kappa(Z=z,X) \\
    \theta(Z,X) =& \sum_a \frac{a(a + Z)}{2\gamma(a,Z,X)}E[Yw_{\alpha_z}(a,Z,X)|A=a,Z,X]\\
    =& \sum_a \frac{a(a + Z)}{2\gamma(a,Z,X)}Q(a,Z,X)
\end{align*}
We have
\begin{align*}
    \hat{\mathcal{V}}^c(\pi) - \mathcal{V}^c(\pi) =& \hat{\mathcal{V}}^c(\pi) + (P_n - P_o)\xi_V - P_n\xi_V. 
\end{align*}
We have the remainder term

{\small
\begin{align*}
    R =& \hat{\mathcal{V}}^c(\pi) - P_n\xi_V \\ 
    =& P_n \frac{\hz Y\w \I}{2\hg \hf \hp} \\
    &- \Big\{ P_n \frac{\I \hkappa}{\hf} - P_n \hkappap \Big\}\\ 
    &- P_n\frac{\hz\ \hdelta [\w - \hg]}{2\hg \hf \hp}\\
    &- \Big\{ P_n\frac{\hz Y\w \I \hat{Q}(A,Z,X)}{2\hg \hf \hp} - P_n\frac{\hat{\theta}(Z,X)}{\hf} \Big\}\\
    &- P_n \frac{\hz Y \w \I}{2\gamma(A,Z,X) \f \p}\\
    &+ P_n \Big\{ \frac{\I\kappa(Z,X)}{\f} - \kappa'(X)\Big\}\\ 
    &+ P_n\frac{\hz \I \delta(A,Z,X)[\w - \gamma(A,Z,X)]}{2\gamma(A,Z,X) \f \p}\\ 
    &+ \Big\{P_n\frac{\hz Y \w\I Q(A,Z,X)}{2\gamma(A,Z,X) \f \p} - P_n\frac{\theta(Z,X)}{\f} \Big\} \\
    =& P_n\Big\{ \frac{\hz}{2} Y \w \I\Big(\frac{1}{\hg \hf \hp}\\
    &- \frac{1}{\gamma(A,Z,X) \f \p} \Big)\Big\}\\
    &+ P_n \Big(\frac{\I\kappa(Z,X)}{\f} - \frac{\I \hat{\kappa}(Z,X)}{\hf}\Big)\\
    &+ P_n\Big\{\hkappap - \kappa'(X)\Big\} +P_n\Big\{\frac{\hat{\theta}(Z,X)}{\hf} - \frac{\theta(Z,X)}{\f}\Big\} \\ 
    &+ P_n\Big\{\frac{\hz \I \delta(A,Z,X) [\w - \gamma(A,Z,X)]}{2\gamma(A,Z,X) \f \p}\\ 
    &- \frac{\hz \I \hat{\delta}(A,Z,X)[\w - \hg]}{2\hg \hf \hp}\Big\}\\ 
    &+ P_n\Big\{\frac{\hz \I Q(A,Z,X)}{2\gamma(A,Z,X) \f \p} - \frac{\hz \hat{Q}(A,Z,X) \I}{2\hg \hf \hp}\Big\} \\
    =& (1) +(2.1) + (2.2) + (4.2) + (3) + (4.1). 
\end{align*}
}
We will keep term (1), (2.2) and simplify (2.1), (3), (4.1), (4.2)

{\small
\begin{align*}
    &(2.1) = P_n(\frac{\I}{\f}\kappa(Z,X) - \frac{\I}{\hf}\hkappa)\\ 
    =& P_n\Big[\frac{\I}{\f}\kappa(Z,X) - \frac{\I}{\hf}\hkappa \Big]\\
    &\pm P_n\Big[\frac{\I}{\hf}\kappa(Z,X) \Big]\\ 
    =& P_n\kappa(Z,X)\Big[\frac{\I}{\f} - \frac{\I}{\hf}\Big] - P_n\hkappa\frac{\I}{\hf}\\
    &+ P_n\frac{\I}{\hf}\kappa(Z,X)\\
    =& P_n\kappa(Z,X)\Big[\frac{\I}{\f} - \frac{\I}{\hf}\Big] +P_n\frac{\I}{\hf}\Big[\kappa(Z,X) - \hkappa\Big]\\
    &\pm P_n\frac{\I}{\f}\Big[\kappa(Z,X) - \hkappa\Big]\\ 
    =& P_n\kappa(Z,X)\Big[\frac{\I}{\f} - \frac{\I}{\hf}\Big]\\
    &+ P_n\Big[\kappa(Z,X) - \hkappa\Big]\Big[\frac{\I}{\hf} - \frac{\I}{\f}\Big]\\ 
    &+ P_n\frac{\I}{\f}\Big[\kappa(Z,X) - \hkappa\Big]\\
    =& P_n\kappa(Z,X)\Big[\frac{\I}{\f} - \frac{\I}{\hf}\Big] + P_n\frac{\I}{\f}\big[\kappa(Z,X) - \hkappa\big]\\ 
    &+ P_n\big[\kappa(Z,X) - \hkappa\big]\big[\frac{\I}{\hf} - \frac{\I}{\f}\big]\\
    &\pm P_0\big[\kappa(Z,X) - \hkappa\big]\big[\frac{\I}{\hf} - \frac{\I}{\f}\big]\\
    =& P_n\kappa(Z,X) \I\big[\frac{1}{\f} - \frac{1}{\hf}\big]\\
    &+ P_n\frac{\I}{\f}\big[\kappa(Z,X) - \hkappa\big]\\
    &+ (P_n - P_0)\big[\kappa(Z,X) - \hkappa\big]\big[\frac{\I}{\hf} - \frac{\I}{\f}\big]\\
    &+ P_0\big[\kappa(Z,X) - \hkappa\big]\big[\frac{\I}{\hf} - \frac{\I}{\f}\big]\\
    =&  P_n\kappa(Z,X)\I\big[\frac{1}{\f} - \frac{1}{\hf}\big]\\ &+ P_n\frac{\I}{\f}\big[\kappa(Z,X) - \hkappa\big]\\
    &+ P_0\big[\kappa(Z,X) - \hkappa\big]\big[\frac{\I}{\hf} - \frac{\I}{\f}\big] + o_p(n^{-1/2})
\end{align*}
}
\begin{align*}
    (4.2) =& P_n\big[\frac{\hat{\theta}(Z,X)}{\hf} - \frac{\theta(Z,X)}{\f}\big]
    \pm P_n\big[\frac{\hat{\theta}(Z,X)}{\f}\big]\\
    =& P_n\hat{\theta}(Z,X)\big[\frac{1}{\hf} - \frac{1}{\f}\big] + P_n\frac{\hat{\theta}(Z,X) - \theta(Z,X)}{\f} \\ 
    =& P_n\hat{\theta}(Z,X)\big[\frac{1}{\hf} - \frac{1}{\f}\big] \pm P_n\theta(Z,X)\big[\frac{1}{\hf} - \frac{1}{\f}\big]\\
    &+ P_n\frac{\hat{\theta}(Z,X) - \theta(Z,X)}{\f}\\
    =& P_n\big[\hat{\theta}(Z,X) - \theta(Z,X)\big]\big[\frac{1}{\hf} - \frac{1}{\f}\big]\\
    &+ P_n\theta(Z,X)\big[\frac{1}{\hf} - \frac{1}{\f}\big]
    + P_n\frac{\hat{\theta}(Z,X) - \theta(Z,X)}{\f}\\
    =& P_0\big[\hat{\theta}(Z,X) - \theta(Z,X)\big]\big[\frac{1}{\hf} - \frac{1}{\f}\big]\\
    &+ P_n\theta(Z,X)\big[\frac{1}{\hf} - \frac{1}{\f}\big]
    + P_n\frac{\hat{\theta}(Z,X) - \theta(Z,X)}{\f}
\end{align*}

{\small
\begin{align*}
    (3) =& P_n\Big\{\frac{\hz \I \delta(A,Z,X)[\w - \gamma(A,Z,X)]}{2\gamma(A,Z,X) \f \p}\\
    &- \frac{\hz I\hat{\delta}(A,Z,X)[\w - \hg]}{2\hg \hf \hp}\Big\}\\
    &\pm \frac{A(Z+A)\I\delta(A,Z,X) [\w - \gamma(A,Z,X)]}{2\hg \hf \hp}\\
    =& P_n \frac{\hz}{2} \I \delta(A,Z,X)\Big(\frac{\w - \gamma(A,Z,X)}{\gamma(A,Z,X) \f \p}\\
    &- \frac{\w - \gamma(A,Z,X)}{\hg \hf \hp} \Big)\\  
    &+ P_n\frac{A(Z+A)\I\delta(A,Z,X) [\w - \gamma(A,Z,X)]}{2\hg \hf \hp}\\
    &- P_n\frac{A(Z+A)\I\hat{\delta}(A,Z,X)[\w - \hg]}{2\hg \hf \hp}\\
    =&(P_n - P_0)\frac{\hz}{2}\I\delta(A,Z,X)\Big(\frac{\w - \gamma(A,Z,X)}{\gamma(A,Z,X) \f \p}\\
    &- \frac{\w - \gamma(A,Z,X)}{\hg \hf \hp}\Big)\\
    &+ P_0\frac{\hz \I}{2}\delta(A,Z,X)\Big(\frac{\w - \gamma(A,Z,X)}{\gamma(A,Z,X) \f \p}\\
    &- \frac{\w - \gamma(A,Z,X)}{\hg \hf \hp} \Big)\\
    &+ P_n\frac{A(Z+A)\I\delta(A,Z,X)[\w - \gamma(A,Z,X)]}{2\hg \hf \hp}\\ 
    &- P_n\frac{A(Z+A)\I\hat{\delta}(A,Z,X)[\w - \hg]}{2\hg \hf \hp}\\
    =& P_n\frac{A(Z+A)\I\delta(A,Z,X)[\w - \gamma(A,Z,X)]}{2\hg \hf \hp}\\ 
    &- P_n\frac{A(Z+A)\I\hat{\delta}(A,Z,X)[\w - \hg]}{2\hg \hf \hp}\\
    &+ o_p(n^{-1/2}) \textbf{ \{since } P_0\{w_{\alpha_Z}(A,Z, X, Y) - \gamma(A,Z,X)\} = o_p(n^{-1/2}) \textbf{ \} }\\
    =& P_n\frac{A(Z+A)\I\delta(A,Z,X)[\w - \gamma(A,Z,X)]}{2\hg \hf \hp}\\ 
    &- P_n\frac{A(Z+A)\I\hat{\delta}(A,Z,X)[\w - \hg]}{2\hg \hf \hp}\\
\end{align*}
}

{\small
\begin{align*}
    \pm& P_n\frac{A(Z+A)\I\delta(A,Z,X)[\w - \hg]}{2\hg \hf \hp} + o_p(n^{-1/2})\\
    & = P_n\frac{A(Z+A)\I\delta(A,Z,X)[\hg - \gamma(A,Z,X)]}{2\hg \hf \hp}\\
    &+ P_n\frac{A(Z+A)\I[\w - \hg]}{2\hg \hf \hp} \\
    &*[\delta(A,Z,X) - \hat{\delta}(A,Z,X)] + o_p(n^{-1/2}) \\
    =& P_n\frac{A(Z+A)\I \delta(A,Z,X)[\hg - \gamma(A,Z,X)]}{\hg \hf \hp}\\ 
    &\pm P_n\frac{A(Z+A)\I\delta(A,Z,X)[2\hg - \gamma(A,Z,X)]}{\gamma(A,Z,X) \hf \hp}\\
    &+ P_n\frac{A(Z+A)\I[\w - \hg]}{2\hg \hf \hp}\\&*[\delta(A,Z,X) - \hat{\delta}(A,Z,X)] + o_p(n^{-1/2})\\
    =& P_n\frac{A(Z+A)\I\delta(A,Z,X)[\hg - \gamma(A,Z,X)]}{2\hf \hp}\\
    &*\big[\frac{1}{\hg} - \frac{1}{\gamma(A,Z,X)}\big]\\
    &+ P_n\frac{A(Z+A)\I\delta(A,Z,X)[\hg - \gamma(A,Z,X)]}{2\gamma(A,Z,X) \hf \hp}\\
    &+ P_n\frac{A(Z+A)\I[\w - \hg]}{2\hg \hf \hp}\\
    &*[\delta(A,Z,X) - \hat{\delta}(A,Z,X)]+ o_p(n^{-1/2})\\
    =& P_n\frac{A(Z+A)\I\delta(A,Z,X)[\hg - \gamma(A,Z,X)]}{2\gamma(A,Z,X) \hf \hp}\\
    & \pm P_n\frac{A(Z+A)\I\delta(A,Z,X)[\hg - \gamma(A,Z,X)]}{2\gamma(A,Z,X) \f \hp}\\
    &+   P_n\frac{A(Z+A)\I\delta(A,Z,X)[\hg - \gamma(A,Z,X)]}{2\hf \hp}\\&*\big[\frac{1}{\hg} - \frac{1}{\gamma(A,Z,X)}\big]\\
    &+ P_n\frac{A(Z+A)\I[\w - \hg]}{2\hg \hf \hp}\\&*[\delta(A,Z,X) - \hat{\delta}(A,Z,X)] + o_p(n^{-1/2})\\
    =& P_n\frac{A(Z+A)\I\delta(A,Z,X)[\hg -\gamma(A,Z,X)]}{2\gamma(A,Z,X) \hp}\big[\frac{1}{\hf} - \frac{1}{\f}\big]
\end{align*}
}

{\small
\begin{align*}
    &+ P_n\frac{\hz \I \delta(A,Z,X) [\hg - \gamma(A,Z,X)]}{2\gamma(A,Z,X) \f \hp}\\
    &+  P_n\frac{A(Z+A)\I\delta(A,Z,X)[\hg - \gamma(A,Z,X)]}{2\hf \hp}\\&*\big[\frac{1}{\hg} - \frac{1}{\gamma(A,Z,X)}\big] \\
    +& P_n\frac{A(Z+A)\I[\w - \hg][\delta(A,Z,X) - \hat{\delta}(A,Z,X)]}{2\hg \hf \hp}\\& + o_p(n^{-1/2})\\
    =& P_n\frac{\hz \I \delta(A,Z,X)[\hg - \gamma(A,Z,X)]}{2\gamma(A,Z,X) \f \hp}\\
    & \pm P_n\frac{\hz \I \delta(A,Z,X)[\hg - \gamma(A,Z,X)]}{2\gamma(A,Z,X) \f \p}\\
    &+ P_n\frac{A(Z+A)\I\delta(A,Z,X)[\hg -\gamma(A,Z,X)]}{\gamma(A,Z,X) \hp}\\&*\Big[\frac{1}{\hf} - \frac{1}{\f}\Big]\\
    &+ P_n\frac{A(Z+A)I\delta(A,Z,X)[\hg - \gamma(A,Z,X)]}{2\hf \hp}\big[\frac{1}{\hg} - \frac{1}{\gamma(A,Z,X)}\big]\\
    +& P_n\frac{A(Z+A)\I[\w - \hg][\delta(A,Z,X) - \hat{\delta}(A,Z,X)]}{\hg \hf \hp}\\& + o_p(n^{-1/2})\\
    =& P_n\frac{\hz \I \delta(A,Z,X)[\hg - \gamma(A,Z,X)]}{\gamma(A,Z,X) \f}\\&*\big[\frac{1}{\hp} - \frac{1}{\p}\big]\\
    &+ P_n\frac{\hz \I \delta(A,Z,X)[\hg - \gamma(A,Z,X)]}{2\gamma(A,Z,X) \f \p}\\ 
    &+ P_n\frac{A(Z+A)\I\delta(A,Z,X)[\hg -\gamma(A,Z,X)]}{2\gamma(A,Z,X) \hp}\\&*\big[\frac{1}{\hf} - \frac{1}{\f}\big]\\
    &+ P_n\frac{A(Z+A)\I\delta(A,Z,X)[\hg - \gamma(A,Z,X)]}{2\hf \hp}\\&*\big[\frac{1}{\hg} - \frac{1}{\gamma(A,Z,X)}\big]\\
    +& P_n\frac{A(Z+A)\I[\w - \hg][\delta(A,Z,X) - \hat{\delta}(A,Z,X)}{2\hg \hf \hp}\\& + o_p(n^{-1/2})
\end{align*}
    
\begin{align*}
    =& P_0\frac{\hz \I \delta(A,Z,X)[\hg - \gamma(A,Z,X)]}{2\gamma(A,Z,X) \f}\\&*\big[\frac{1}{\hp} - \frac{1}{\p}\big]\\
    &+ P_n\frac{\hz \I \delta(A,Z,X)[\hg - \gamma(A,Z,X)]}{\gamma(A,Z,X) \f \p}\\ 
    &+ P_0\frac{A(Z+A)\I\delta[\hg -\gamma(A,Z,X)]}{2\gamma(A,Z,X) \hp}\big[\frac{1}{\hf} - \frac{1}{\f}\big] \\
    &+ P_0\frac{A(Z+A)\I\delta(A,Z,X)[\hg - \gamma(A,Z,X)]}{2\hf \hp}\\&*\big[\frac{1}{\hg} - \frac{1}{\gamma(A,Z,X)}\big]\\
    &+ P_0\frac{A(Z+A)\I [\w - \hg]}{2\hg \hf \hp}\\&*[\delta(A,Z,X) - \hat{\delta}(A,Z,X)] + o_p(n^{-1/2})\\
\end{align*}
}

{\small
\begin{align*}
    (4.&1) = P_n\frac{\hz \I \Q}{2\gamma(A,Z,X) \f \p}\\& - P_n\frac{\hz \I \hat{Q}(A,Z,X)}{2\hg \hf \hp}
    \pm P_n\frac{\hz \I\Q}{2\hg \hf \hp} \\
    =& P_n \frac{\hz \I \Q}{2}\big[\frac{1}{\gamma(A,Z,X) \f \p}\\
    &- \frac{1}{\hg \hf \hp}\big]\\ 
    &+ P_n\frac{\hz \I[\Q - \hat{Q}(A,Z,X)]}{2\hg \hf \hp} \\
    =& P_n \frac{\hz\I\Q}{2}\big[\frac{1}{\gamma(A,Z,X) \f \p}\\
    &- \frac{1}{\hg \hf \hp}\big]\\
    &+ P_n\frac{\hz \I[\Q - \hat{Q}(A,Z,X)]}{2\hg \hf \hp}\\
    &\pm P_n\frac{\hz \I[\Q - \hat{Q}(A,Z,X)]}{2\gamma(A,Z,X) \hf \hp}\\
    =& P_n\frac{\hz \I[\Q - \hat{Q}(A,Z,X)]}{2\hf \hp}\big[\frac{1}{\hg} - \frac{1}{\gamma(A,Z,X)}\big]\\
    &+ P_n\frac{\hz \I[\Q - \hat{Q}(A,Z,X)]}{2\gamma(A,Z,X) \hf \hp}\\
    &+ P_n \frac{\hz \I\Q}{2}\big[\frac{1}{\gamma(A,Z,X) \f \p}\\
    &- \frac{1}{\hg \hf \hp}\big]\\
    =& P_n\frac{\hz \I [\Q - \hat{Q}(A,Z,X)]}{2\gamma(A,Z,X) \hf \hp}\\
    &\pm P_n\frac{\hz \I[\Q - \hat{Q}(A,Z,X)]}{2\gamma(A,Z,X) \hf \p}\\
    &+ P_n\frac{\hz \I[\Q - \hat{Q}(A,Z,X)]}{2\hf \hp}\Big[\frac{1}{\hg} - \frac{1}{\gamma(A,Z,X)}\Big]\\
    &+ P_n \frac{\hz \I \Q}{2}\Big[\frac{1}{\gamma(A,Z,X) \f \p}\\
    &- \frac{1}{\hg \hf \hp}\Big]\\
\end{align*}
}

{\small
\begin{align*}
    =& P_n\frac{\hz \I [\Q - \hat{Q}(A,Z,X)]}{2\gamma(A,Z,X) \hf}\big[\frac{1}{\hp} - \frac{1}{\p}\big]\\
    &+ P_n\frac{\hz \I [\Q - \hat{Q}(A,Z,X)]}{2\gamma(A,Z,X) \hf \p} \\
    &+ P_n\frac{\hz \I[Q(A,Z,X) - \hat{Q}(A,Z,X)]}{2\hf \hp}\Big[\frac{1}{\hg} - \frac{1}{\gamma(A,Z,X)}\Big]\\
    &+ P_n \frac{\hz \I\Q}{2}\Big[\frac{1}{\gamma(A,Z,X) \f \p}\\
    &- \frac{1}{\hg \hf \hp}\Big]\\
    =& P_n\frac{\hz \I[\Q - \hat{Q}(A,Z,X)]}{2\gamma(A,Z,X) \hf \p}\\
    & \pm P_n\frac{A(Z+A)\I[\Q - \hat{Q}(A,Z,X)]}{2\gamma(A,Z,X) \f \p}\\
    &+ P_n\frac{\hz \I [\Q - \hat{Q}(A,Z,X)]}{2\gamma(A,Z,X) \hp}\Big[\frac{1}{\hf} - \frac{1}{\f}\Big] \\
    &+ P_n\frac{\hz \I[\Q - \hat{Q}(A,Z,X)]}{2\hf \hp}\Big(\frac{1}{\hg} - \frac{1}{\gamma(A,Z,X)}\Big)\\
    &+ P_n \frac{\hz\I\Q}{2}\Big[\frac{1}{\gamma(A,Z,X) \f \p}\\
    &- \frac{1}{\hg \hf \hp}\Big]\\
    =& P_n\frac{\hz \I[\Q - \hat{Q}(A,Z,X)]}{2\gamma(A,Z,X) \f \p}\\
    &+ P_n\frac{\hz \I[\Q - \hat{Q}(A,Z,X)]}{2\gamma(A,Z,X) \p}\Big[\frac{1}{\hf} - \frac{1}{\f}\Big]\\
    &+ P_n\frac{\hz \I[\Q - \hat{Q}(A,Z,X)]}{2\gamma(A,Z,X) \hf}\Big[\frac{1}{\hp} - \frac{1}{\p}\Big] \\
    &+ P_n\frac{\hz \I[\Q - \hat{Q}(A,Z,X)]}{2\hf \hp}\Big[\frac{1}{\hg} - \frac{1}{\gamma(A,Z,X)}\Big]\\
    &+ P_n\frac{\hz \I\Q}{2}\Big[\frac{1}{\gamma(A,Z,X) \f \p}\\
    &- \frac{1}{\hg \hf \hp}\Big]
\end{align*}
}

{\small
\begin{align*}
    =& P_n\frac{\hz \I[\Q - \hat{Q}(A,Z,X)]}{2\gamma(A,Z,X) \f \p}\\ 
    &+P_0\frac{\hz \I[\Q - \hat{Q}(A,Z,X)]}{2\gamma(A,Z,X) \p}\Big[\frac{1}{\hf} - \frac{1}{\f}\Big]\\ 
    &+ P_0\frac{\hz \I[\Q - \hat{Q}(A,Z,X)]}{2\gamma(A,Z,X) \hf}\Big[\frac{1}{\hp} - \frac{1}{\p}\Big] \\
    &+ P_0\frac{\hz \I[\Q - \hat{Q}(A,Z,X)]}{2\hf \hp}\Big[\frac{1}{\hg} - \frac{1}{\gamma(A,Z,X)}\Big]\\
    &+ P_0\frac{\hz \I\Q}{2}\Big[\frac{1}{\gamma(A,Z,X) \f \p}\\
    &- \frac{1}{\hg \hf \hp}\Big]\\
\end{align*}
}
In summary, we have 
{\small
\begin{align*}
    (2.1) =& P_n\kappa(Z,X)\I\big[\frac{1}{\f} - \frac{1}{\hf}\big]\\
    &+ P_n\frac{\I}{\f}\big[\kappa(Z,X) - \hkappa\big]\\
    &+ P_0\big[\kappa(Z,X) - \hkappa\big]\big[\frac{\I}{\hf} - \frac{\I}{\f}\big] + o_p(n^{-1/2})\\
    (3) =& P_0\frac{\hz \I \delta(A,Z,X)[\hg - \gamma(A,Z,X)]}{2\gamma(A,Z,X) \f}\\&*\big[\frac{1}{\hp} - \frac{1}{\p}\big]\\
    &+ P_n\frac{\hz \I \delta(A,Z,X)[\hg - \gamma(A,Z,X)]}{\gamma(A,Z,X) \f \p}\\ 
    &+ P_0\frac{A(Z+A)\I\delta(A,Z,X)[\hg -\gamma(A,Z,X)]}{2\gamma(A,Z,X) \hp}\\&*\big[\frac{1}{\hf} - \frac{1}{\f}\big] \\
    &+ P_0\frac{A(Z+A)\I\delta(A,Z,X)[\hg - \gamma(A,Z,X)]}{2\hf \hp}\\&*\big[\frac{1}{\hg} - \frac{1}{\gamma(A,Z,X)}\big]\\
    &+ P_0\frac{A(Z+A)\I [\w - \hg]}{2\hg \hf \hp}\\
    &*[\delta(A,Z,X) - \hat{\delta}(A,Z,X)]+ o_p(n^{-1/2})\\
    (4.1) =& P_n\frac{\hz \I[\Q - \hat{Q}(A,Z,X)]}{2\gamma(A,Z,X) \f \p}\\ 
    &+P_0\frac{\hz \I[\Q - \hat{Q}(A,Z,X)]}{2\gamma(A,Z,X) \p}\Big[\frac{1}{\hf} - \frac{1}{\f}\Big]\\ 
    &+ P_0\frac{\hz \I[\Q - \hat{Q}(A,Z,X)]}{2\gamma(A,Z,X) \hf}\\&*\Big[\frac{1}{\hp} - \frac{1}{\p}\Big] \\
    &+ P_0\frac{\hz \I[\Q - \hat{Q}(A,Z,X)]}{2\hf \hp}\\&*\Big[\frac{1}{\hg} - \frac{1}{\gamma(A,Z,X)}\Big]\\
    &+ P_0\frac{\hz \I\Q}{2}\\&*\Big[\frac{1}{\gamma(A,Z,X) \f \p}
    - \frac{1}{\hg \hf \hp}\Big]
\end{align*}
}
{\small
\begin{align*}
    (4.2) =& P_0\big[\hat{\theta}(Z,X) - \theta(Z,X)\big]\big[\frac{1}{\hf} - \frac{1}{\f}\big]\\
    &+ P_n\theta(Z,X)\big[\frac{1}{\hf} - \frac{1}{\f}\big] + P_n\frac{\hat{\theta}(Z,X) - \theta(Z,X)}{\f}
\end{align*}
}
Putting everything together, we have
{\small
\begin{align*}
    &R = P_n\Big\{ \frac{\hz}{2} Y \w \I\Big(\frac{1}{\hg \hf \hp}\\
    &- \frac{1}{\gamma(A,Z,X) \f \p} \Big)\Big\}\\
    &+ P_n\kappa(Z,X)\I\big[\frac{1}{\f} - \frac{1}{\hf}\big]\\
    &+ P_n\frac{\I}{\f}\big[\kappa(Z,X) - \hkappa\big]\\
    &+ P_0\big[\kappa(Z,X) - \hkappa\big]\big[\frac{\I}{\hf} - \frac{\I}{\f}\big] +  P_n\Big\{\hkappap - \kappa'(X)\Big\}\\
    &+ P_0\frac{\hz \I \delta(A,Z,X)[\hg - \gamma(A,Z,X)]}{2\gamma(A,Z,X) \f}\\&*\Big[\frac{1}{\hp} - \frac{1}{\p}\Big]\\
    &+ P_n\frac{\hz \I \delta(A,Z,X)[\hg - \gamma(A,Z,X)]}{2\gamma(A,Z,X) \f \p}\\ 
    &+ P_0\frac{A(Z+A)\I\delta(A,Z,X)[\hg -\gamma(A,Z,X)]}{2\gamma(A,Z,X) \hp}\\&*\Big[\frac{1}{\hf} - \frac{1}{\f}\Big] \\
    &+ P_0\frac{A(Z+A)\I\delta(A,Z,X)[\hg - \gamma(A,Z,X)]}{2\hf \hp}\\&*\Big[\frac{1}{\hg} - \frac{1}{\gamma(A,Z,X)}\Big]\\
    &+ P_0\frac{A(Z+A)\I [\w - \hg][\delta(A,Z,X) - \hat{\delta}(A,Z,X)]}{2\hg \hf \hp} 
    \end{align*}
    
\begin{align*}
    &+ P_n\frac{\hz \I[\Q - \hat{Q}(A,Z,X)]}{2\gamma(A,Z,X) \f \p}\\ 
    &+P_0\frac{\hz \I[\Q - \hat{Q}(A,Z,X)]}{2\gamma(A,Z,X) \p}\Big[\frac{1}{\hf} - \frac{1}{\f}\Big]\\ 
    &+ P_0\frac{\hz \I[\Q - \hat{Q}(A,Z,X)]}{2\gamma(A,Z,X) \hf}\Big[\frac{1}{\hp} - \frac{1}{\p}\Big] \\
    &+ P_0\frac{\hz \I[\Q - \hat{Q}(A,Z,X)]}{2\hf \hp}\Big[\frac{1}{\hg} - \frac{1}{\gamma(A,Z,X)}\Big]\\
    &+ P_n\frac{\hz \I\Q}{2}\Big[\frac{1}{\gamma(A,Z,X) \f \p}\\
    &- \frac{1}{\hg \hf \hp}\Big] \\
    &+ P_0\big[\hat{\theta}(Z,X) - \theta(Z,X)\big]\big[\frac{1}{\hf} - \frac{1}{\f}\big] + P_n\theta(Z,X)\big[\frac{1}{\hf} - \frac{1}{\f}\big]\\
    &+ P_n\frac{\hat{\theta}(Z,X) - \theta(Z,X)}{\f}+ o_p(n^{-1/2})
\end{align*}
}
We let $R_1$ be the term with $P_n$ and $R_2$ be the term with $P_0$ in R
{\small
\begin{align*}
    R_1 =& P_n\frac{\hz\I [Y w_{\alpha_Z}(Z,A,X,Y) - \Q]}{2}\\&*\Big[\frac{1}{\hg \hf \hp}
    - \frac{1}{\gamma(A,Z,X)\f\p}\Big]\\
    &+ P_n\Big[\kappa(Z,X)\I - \theta(Z,X)\Big]\Big[\frac{1}{\f} - \frac{1}{\hf}\Big]\\
    &+ P_n\Big\{\hkappap - \kappa'(X) + \frac{\I}{\f}\big[\kappa(Z,X) - \hkappa\big]\Big\}\\
    &+ P_n\frac{\hz \I \delta(A,Z,X)\big[\hg - \gamma(A,Z,X) \big]}{2 \gamma(A,Z,X)\f\p}\\
    &+ P_n\frac{\hz \I \big[\Q - \hat Q (A,Z,X) \big]}{2 \gamma(A,Z,X)\f\p} + P_n\frac{\hat \theta (Z,X) - \theta(Z,X)}{\f}\\
    =& R_1(1) + R_1(2) + R_1(3) + R_1(4) + R_1(5)
\end{align*}
}

{\small
\begin{align*}
    &R_2 = P_0\big[\kappa(Z,X) - \hkappa\big]\big[\frac{\I}{\hf} - \frac{\I}{\f}\big]\\
    &+ P_0\frac{\hz \I \delta(A,Z,X)[\hg - \gamma(A,Z,X)]}{2\gamma(A,Z,X) \f}\\&*\Big[\frac{1}{\hp} - \frac{1}{\p}\Big]\\
    &+ P_0\frac{A(Z+A)\I\delta(A,Z,X)[\hg -\gamma(A,Z,X)]}{2\gamma(A,Z,X) \hp}\\&*\Big[\frac{1}{\hf} - \frac{1}{\f}\Big] \\
    &+ P_0\frac{A(Z+A)\I\delta(A,Z,X)[\hg - \gamma(A,Z,X)]}{2\hf \hp}\\&*\Big[\frac{1}{\hg} - \frac{1}{\gamma(A,Z,X)}\Big]\\
    &+ P_0\frac{A(Z+A)\I [\w - \hg][\delta(A,Z,X) - \hat{\delta}(A,Z,X)]}{2\hg \hf \hp} \\
    &+P_0\frac{\hz \I[\Q - \hat{Q}(A,Z,X)]}{2\gamma(A,Z,X) \p}\Big[\frac{1}{\hf} - \frac{1}{\f}\Big]\\ 
    &+ P_0\frac{\hz \I[\Q - \hat{Q}(A,Z,X)]}{2\gamma(A,Z,X) \hf}\Big[\frac{1}{\hp} - \frac{1}{\p}\Big] \\
    &+ P_0\frac{\hz \I[\Q - \hat{Q}(A,Z,X)]}{2\hf \hp}\Big[\frac{1}{\hg} - \frac{1}{\gamma(A,Z,X)}\Big]\\
    &+ P_0\big[\hat{\theta}(Z,X) - \theta(Z,X)\big]\big[\frac{1}{\hf} - \frac{1}{\f}\big]+ o_p(n^{-1/2})
\end{align*}
}
We simplify the term in $R_1$

{\small
\begin{align*}
    &R_1(1) = o_p(n^{-1/2}) \textbf{ because } P_0\Big[Y\w - \Q\Big] = o_p(n^{-1/2}) \\
    &R_1(2) = o_p(n^{-1/2}) \textbf{ because } P_0\Big[\kappa(Z,X) \I - \theta(Z,X)\Big] = o_p(n^{-1/2})\\
    &R_1(3) = P_n\Big\{\hkappap - \kappa'(X) + \frac{\I}{\f}\big[\kappa(Z,X) - \hkappa\big]\Big\}\\
     =& P_0\hkappap - P_0\kappa'(X) + \Big[P_n - P_0\Big]\Big[\hkappap - \kappa'(X)\Big]\\ 
     &+ P_0\frac{\I \kappa(Z,X)}{\f} - P_0\frac{\I \hkappa}{\f} + (P_n - P_0)\frac{\kappa(X) - \hkappa}{\f}\\
    =& o_p(n^{-1/2}) \\ &\textbf{ because } P_0\hkappap =  P_0\frac{\I \hkappa}{\f},P_0\kappa'(X) =  P_0\frac{\I \kappa(Z,X)}{\f} 
\end{align*}
}
{\small
\begin{align*}
    R_1(5) =& P_n\frac{\hz \I\Big[\Q - \hat{Q}(A,Z,X)\Big]}{2\gamma(A,Z,X) \f \p}\\& + P_n\frac{\hat{\theta}(Z,X) - \theta(Z,X)}{\f} \\
    =& -P_0\Big[\frac{I\{\pi(X) = 1\}\hat{Q}(1,1,X)}{\gamma(1,1,X)} + \frac{I\{\pi(X) = -1\}\hat{Q}(-1,-1,X)}{\gamma(-1,-1,X)}\Big]\\
    &+ P_0\Big[\frac{I\{\pi(X) = 1\}\hat{Q}(1,1,X)}{\hat \gamma(1,1,X)} +\frac{I\{\pi(X) = -1\}\hat{Q}(-1,-1,X)}{\hat \gamma(-1,-1,X)}\Big] + o_p(n^{-1/2})\\ 
    =& P_0\Big[ I\{\pi(X) = 1\}\hat{Q}(1,1,X)\Big(\frac{ \gamma(1,1,X) - \hat \gamma(1,1,X)}{\hat \gamma(1,1,X)\gamma(1,1,X)}\Big)\Big]\\
    &+ P_0\Big[I\{\pi(X) = -1\}\hat{Q}(-1,-1,X)\Big(\frac{ \gamma(-1,-1,X) - \hat \gamma(-1,-1,X)}{\hat \gamma(-1,-1,X)\gamma(-1,-1,X)}\Big)\Big]\\& + o_p(n^{-1/2})\\
    R_1(4) =& P_n\frac{\hz \I \delta(A,Z,X)\Big[\hg - \gamma(A,Z,X)\Big]}{2\gamma(A,Z,X) \f \p}\\
    =& P_0 \frac{I\{\pi(X) = 1\}Q(1,1,X)[\hat \gamma(1,1,X) - \gamma(1,1,X)]}{[\gamma(1,1,X)]^2}\\
    & + P_0\frac{I\{\pi(X) = -1\}Q(-1,-1,X)[\hat \gamma(-1,-1,X) - \gamma(-1,-1,X)]}{[\gamma(-1,-1,X)]^2} 
\end{align*}
}
Lastly, we simplify two terms in $R_2$

\begin{align*}
   &P_0\frac{\hz \I}{2\gamma(A,Z,X) \p}\Big[\Q - \hat{Q}(A,Z,X)\Big]\Big[\frac{1}{\hf} - \frac{1}{\f}\Big]\\
   &+ P_0\Big[\hat{\theta}(Z,X) - \theta(Z,X)\Big]\Big[\frac{1}{\hf} - \frac{1}{\f}\Big] \\
   =&P_0\Big[\frac{1}{\hf} -  \frac{1}{\f}\Big]\Big[\frac{\hz \I \Q}{2\gamma(A,Z,X) \p} - \theta(Z,X) \Big]\\ 
   &- P_0\Big[\frac{1}{\hf} -  \frac{1}{\f}\Big]\Big[\frac{\hz \I \hat{Q}(A,Z,X)}{2\gamma(A,Z,X) \p} - \hat{\theta}(Z,X) \Big]\\
   =& o_p(n^{-1/2})
\end{align*}

The second equation is due to 

$$P_0\frac{\hz \I \Q}{2\gamma(A,Z,X) \p} = P_0\theta(Z,X)$$ 
 and 
$$P_0\frac{\hz \I \hat{Q}(A,Z,X)}{2\gamma(A,Z,X) \p} = P_0\hat{\theta}(Z,X)$$

Putting all the components together
{\small
\begin{align*}
     R =& P_0I\{\pi(X) = 1\}\frac{\hat{\gamma}(1,1,X) - \gamma(1,1,X)}{\gamma(1,1,X)}\Big[\frac{\hat{Q}(1,1,X)}{\hat \gamma(1,1,X)} - \frac{Q(1,1,X)}{\gamma(1,1,X)}\Big]\\
     &+ P_0 I\{\pi(X) = -1\}\frac{\hat{\gamma}^-(Z) - \gamma(-1,-1,X)}{\gamma(-1,-1,X)}\Big[\frac{\hat{Q}(-1,-1,X)}{\hat \gamma(-1,-1,X)} - \frac{Q(-1,-1,X)}{\gamma(-1,-1,X)}\Big]\\
     &+  P_0\Big[\kappa(Z,X)  - \hat{\kappa}(Z,X)\Big]\Big(\frac{\I}{\hf} - \frac{\I}{\f}\Big)\\ 
     &+ P_0\frac{\hz \I \delta(A,Z,X)}{2\hp \hf}\Big[\hg - \gamma(A,Z,X)\Big]\\&*\Big[\frac{1}{\hg} - \frac{1}{\gamma(A,Z,X)}\Big] \\
    &+ P_0\frac{\hz \I \delta(A,Z,X)}{2\gamma(A,Z,X) \hp}\Big[\hg - \gamma(A,Z,X)\Big]\\&*\Big[\frac{1}{\hf} - \frac{1}{\f}\Big]\\
    &+ P_0\frac{\hz \I \delta(A,Z,X)}{2\gamma(A,Z,X) \f}\Big[\hg - \gamma(A,Z,X)\Big]\\&*\Big[\frac{1}{\hp} - \frac{1}{\p}\Big]\\
    &+ P_0\frac{\hz \I}{2\hf \hp}\Big[\Q - \hat{Q}(A,Z,X)\Big]\\&*\Big[\frac{1}{\hg} - \frac{1}{\gamma(A,Z,X)}\Big]\\
    &+ P_0\frac{\hz \I}{2\gamma(A,Z,X) \hf}\Big[\Q - \hat{Q}(A,Z,X)\Big]\\&*\Big[\frac{1}{\hp} - \frac{1}{\p}\Big] + o_p(n^{-1/2})
    \end{align*}

    \begin{align*}
    =& P_0\frac{\hat{\gamma}(1,1,X) - \gamma(1,1,X)}{\gamma(1,1,X)}\Big[\hat{\delta}(1,1,X) - \delta(1,1,X)\Big]\\
    &+ P_0\frac{\hat{\gamma}(-1,-1,X) - \gamma(-1,-1,X)}{\gamma(-1,-1,X)}\Big[\hat{\delta}(-1,-1,X) - \delta(-1,-1,X) \Big]\\
    &+  P_0\Big[\kappa(Z,X)  - \hat{\kappa}(Z,X)\Big]\Big(\frac{\I}{\hf} - \frac{\I}{\f}\Big)\\ 
     &+ P_0\frac{\hz \I \delta(A,Z,X)}{2\hp \hf}\Big[\hg - \gamma(A,Z,X)\Big]\\&*\Big[\frac{1}{\hg} - \frac{1}{\gamma(A,Z,X)}\Big] \\
    &+ P_0\frac{\hz \I \delta(A,Z,X)}{2\gamma(A,Z,X) \hp}\Big[\hg - \gamma(A,Z,X)\Big]\\&*\Big[\frac{1}{\hf} - \frac{1}{\f}\Big]\\
    &+ P_0\frac{\hz \I \delta(A,Z,X)}{2\gamma(A,Z,X) \f}\Big[\hg - \gamma(A,Z,X)\Big]\\&*\Big[\frac{1}{\hp} - \frac{1}{\p}\Big]\\
    &+ P_0\frac{\hz \I}{2\hf \hp}\Big[\Q - \hat{Q}(A,Z,X)\Big]\Big[\frac{1}{\hg} - \frac{1}{\gamma(A,Z,X)}\Big]\\
    &+ P_0\frac{\hz \I}{2\gamma(A,Z,X) \hf}\Big[\Q - \hat{Q}(A,Z,X)\Big]\Big[\frac{1}{\hp} - \frac{1}{\p}\Big]\\& + o_p(n^{-1/2})
\end{align*}
}
Hence, the remainder will go to $o_p(n^{-1/2})$ under the assumption:

For any $z$, 
{\small
\begin{align*}
&(\|f- \hat f \|_{P_0} + \|p - \hat p \|_{P_{0}} +\|\gamma- \hat \gamma \|_{P_0})(\|\gamma- \hat \gamma \|_{P_0})\\& + (\|p - \hat p \|_{P_{0}} +\|\gamma- \hat \gamma \|_{P_0})(\|Q- \hat Q \|_{P_0})
\\&+  (\|\gamma- \hat \gamma)(\|\delta- \hat \delta\|) + (\|\kappa - \hat \kappa \|_{P_0})(\|f- \hat f \|_{P_0}) = o_p(n^{-1/2})
\end{align*}
}
where $f = f(Z|X)$ and $p = f(A|Z,X)$

\section{Generating the outcome of the compliers}
\label{rejectionsampling}
The outcome of the compliers, $y^{PS = S4}|x$, follows the distribution $Y|PS = S4,X,A,Z$ with the density 

$$f(y|PS = S4, X, A, Z) = \frac{w_{\alpha_Z}(A,Z,X,y)f_{Z}(y)}{\gamma(A,Z,X)}$$

We will use the rejection sampling method to draw samples of $Y|PS = S4,X,A,Z$ from a normal distribution $Y_N \sim N(1.5, 2^2)$ with the density 

$$f_{Y_N}(y) = \frac{1}{2\sqrt{2\pi}}\exp\left\{\frac{-1}{2}\left(\frac{y - 1.5}{2}\right)^2\right\}$$

using the following algorithm.

\begin{algorithm}
\caption{Rejection Sampling algorithm to draw $y^{PS = S4}|x$}\label{alg:one}
\begin{algorithmic}
\State \textbf{Input:}  $x, \alpha, z, y$
\For{$i = 1 \text{ to } 8000$}
   \State Draw $u_i$  from Uniform(0,1)
    \State Draw $y^i_N$ from Normal(1.5, 4)
   \State $C \gets \frac{f(y^i_N|PS = S4,x,a,z)}{M*f_{Y_N}(y^i_N)}$
   \If{$u_i < C$}
    Accept $y^i_N$
\ElsIf{$u_i > C$}
    Reject $y^i_N$
\EndIf
\EndFor
\State \textbf{Output:} $y^{PS = S4}|x = $  the mean of the accepted $y^i_N$
\end{algorithmic}
\end{algorithm}

We apply the Algorithm \ref{alg:one} to every subject that has the compliance level $A$ equal to the treatment $Z$.

\section{Addition Simulations with different values of \texorpdfstring{$(\alpha^Y_{-1}, \alpha^Y_{+1})$}{(alpha\_Y\_{-1}, alpha\_Y\_{+1})}}

\subsection{Sample Size 250}

\begin{table}[ht]
\centering
 \caption{Simulation Result: Mean (sd) of value functions. The empirical optimal value function is 1.68. The sensitivity parameter are ($\alpha^Y_{-1}, \alpha^Y_{+1}$) = (0.5,0.5) }
\begin{tabular}{rlllll}
  \hline
 & Case & OWL & IVT & IPW & MR\\ 
  \hline
  1 & All correctly specified & 1.05 (0.00) & 1.46 (0.18) & 1.63 (0.09) & 1.67 (0.09) \\ 
  2 & f(A,Z$|$X) misspecified & 1.06 (0.06) & 1.47 (0.16) & 1.54 (0.10) & 1.67 (0.08) \\ 
  3 & $Q(A,Z,X)$ misspecified & 1.04 (0.02) & 1.48 (0.16) & 1.58 (0.06) & 1.55 (0.09) \\ 
   \hline
\end{tabular}
\end{table}

\begin{table}[ht]
\centering
\caption{Simulation Result: Mean (sd) of correct classification rate. The sensitivity parameter are ($ \alpha^Y_{-1}, \alpha^Y_{+1}$) = (0.5,0.5) }
\begin{tabular}{rlllll}
  \hline
 & Case & OWL & IVT & IPW & MR\\ 
  \hline

  1 & All correctly specified & 0.51 (0.04) & 0.72 (0.10) & 0.87 (0.06) & 0.94 (0.04) \\ 
  2 & f(A,Z$|$X) misspecified & 0.52 (0.06) & 0.73 (0.09) & 0.78 (0.05) & 0.95 (0.03) \\ 
  3 & $Q(A,Z,X)$ misspecified & 0.52 (0.05) & 0.72 (0.09) & 0.87 (0.06) & 0.86 (0.07) \\ 
   \hline
\end{tabular}
\end{table}

\begin{table}[ht]
\centering
\caption{Simulation Result: Demonstration of the robustness of the multiply robust estimator of the value function of the IPW method. The sensitivity parameter are ($\alpha^Y_{-1}, \alpha^Y_{+1}$) = (0.5,0.5)}
\begin{tabular}{rllll}
  \hline
 & Case & Empirical & Non-robust Estimator & Multiply Robust \\
 & & & & Estimator \\
  \hline
  1 & All correctly specified & 1.63 (0.09) & 1.63 (0.21) & 1.59 (0.13) \\ 
  2 & f(Z$|$X) misspecified & 1.52 (0.09) & 1.93 (0.17) & 1.49 (0.10) \\ 
  3 & f(A$|$Z,X) misspecified & 1.62 (0.09) & 2.48 (0.40) & 1.52 (0.21) \\
  4 & $Q(A,Z,X)$ misspecified & 1.64 (0.06) & 1.66 (0.13) & 1.62 (0.10) \\
   \hline
\end{tabular}
\end{table}

\subsection{Sample Size 500}
\begin{table}[ht]
\centering
 \caption{Simulation Result: Mean (sd) of value functions. The empirical optimal value function is 1.62. The sensitivity parameter are ($\alpha^Y_{-1}, \alpha^Y_{+1}$) = (0.5,-0.5) }
\begin{tabular}{rlllll}
  \hline
 & Case & OWL & IVT & IPW & MR\\ 
  \hline

 1 & All correctly specified & 0.93 (0.00) & 1.39 (0.14) & 1.58 (0.07) & 1.61 (0.07) \\ 
  2 & f(A, Z$|$X) misspecified & 0.93 (0.02) & 1.40 (0.10) & 1.49 (0.07) & 1.62 (0.06) \\ 
  4 & $Q(A,Z,X)$ misspecified & 0.93 (0.00) & 1.38 (0.14) & 1.58 (0.06) & 1.59 (0.15) \\ 
   \hline
\end{tabular}
\end{table}

\begin{table}[ht]
\centering
\caption{Simulation Result: Mean (sd) of correct classification rate. The sensitivity parameter are ($ \alpha^Y_{-1}, \alpha^Y_{+1}$) = (0.5,-0.5) }
\begin{tabular}{rlllll}
  \hline
 & Case & OWL & IVT & IPW & MR\\ 
  \hline

 1 & All correctly specified & 0.48 (0.03) & 0.70 (0.07) & 0.89 (0.05) & 0.95 (0.03) \\ 
  2 & f(A, Z$|$X) misspecified & 0.48 (0.03) & 0.71 (0.06) & 0.79 (0.04) & 0.97 (0.02) \\ 
  3 & $Q(A,Z,X)$ misspecified & 0.48 (0.03) & 0.69 (0.07) & 0.89 (0.03) & 0.87 (0.06) \\ 
   \hline
\end{tabular}
\end{table}

\begin{table}[ht]
\centering
\caption{Simulation Result: Demonstration of the robustness of the multiply robust estimator of the value function of the IPW method. The sensitivity parameter are ($\alpha^Y_{-1}, \alpha^Y_{+1}$) = (0.5,-0.5)}
\begin{tabular}{rllll}
  \hline
 & Case & Empirical & Non-robust Estimator & Multiply robust \\ 
 &&&& Estimator \\
  \hline

  1 & All correctly specified & 1.58 (0.07) & 1.59 (0.14) & 1.56 (0.09) \\ 
  2 & f(Z$|$X) misspecified & 1.47 (0.07) & 1.84 (0.11) & 1.43 (0.08) \\ 
  3 & f(A$|$Z,X) misspecified & 1.57 (0.07) & 2.44 (0.26) & 1.53 (0.13) \\ 
  4 & $Q(A,Z,X)$ misspecified & 1.58 (0.06) & 1.60 (0.14) & 1.56 (0.14) \\
   \hline
\end{tabular}
\end{table}

\begin{table}
\centering
 \caption{Simulation Result: Mean (sd) of value functions. The empirical optimal value function is 1.65. The sensitivity parameter are ($\alpha^Y_{-1}, \alpha^Y_{+1}$) = (0,0) }
\begin{tabular}{rlllll}
  \hline
 & Case & OWL & IVT & IPW & MR\\ 
  \hline
  1 & All correctly specified & 1.00 (0.00) & 1.44 (0.13) & 1.61 (0.06) & 1.64 (0.06) \\ 
  2 & f(A, Z$|$X) misspecified & 1.00 (0.02) & 1.52 (0.10) & 1.65 (0.06) & 1.65 (0.06) \\ 
  3 & $Q(A,Z,X)$ misspecified & 1.00 (0.00) & 1.46 (0.19) & 1.61 (0.07) & 1.61 (0.07) \\ 
   \hline
\end{tabular}
\end{table}

\begin{table}
\centering
\caption{Simulation Result: Mean (sd) of correct classification rate. The sensitivity parameter are ($ \alpha^Y_{-1}, \alpha^Y_{+1}$) = (0,0) }
\begin{tabular}{rlllll}
  \hline
 & Case & OWL & IVT & IPW & MR\\ 
  \hline
 1 & All correctly specified & 0.51 (0.03) & 0.72 (0.07) & 0.89 (0.05) & 0.95 (0.03) \\ 
  2 & f(A,Z$|$X) misspecified & 0.51 (0.04) & 0.73 (0.06) & 0.79 (0.04) & 0.97 (0.02) \\ 
  3 & $Q(A,Z,X)$ misspecified & 0.51 (0.03) & 0.73 (0.07) & 0.89 (0.05) & 0.89 (0.05) \\ 
   \hline
\end{tabular}
\end{table}

\begin{table}
\centering
\caption{Simulation Result: Demonstration of the robustness of the multiply robust estimator of the value function of the IPW method. The sensitivity parameter are ($ \alpha^Y_{-1}, \alpha^Y_{+1}$) = (0,0)}
\begin{tabular}{rllll}
  \hline
 & Case & Empirical & Non-robust Estimator & Multiply Robust \\ 
 &&&& Estimator \\
  \hline
  1 & All correctly specified & 1.61 (0.06) & 1.63 (0.15) & 1.59 (0.09) \\ 
  2 & f(Z$|$X) misspecified & 1.49 (0.07) & 1.88 (0.12) & 1.45 (0.07) \\ 
  3 & f(A$|$Z,X) misspecified & 1.59 (0.07) & 2.48 (0.27) & 1.55 (0.13)\\
  4 & $Q(A,Z,X)$ misspecified & 1.61 (0.07) & 1.62 (0.13) & 1.60 (0.12)\\
   \hline
\end{tabular}
\end{table}

\clearpage

\section{Additional Heat Maps}
\label{appendixd}

\subsection{\texorpdfstring{$\alpha^0_{Z}$}{alpha\_0\_Z} Correctly Specified}

Below are the heat maps of the multiply robust method when nuisance parameters are misspecified. In these cases, $\alpha^0_z$ are known and set at 0. In the cases when $f(Z|X), f(A|Z,X)$ are misspecified, the correct classification rate is consistently high, around $95\%$. This suggests that the MR method is insensitive to the misspecification of $\alpha^Y_z$. When $Q(A,Z,X)$ is misspecified, the performance of the MR method is similar to the IPW method.

\begin{figure}
    \centering
    \includegraphics[width = \textwidth]{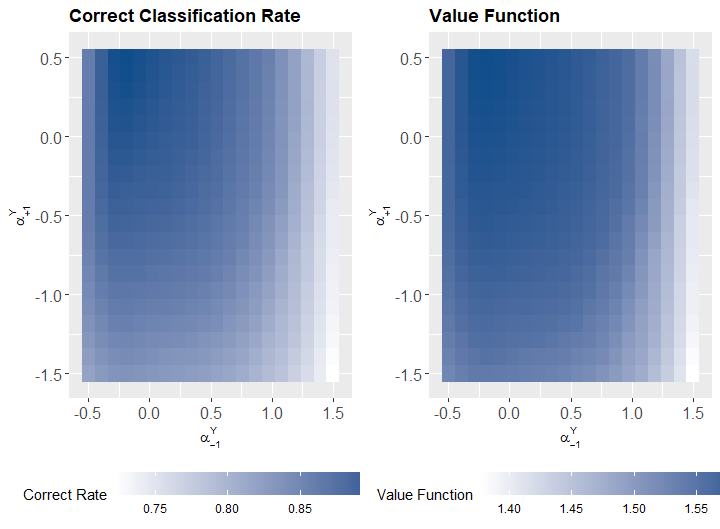}
    \caption{Sensitivity analysis of the multiply robust method when the nuisance parameters $Q(1,1,X)$ and $Q(-1,-1,X)$ are incorrectly specified: Sensitivity parameters are ($ \alpha^Y_{-1}, \alpha^Y_{+1}$). The darker area indicates a higher correct classification rate/value function. The true vector of sensitivity parameters are ($ \alpha^Y_{-1}, \alpha^Y_{+1}$) = (0.5,-0.5).}
    \label{fig:mrwsensi2}
\end{figure}

\begin{figure}
    \centering
    \includegraphics[width = \textwidth]{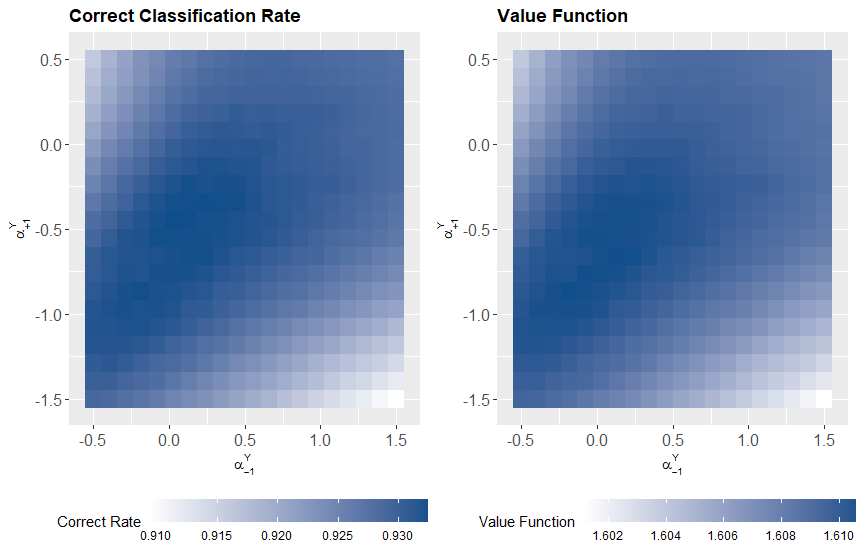}
    \caption{Sensitivity analysis of the multiply robust method when the nuisance parameter $f(A|Z,X)$ is incorrectly specified: Sensitivity parameters are ($ \alpha^Y_{-1}, \alpha^Y_{+1}$). The darker area indicates a higher correct classification rate/value function. The true vector of sensitivity parameters are ($ \alpha^Y_{-1}, \alpha^Y_{+1}$) = (0.5,-0.5).}
    \label{fig:mrwsensi3}
\end{figure}

\begin{figure}
    \centering
    \includegraphics[width = \textwidth]{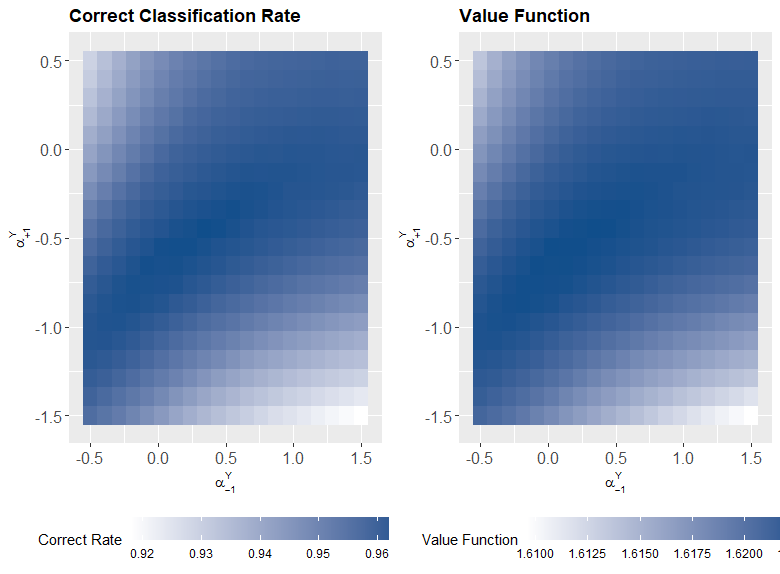}
    \caption{Sensitivity analysis of the multiply robust method when the nuisance parameter $f(Z|X)$ is incorrectly specified: Sensitivity parameters are ($ \alpha^Y_{-1}, \alpha^Y_{+1}$). The darker area indicates a higher correct classification rate/value function. The true vector of sensitivity parameters are ($\alpha^Y_{-1}, \alpha^Y_{+1}$) = (0.5,-0.5). }
    \label{fig:mrwsensi4}
\end{figure}

\subsection{\texorpdfstring{$\alpha^0_{Z}$}{alpha\_0\_Z} Incorrectly Specified}

We show below the additional heat maps of the case when both of the sensitivity parameters ($\alpha^0_{-1}, \alpha^0_{+1}$) are misspecified. 

\begin{figure}
    \centering
    \includegraphics[width = \textwidth]{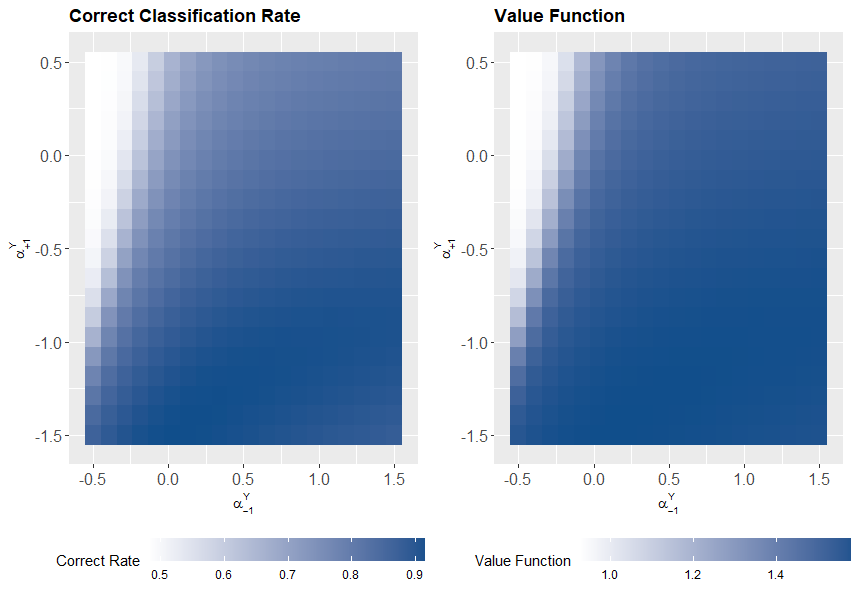}
    \caption{Sensitivity analysis of the proposed method: Sensitivity parameters are ($\alpha^Y_{-1}, \alpha^Y_{+1}$). The darker area indicates a higher correct classification rate/value function. The true vector of sensitivity parameters are ($\alpha^Y_{-1}, \alpha^Y_{+1}$) = (0.5,-0.5). ($\alpha^0_{-1}, \alpha^0_{+1}$) is misspecified and set at (0.5, 0.5). The true ($\alpha^0_{-1}, \alpha^0_{+1}$) = (0,0).}
    \label{fig:alpha0mis-propsensi3}
\end{figure}

\begin{figure}
    \centering
    \includegraphics[width = \textwidth]{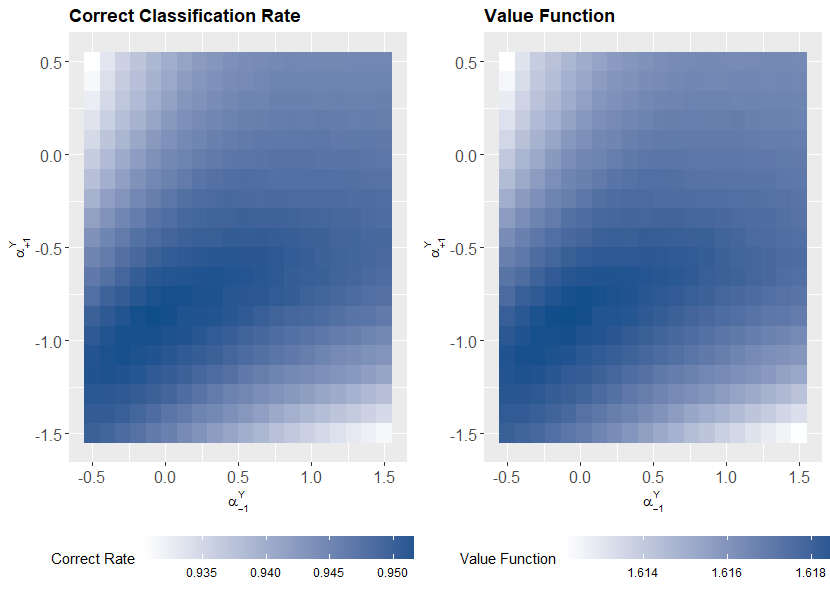}
    \caption{Sensitivity analysis of the multiply robust estimator when all nuisance parameters are correctly specified: Sensitivity parameters are ($\alpha^Y_{-1}, \alpha^Y_{+1}$). The darker area indicates a higher correct classification rate/value function. The true vector of sensitivity parameters are ($\alpha^Y_{-1}, \alpha^Y_{+1}$) = (0.5,-0.5). ($\alpha^0_{-1}, \alpha^0_{+1}$) is misspecified and set at (0.5, 0.5). The true ($\alpha^0_{-1}, \alpha^0_{+1}$) = (0,0).}
    \label{fig:alpha0mis-mrwsensi3}
\end{figure}

\begin{figure}
    \centering
    \includegraphics[width = \textwidth]{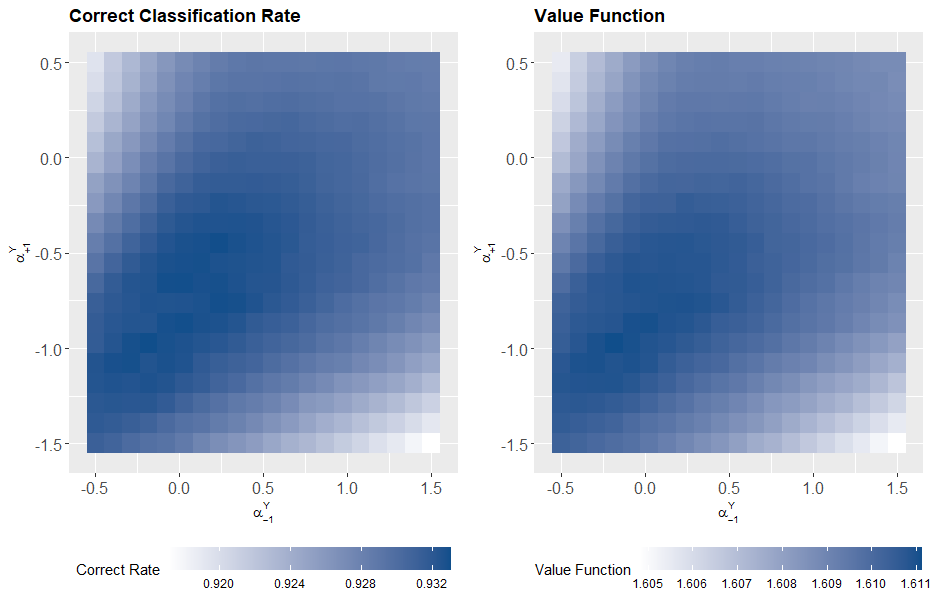}
    \caption{Sensitivity analysis of the multiply robust estimator when $f(A|Z,X)$ is mis-specified: Sensitivity parameters are ($ \alpha^Y_{-1}, \alpha^Y_{+1}$). The darker area indicates a higher correct classification rate/value function. The true vector of sensitivity parameters are ($\alpha^Y_{-1}, \alpha^Y_{+1}$) = (0.5,-0.5). ($\alpha^0_{-1}, \alpha^0_{+1}$) is set at (0.5, 0.5). The true ($\alpha^0_{-1}, \alpha^0_{+1}$) = (0,0).}
    \label{fig:alpha0mis-fazmissensi3}
\end{figure}

\begin{figure}
    \centering
    \includegraphics[width = \textwidth]{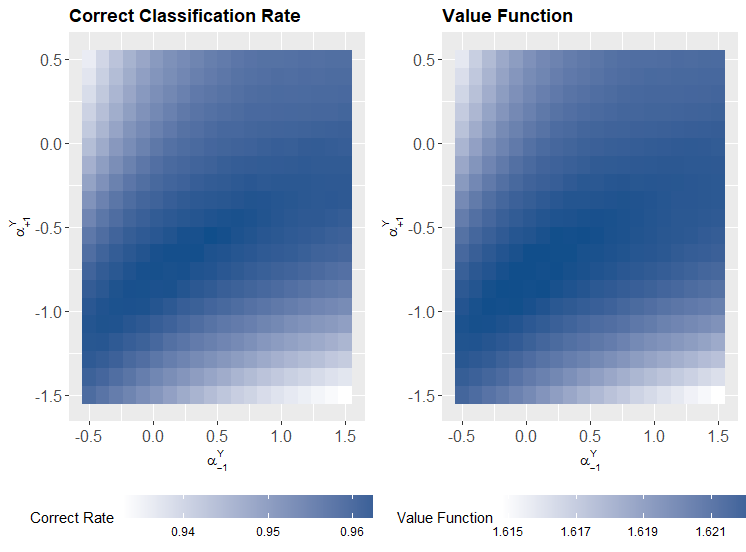}
    \caption{Sensitivity analysis of the multiply robust estimator when $f(Z|X)$ is mis-specified: Sensitivity parameters are ($\alpha^Y_{-1}, \alpha^Y_{+1}$). The darker area indicates a higher correct classification rate/value function. The true vector of sensitivity parameters are ($\alpha^Y_{-1}, \alpha^Y_{+1}$) = (0.5,-0.5). ($\alpha^0_{-1}, \alpha^0_{+1}$) is mis-specified and set at (0.5, 0.5). The true value of $(\alpha^0_{-1}, \alpha^0_{+1}) = (0,0)$. }
    \label{fig:alpha0mis-fzmis}
\end{figure}

\begin{figure}
    \centering
    \includegraphics[width = \textwidth]{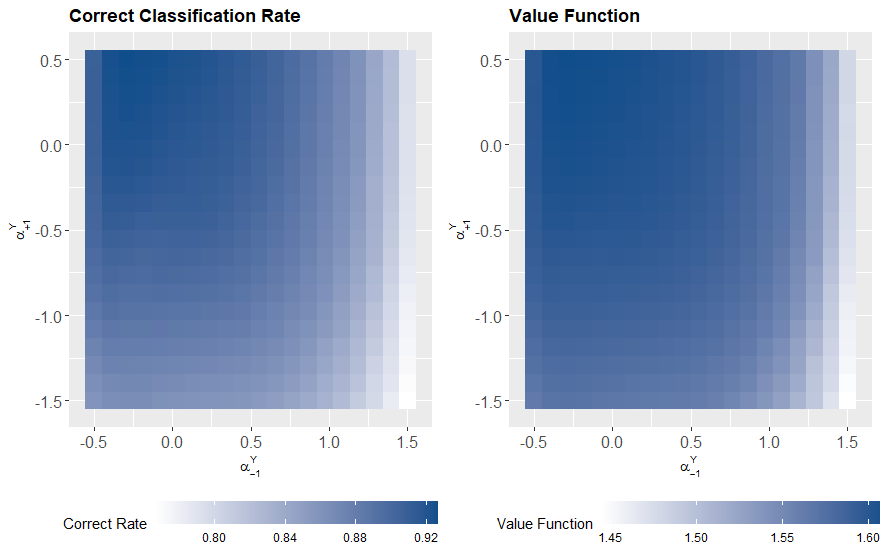}
    \caption{Sensitivity analysis of the multiply robust estimator when $Q(A,Z,X)$ is mis-specified: Sensitivity parameters are ($\alpha^Y_{-1}, \alpha^Y_{+1}$). The darker area indicates a higher correct classification rate/value function. The true vector of sensitivity parameters are ($\alpha^Y_{-1}, \alpha^Y_{+1}$) = (0.5,-0.5). ($\alpha^0_{-1}, \alpha^0_{+1}$) is mis-specified and set at (0.5, 0.5). The true $(\alpha^0_{-1}, \alpha^0_{+1}) = (0.0)$}
    \label{fig:alpha0mis-Qmis}
\end{figure}

\section{Sensitivity model misspecification}
\label{missenmod}

 We assess the sensitivity of our approach to the misspecification of the sensitivity model $w_{\alpha_Z}$. The true sensitivity function $\mathcal{G}$ is a function of the covariate X and the outcome Y, i.e $\mathcal{G}(X,Y, \alpha_Z) = \alpha^0_{Z}X + X\alpha^X_Z + Y\alpha^Y_{Z}$. We set $\alpha^0_{Z} =0.5$,$ \alpha^X_Z = (0.3, 0.3)$ for $Z \in \{-1,1\}$, $\alpha^Y_{-1} = 0.5, \alpha^Y_{+1} = -0.5 $. In the analysis, we let $\mathcal{G}(X,Y, \alpha_Z) = \alpha^Y_ZY$ or $\mathcal{G}(X,Y, \alpha_Z) = \alpha^{PCA_1}_Z\text{PCA}_1$, where $\text{PCA}_1$ indicates the first principal component analysis of Y and X. We show the result of our simulation for the MR method below. 

\begin{figure}
    \centering
    \includegraphics[width = \textwidth]{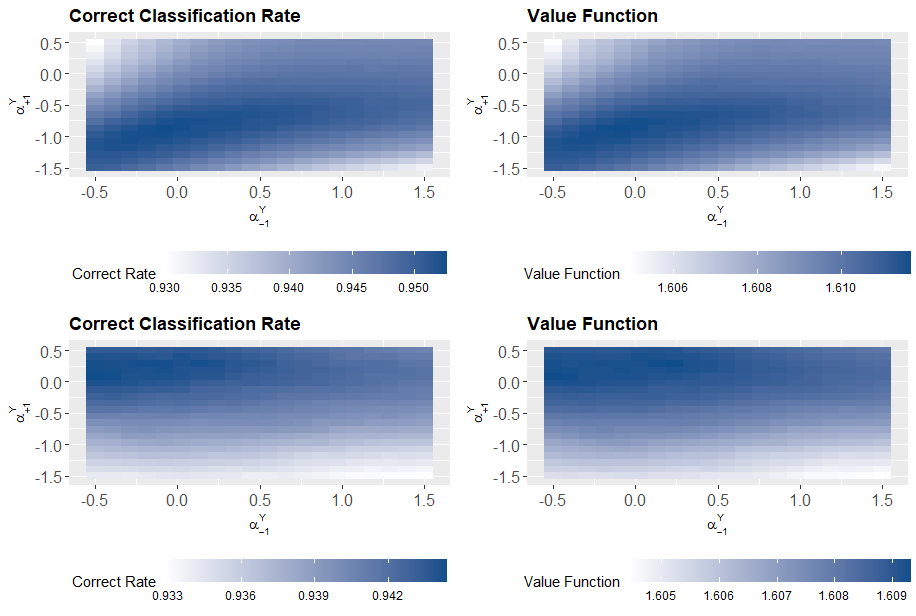}
    \caption{Sensitivity analysis of the multiply robust estimator when the sensitivity model is misspecified: Sensitivity parameters are ($\alpha^Y_{-1}, \alpha^Y_{+1}$) for the upper panel and ($\alpha^{PCA}_{-1}, \alpha^{PCA}_{+1}$). The upper panel shows the classification rate/value function when $\mathcal{G}(X,Y,\alpha_Z)$ is set a function of the outcome Y only. The lower panel shows the classification rate/value function when $\mathcal{G}(X,Y, \alpha_Z)$ is a function of the first principal component of the outcome Y and the covariates X.}
\end{figure}

\section{Calculating the \texorpdfstring{$\alpha_Z^0$}{alpha\_Z\_0} in Section \ref{sec:application}}
\label{calalpha0}
We have 
\begin{align*}
    p(PS = S4| A = Z = z) &= \frac{p(PS = S4, A = z, Z = z)}{p(A = z, Z = z} \\
    &= \frac{p(PS = S4, A =z|Z = z)p(Z = z)}{p(A = z| Z=z)p(Z = z)} \\
    &= \frac{p(PS = S4|Z = z)}{p(A = z| Z=z)} \\ 
    &= \frac{p(PS = S4)}{p(A=z|Z=z)}
\end{align*}

The third equation is due to the definition of compliance and the last equation is because of randomization of the treatment $Z$ in ENGAGE study. We also have the following 

\begin{equation*}
    p(S4| A = Z = z) = \sum_{y \in \{-1,1\}} p(PS = S4| A = Z = z, y)p(Y = y| A = Z = z) 
\end{equation*}

Consider $p(PS = S4| A = Z = z, y) = \text{expit}(\alpha^0_z + y\alpha_z^Y)$ as our sensitivity function. Then, we have 
\begin{align*}
     p(&PS = S4| A = Z = z) = \sum_{y \in \{-1,1\}} \text{expit}(\alpha^0_z + y\alpha_z^Y) p(Y = y| A = Z = z) \\
      &= \frac{\exp(\alpha^0_z - \alpha_z^Y)}{1 + \exp(\alpha^{0}_z - \alpha^Y_z)}p(Y=-1\mid A=Z=z) \\
     &\hspace{.3in}+ \frac{\exp(\alpha^0_z + \alpha_z^Y)}{1 + \exp(\alpha^{0}_z + \alpha^Y_z)}p(Y=1\mid A=Z=z)\\
      &= \frac{p(Y = -1\mid A = Z =z)}{1 + \exp(-\alpha^0_z + \alpha^Y_z)} + \frac{p(Y = 1\mid A = Z =z)}{1 + \exp(-\alpha^0_z - \alpha^Y_z)}\\
      &= \frac{p(Y = -1 \mid A= Z = z) + p(Y = -1\mid A = Z = z)\exp(-\alpha^0_z)\exp(-\alpha^Y_z)}{1 + \exp(-\alpha^0_Z)\exp(-\alpha^Y_z)+ \exp(-\alpha^0_Z)\exp(\alpha^Y_z) + \exp(-2\alpha^0_z)} \\ 
     &\hspace{.3in} +\frac{p(Y = 1 \mid A= Z = z) + p(Y = 1\mid A =Z=z)\exp(-\alpha^0_z)\exp(\alpha^Y_z)}{1 + \exp(-\alpha^0_Z)\exp(-\alpha^Y_z)+ \exp(-\alpha^0_Z)\exp(\alpha^Y_z) + \exp(-2\alpha^0_z)}.
\end{align*}
We can represent the above equality as $a\exp(-2\alpha^0_z) + b\exp(-\alpha^0_z) + c = 0$ which gives
\[
\alpha^0_Z = -\log\frac{-b \pm \sqrt{b^2 - 4ac}}{2a},
\]
where
\begin{align*}
    a &= p(PS = S4|A = Z =z) \\
    b &=a\exp(-\alpha^Y_z) + a\exp(\alpha^Y_z)\\&- p(Y = -1|A=Z=z)\exp(-\alpha^Y_z) - p(Y = 1|A=Z=z)\exp(\alpha^Y_z)\\
    c &= a -p(Y = 1|A=Z=z) - p(Y = -1|A=Z=z).
\end{align*}
We can estimate $p(Y = y|A = Z = z)$ from the ENGAGE data.  

In general, when $Y \in \mathcal{Y}$ is a continuous random variable, one can solve 
\[
p(PS = S4| A = Z = z) = \int_{\mathcal{Y}} \text{expit}(\alpha^0_z + y\alpha_z^Y) f(y| A = Z = z)dy,
\]
using a one-dimensional grid search on a real line.

\section{Additional Sensitivity Analysis for the ENGAGE dataset}
\label{sensiengage}

In the heatmaps below, we set the probability of compliers $P(PS = S4)$ to 0.25 and 0.35. In both of those cases, the MR method consistently outperforms the IVT and OWL method in the plausible range $(\alpha^Y_{-1} >0, \alpha^Y_{+1} > 0)$.

\begin{figure}
    \centering
    \includegraphics[width = \textwidth]{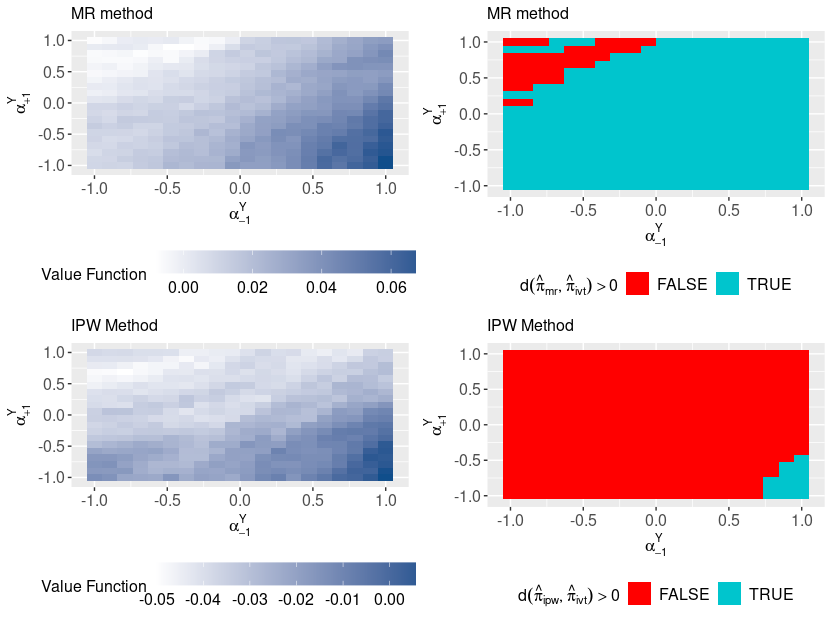}
    \caption{ENGAGE Study: Sensitivity analysis of the difference the proposed methods (MR and IPW) and the IVT method. The sensitivity parameters are $(\alpha^Y_{-1}, \alpha^Y_{+1})$. The probability of compliance is fixed at 0.35. The left panel shows the magnitude of the difference between the proposed  and IVT methods $d(\hat\pi_{mr},\hat\pi_{owl})$ and $d(\hat\pi_{ipw},\hat\pi_{owl})$ and the darker color the better performance of our methods). The right panel shows the area on the grid in which our methods outperform the IVT method where the red color indicates $d(\hat\pi_{mr},\hat\pi_{owl})<0$ or $d(\hat\pi_{ipw},\hat\pi_{owl})<0$.}
\end{figure}

\begin{figure}
    \centering
    \includegraphics[width = \textwidth]{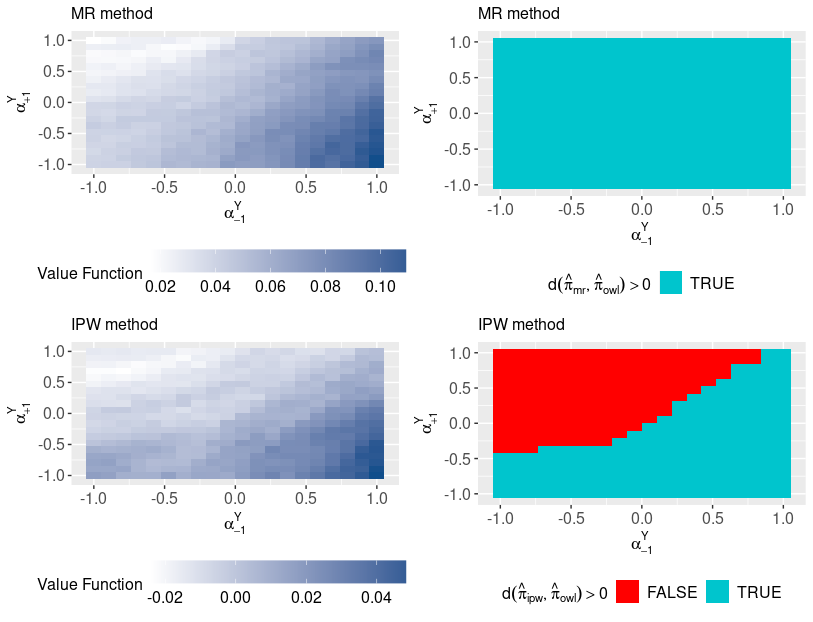}
    \caption{ENGAGE Study: Sensitivity analysis of the difference the proposed methods (MR and IPW) and the OWL method. The sensitivity parameters are $(\alpha^Y_{-1}, \alpha^Y_{+1})$. The probability of compliance is fixed at 0.35. The left panel shows the magnitude of the difference between the proposed and OWL methods $d(\hat\pi_{mr},\hat\pi_{owl})$ and $d(\hat\pi_{ipw},\hat\pi_{owl})$ and the darker color the better performance of our methods). The right panel shows the area on the grid in which our methods outperform the OWL method where the red color indicates $d(\hat\pi_{mr},\hat\pi_{owl})<0$ or $d(\hat\pi_{ipw},\hat\pi_{owl})<0$.}
\end{figure}

\begin{figure}
    \centering
    \includegraphics[width = \textwidth]{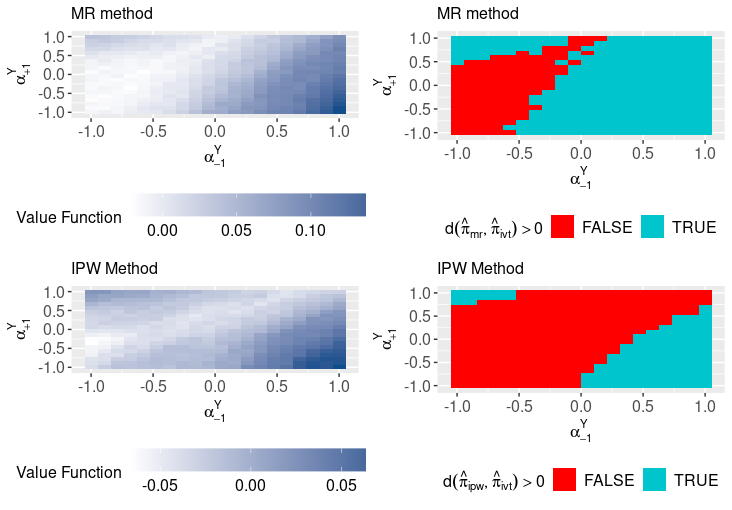}
    \caption{ENGAGE Study: Sensitivity analysis of the difference the proposed methods (MR and IPW) and the IVT method. The sensitivity parameters are $(\alpha^Y_{-1}, \alpha^Y_{+1})$. The probability of compliance is fixed at 0.25. The left panel shows the magnitude of the difference between the proposed and IVT methods ($d(\hat\pi_{mr},\hat\pi_{owl})$ and $d(\hat\pi_{ipw},\hat\pi_{owl})$) and the darker color the better performance of our methods). The right panel shows the area on the grid in which our methods outperform the IVT method where the red color indicates $d(\hat\pi_{mr},\hat\pi_{owl})<0$ or $d(\hat\pi_{ipw},\hat\pi_{owl})<0$.}
\end{figure}

\begin{figure}
    \centering
    \includegraphics[width = \textwidth]{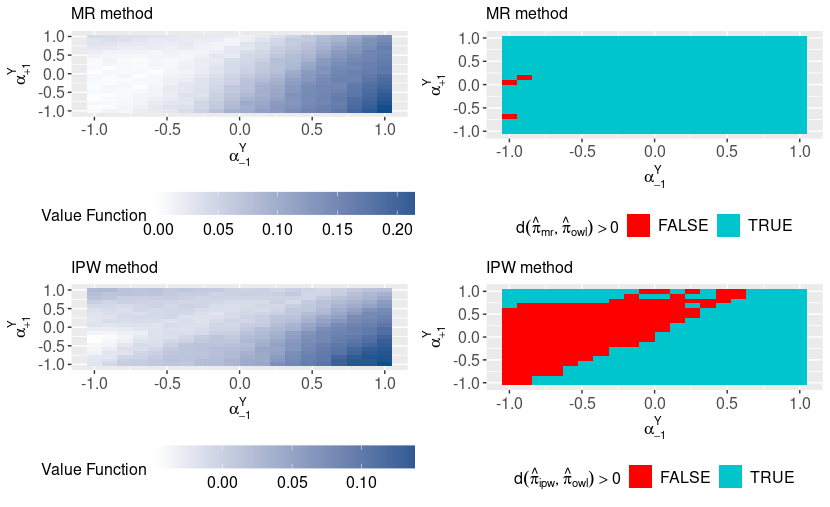}
    \caption{ENGAGE Study: Sensitivity analysis of the difference the proposed methods (MR and IPW) and the OWL method. The sensitivity parameters are $(\alpha^Y_{+1}, \alpha^Y_{-1})$. The probability of compliance is fixed at 0.25. The left panel shows the magnitude of the difference between the proposed and OWL methods $d(\hat\pi_{mr},\hat\pi_{owl})$ and $d(\hat\pi_{ipw},\hat\pi_{owl})$ and the darker color the better performance of our methods). The right panel shows the area on the grid in which our methods outperform the OWL method where the red color indicates $d(\hat\pi_{mr},\hat\pi_{owl})<0$ or $d(\hat\pi_{ipw},\hat\pi_{owl})<0$.}
\end{figure}




\end{document}